\documentclass[11pt,letterpaper]{article}

\usepackage[margin=1in]{geometry}
\usepackage{xcolor}

\usepackage{amsmath,amssymb,amsthm}
\usepackage{aliascnt}
\usepackage{algorithm}
\usepackage{algpseudocode}
\usepackage[hidelinks]{hyperref}
\usepackage[nameinlink,capitalize,noabbrev]{cleveref}

\usepackage{authblk}

\author[1]{Sina Kalantarzadeh}
\author[1]{Kanstantsin Pashkovich}

\affil[1]{Department of Combinatorics and Optimization,
University of Waterloo, Waterloo, Ontario, Canada\\
\texttt{S4kalant@uwaterloo.ca},
\texttt{kpashkovich@uwaterloo.ca}}

\title{Simple and Almost Non-Adaptive \(\frac{1}{2}\)-Approximation for Matroid Prophet Inequalities}

\date{}

\theoremstyle{plain}
\newtheorem{theorem}{Theorem}[section]

\newaliascnt{lemma}{theorem}
\newtheorem{lemma}[lemma]{Lemma}
\aliascntresetthe{lemma}

\newaliascnt{claim}{theorem}

\aliascntresetthe{claim}

\theoremstyle{definition}

\newaliascnt{definition}{theorem}
\newtheorem{definition}[definition]{Definition}
\aliascntresetthe{definition}

\newaliascnt{remark}{theorem}
\newtheorem{remark}[remark]{Remark}
\aliascntresetthe{remark}

\crefname{theorem}{Theorem}{Theorems}
\Crefname{theorem}{Theorem}{Theorems}

\crefname{lemma}{Lemma}{Lemmas}
\Crefname{lemma}{Lemma}{Lemmas}

\crefname{definition}{Definition}{Definitions}
\Crefname{definition}{Definition}{Definitions}

\crefname{claim}{Claim}{Claims}
\Crefname{claim}{Claim}{Claims}

\crefname{remark}{Remark}{Remarks}
\Crefname{remark}{Remark}{Remarks}

\crefname{algorithm}{Algorithm}{Algorithms}
\Crefname{algorithm}{Algorithm}{Algorithms}

\crefname{section}{Section}{Sections}
\Crefname{section}{Section}{Sections}

\begin{document}

\maketitle

\begin{abstract}
Prophet inequalities are a fundamental model for online decision-making under uncertainty. For matroid constraints, Kleinberg and Weinberg gave a tight $\frac{1}{2}$-approximation using adaptive thresholds, while Feldman, Svensson, and Zenklusen obtained a $\frac{1}{4}$-approximation via an online contention resolution scheme (OCRS).

We give the first almost non-adaptive algorithm for general matroid prophet inequalities achieving the optimal $\frac{1}{2}$ guarantee, in fact with respect to the stronger ex-ante relaxation. Starting from an optimal ex-ante solution $x$, we reduce to a Bernoulli instance, replace the original matroid by a stricter direct sum of minors, and assign fixed thresholds to the resulting components. Translating the rule back to the original distributions, an element $e$ can be accepted only when its realized value lies in its top $x_e$-quantile and adding it preserves the corresponding stricter matroid constraint.

We also give a second almost non-adaptive $\frac{1}{2}$-approximation based on a different threshold rule. This formulation extends naturally to intersections of matroids and yields an almost non-adaptive $(q+1)$-approximation for prophet inequalities under the intersection of $q$ arbitrary matroids, again with respect to the ex-ante relaxation. This matches the previously known $(q+1)$ guarantee for intersections of $q$ partition matroids, due to Alon, Pollner, and Weinberg, while extending it to arbitrary matroids.

For the intersection result, each matroid is replaced by a stricter direct sum of minors, and a common surplus vector determines fixed element thresholds across all $q$ constraints. We prove the existence of such a vector using Brouwer's fixed-point theorem and give a polynomial-time procedure to compute it.

\end{abstract}

\newpage

\section{Introduction}
\label{sec:introduction}

Prophet inequalities are a central model for online decision-making under
uncertainty. In the classical prophet inequality, introduced by Krengel and
Sucheston~\cite{KrengelSucheston}, the distributions of a sequence of
independent nonnegative random variables are known in advance, and the
variables are then revealed one by one in an adversarial order to an online
algorithm. Upon seeing the realized value of an item, the algorithm must
immediately and irrevocably decide whether to accept it. In the basic setting,
at most one item may be accepted, which can equivalently be viewed as a
uniform matroid of rank one. The benchmark is a prophet who observes all
realizations in advance and selects the maximum realized value.

A foundational theorem of Krengel and Sucheston shows that the online gambler
can obtain at least half of the prophet's expected value, and this factor is
tight. Samuel-Cahn~\cite{SamuelCahn} gave a particularly simple
threshold-based form of this result: knowing the distributions in advance, one
can compute a single threshold before the process begins, and then accept the
first item whose value exceeds this threshold, regardless of the adversarial
arrival order. Thus, in the single-choice setting, the optimal constant-factor
guarantee can be achieved by a completely non-adaptive rule: all information
used by the threshold rule is fixed before any values are revealed.

A major line of work has generalized prophet inequalities beyond the
single-choice setting. In these variants, the algorithm may accept several
items, subject to a feasibility constraint. Formally, we are given a
downward-closed family $\mathcal F\subseteq 2^U,$ and both the online algorithm and the prophet must select a feasible set in
$\mathcal F$. One of the most important cases is the \emph{matroid prophet
inequality}. Here $\mathcal F=\mathcal I$ is the family of independent sets of
a matroid $M=(U,\mathcal I)$. Each element $e\in U$ has an independent
nonnegative value drawn from a known distribution, and the elements arrive in
an adversarial order. The online algorithm observes the value of each arriving
element and must decide immediately whether to accept it, while maintaining an
independent set with respect to $M$.

Kleinberg and Weinberg~\cite{KleinbergWeinbergJournal} gave a tight
$\frac{1}{2}$-approximation for matroid prophet inequalities. Their result
extends the classical $\frac{1}{2}$ guarantee from the single-choice setting
to arbitrary matroids. Lee and Singla~\cite{LeeSingla2018} later obtained the
same $\frac{1}{2}$ guarantee with respect to the stronger ex-ante relaxation. There is nevertheless an important distinction between the classical
single-choice threshold rule and the Kleinberg--Weinberg algorithm. In the
single-choice setting, one threshold is fixed before the arrival process and
is used until an item is accepted. In the Kleinberg--Weinberg algorithm, by
contrast, the threshold assigned to a future element depends on the current
set of accepted elements and may therefore change whenever this set changes.
Thus, although the thresholds do not react merely to rejected values, they are
not fixed throughout the online process. This state dependence is natural in
the matroid setting, but it motivates the question of whether comparable
guarantees can be obtained using thresholds that are fixed entirely in
advance.

This distinction has motivated the study of \emph{pure non-adaptive} and
\emph{almost non-adaptive} prophet inequalities. Informally, a pure
non-adaptive threshold algorithm fixes all thresholds in advance and continues
to use the original feasibility constraint. An almost non-adaptive, or
constrained non-adaptive, algorithm also fixes its thresholds in advance, but
is allowed to impose an additional, more restrictive feasibility constraint
before the arrival process begins. We formally define these notions, and
separately distinguish whether the arrival order is known in advance, in
Section~\ref{subsec:adaptivity-definitions}.

Fixed thresholds are particularly natural in applications to posted-price
mechanisms: fixed thresholds correspond to fixed posted prices, whereas
state-dependent thresholds correspond to prices that may change after earlier
purchases. For multi-dimensional or constrained-additive buyers, such
state-dependent pricing rules may fail to be truthful, while fixed prices are
substantially easier to interpret and implement.

One way to retain fixed thresholds while obtaining meaningful guarantees is to
strengthen the feasibility constraint before the online process begins.
Chawla et al.~\cite{Chawla2010} introduced such a
``constrained non-adaptive'' approach: their algorithms compute fixed
thresholds in advance, but accept an element only if it passes its threshold
and remains feasible with respect to an additional, more restrictive
constraint. Using this approach, they obtained a
$\frac{1}{3}$-approximation for graphic matroids.

This perspective was later developed systematically by Feldman, Svensson, and
Zenklusen~\cite{FeldmanSvenssonZenklusen} through the framework of online
contention resolution schemes (OCRSs). Their greedy OCRS for matroids gives a
$\frac{1}{4}$-selectable scheme, which yields a
$\frac{1}{4}$-approximation for matroid prophet inequalities with respect to
the ex-ante relaxation. This approach is again almost non-adaptive: the
thresholds are fixed before the arrival process, while an additional
feasibility restriction derived from the fractional solution is used during
the online phase. The OCRS framework is considerably more general and applies
to many constraint families, including intersections of matroids, matchings,
and knapsack constraints. For matroid prophet inequalities, however, its
$\frac{1}{4}$ guarantee loses a factor of $2$ compared with the adaptive
$\frac{1}{2}$-approximation of Kleinberg and Weinberg.

Feldman et al.~\cite{FeldmanSvenssonZenklusen} also show a fundamental
limitation of pure non-adaptive threshold rules for general matroids. They
construct a family of gammoids on $n$ elements for which every pure
non-adaptive threshold algorithm can achieve at most an
\[
O\left(\frac{\log\log n}{\log n}\right)
\]
fraction of the prophet's expected value. Thus, unlike in the single-choice
setting, fixed thresholds together with the original matroid constraint cannot
achieve a constant-factor approximation for general matroids. This
impossibility result is one of the main reasons to distinguish carefully
between pure non-adaptive algorithms and algorithms that impose additional
feasibility constraints in advance.

Chawla et al.~\cite{ChawlaGraphic} subsequently obtained the first purely
non-adaptive constant-factor prophet inequality for graphic matroids. Their
algorithm achieves a $\frac{1}{32}$-approximation. The main difficulty is
that, in a graphic matroid, an edge may be blocked by many other edges, and
fixed thresholds alone do not control the probability that an arriving edge
remains feasible. Their solution exploits the graph structure: by orienting
the edges and considering a random cut, they ensure that each edge is
simultaneously considered and unblocked with constant probability.

Pashkovich and Sayutina~\cite{PashkovichSayutina} further developed the theory
of pure non-adaptive prophet inequalities for restricted classes of matroids.
They obtained constant-factor guarantees for several natural matroid families,
including cographic matroids, $k$-column sparse matroids, regular matroids,
laminar matroids, and truncated partition matroids. In contrast to almost
non-adaptive approaches, these algorithms retain the original matroid
constraint throughout the online process and do not impose an additional
feasibility restriction.

Jiang~\cite{jiang2026costnonadaptivitymatroidprophet} further investigated the
cost of non-adaptivity for laminar and graphic matroids. For laminar matroids,
he gave a pure non-adaptive algorithm achieving a
$\frac{1}{6.311}$-approximation, while showing that no pure non-adaptive
algorithm can achieve better than a
$\frac{1}{2.217}$-approximation. For graphic matroids, he obtained a
$\frac{1}{8}$-approximation and proved that no pure non-adaptive algorithm can
achieve better than a $\frac{1}{3}$-approximation. These results further
narrow the gap between the known achievable and impossible guarantees for
natural classes of matroids.

From this perspective, understanding the precise power of pure non-adaptive
algorithms across different matroid classes remains a central open question,
as does understanding how much additional power is gained by allowing fixed
feasibility restrictions in almost non-adaptive algorithms.

Another natural extension of matroid prophet inequalities is obtained by
requiring the accepted set to be independent in several matroids
simultaneously. For the intersection of $q$ matroids, Kleinberg and
Weinberg~\cite{KleinbergWeinbergJournal} obtained a
$\frac{1}{4q-2}$-approximation. Feldman, Svensson, and
Zenklusen~\cite{FeldmanSvenssonZenklusen} later improved this guarantee to $\frac{1}{e(q+1)}$ using online contention resolution schemes. For the special case of an
intersection of $q$ partition matroids, Correa, Cristi, Fielbaum, Pollner, and
Weinberg~\cite{10.1007/s10107-023-02027-2} obtained a
$\frac{1}{q+1}$-approximation using non-adaptive thresholds. Their
construction assigns auxiliary prices to the different partition-matroid
constraints and defines the threshold of an element by summing the
corresponding prices. The existence of prices satisfying the required balance
conditions is established through Brouwer's fixed-point theorem.

This raises the natural question of whether the $\frac{1}{q+1}$ guarantee can
be extended from partition matroids to intersections of arbitrary matroids
while retaining an almost non-adaptive structure.

\subsection{Our result and techniques}
\label{subsec:our-results}

We give the first almost non-adaptive algorithm for general matroid prophet
inequalities achieving the optimal $\frac{1}{2}$ guarantee, in fact with
respect to the stronger ex-ante relaxation.

Our algorithm starts from the ex-ante relaxation. Given a matroid and value
distributions for its elements, we first solve the ex-ante relaxation and
obtain an optimal solution $x$ in the matroid polytope. As is standard, it
suffices to consider the corresponding Bernoulli instance $(M,x,v)$: each
element $e$ is active with probability $x_e$, and if active has deterministic
value $v_e>0$; otherwise its value is zero. The quantity
\[
\sum_{e\in U}x_ev_e
\]
is the value of the ex-ante relaxation and hence upper bounds the expected
value of the prophet. Our main structural step is to use the vector $x$ to iteratively decompose the
original matroid into \emph{nice} pieces. More precisely, we partition the
ground set into
\[
U_1,\ldots,U_\ell
\]
and construct matroids
\[
M'_1,\ldots,M'_\ell
\]
on these parts such that every set independent in
\[
M'_1\oplus\cdots\oplus M'_\ell
\]
is independent in the original matroid. If
\[
A_{j-1}:=U_1\cup\cdots\cup U_{j-1},
\]
then
\[
M'_j=(M/A_{j-1})|U_j.
\]
The matroids are chosen so that every Bernoulli instance
\[
(M'_j,x|_{U_j},v|_{U_j})
\]
is \emph{nice}. Each component is associated with a fixed threshold $T_j$,
and niceness gives the local guarantee
\[
\mathbb E[v(I\cap U_j)]
\geq
r_{M'_j}(U_j)T_j.
\]

An additional structural property of nice instances is that every positive
Bernoulli value in a component lies above its component threshold.
Consequently, once the decomposition is computed, the Bernoulli online rule
only needs to check whether an element is active and whether accepting it
preserves independence in the corresponding $M'_j$. The thresholds remain
essential in defining and analyzing the decomposition, but no threshold needs
to be recomputed or updated during the arrival process.

Translating the Bernoulli reduction back to the original distributions, an
element $e$ is active when its realized value lies in its fixed top
$x_e$-quantile, with the usual randomization at a cutoff value if necessary.
Thus, the online algorithm on the original distributions uses fixed,
element-specific quantile cutoffs together with fixed stricter feasibility
constraints. The extraction procedure guarantees
\[
\sum_{j=1}^{\ell}r_{M'_j}(U_j)T_j
\geq
\frac{1}{2}\sum_{e\in U}x_ev_e.
\]
Therefore,
\[
\mathbb E[v(I)]
\geq
\frac{1}{2}\sum_{e\in U}x_ev_e.
\]
This gives the optimal $\frac{1}{2}$ guarantee with respect to the ex-ante
relaxation.

Importantly, the preprocessing does not require knowledge of the arrival
order. The decomposition, thresholds, and additional feasibility constraints
depend only on the matroid and the ex-ante Bernoulli instance. Consequently,
the same precomputed algorithm applies without modification to every
adversarial arrival order.

We also give a second almost non-adaptive
$\frac{1}{2}$-approximation for a single matroid, using a different threshold
rule on the Bernoulli instance. As before, this yields fixed component
thresholds together with a stricter direct-sum matroid. This alternative
formulation gives the same $\frac{1}{2}$ guarantee, while having the
additional advantage that it extends naturally to intersections of matroids.

For a loopless Bernoulli instance $(M,x,v)$ and a nonempty set $S\subseteq U$,
we define its \emph{surplus threshold} $T'(S)$ as the unique nonnegative value
satisfying
\[
T'(S)r_M(S)
=
\sum_{e\in S}x_e\bigl(v_e-T'(S)\bigr)^+.
\]
We call an instance \emph{surplus-nice} if
\[
T'(S)\leq T'(U)
\]
for every nonempty set $S\subseteq U$. A surplus-nice instance can again be
handled using a single fixed threshold, yielding expected value at least
\[
r_M(U)T'(U).
\]

We then use an analogous extract-and-contract procedure, repeatedly extracting
a set of maximum surplus threshold. A maximum-surplus-threshold set can always
be chosen to be a flat. Consequently, contracting an extracted piece does not
create loops, and all residual matroids remain loopless. The resulting pieces
are surplus-nice, the corresponding thresholds are non-increasing, and the
extracted direct sum is more restrictive than the original matroid. Combining
these properties again gives
\[
\mathbb E[\mathrm{ALG}]
\geq
\frac{1}{2}\sum_{e\in U}x_ev_e.
\]

The surplus-threshold viewpoint extends to the intersection of several
matroids. Let
\[
M_1,\ldots,M_q
\]
be matroids on a common ground set $U$, and let
\[
x\in\bigcap_{i=1}^qP_{M_i}.
\]
We obtain an almost non-adaptive
$\frac{1}{q+1}$-approximation with respect to the ex-ante relaxation.

The construction follows the same extract-and-contract phenomenon as the
single-matroid surplus-threshold algorithm. The main difference is that there
are now $q$ matroids and therefore $q$ decompositions that must be coupled. To
achieve this, we introduce a common nonnegative surplus vector
\[
y\in\mathbb R_+^U.
\]
For each matroid $M_i$, we repeatedly extract a maximum-density set with
respect to $y$, restrict to this set, and contract it. Each extracted piece can
again be chosen to be a flat in the corresponding residual matroid. The
resulting pieces form a stricter direct-sum matroid whose independent sets are
independent in $M_i$.

Each extracted piece of $M_i$ is assigned a price equal to its $y$-density.
If an element $e$ belongs to a piece with price $p_e^i$, then $p_e^i$ is the
threshold contribution of matroid $M_i$ to element $e$. We define the final
threshold by
\[
T_e:=\sum_{i=1}^qp_e^i.
\]
The surplus vector is chosen so that
\[
y_e=x_e(v_e-T_e)^+
\]
for every $e\in U$. Hence the common vector $y$ determines the decompositions
and their prices, while the prices determine the thresholds, which in turn
determine $y$. We establish the existence of such a vector using Brouwer's
fixed-point theorem. In the appendix, we further show that a suitable vector
$y$ can be computed in polynomial time.

The online algorithm maintains feasibility in all $q$ fixed stricter
direct-sum matroids and uses the fixed thresholds $T_e$. All thresholds and
all additional feasibility constraints are determined before the arrival
process begins, and none of them requires knowledge of the arrival
permutation. Hence the algorithm is both almost non-adaptive and
order-oblivious. The analysis shows
\[
\mathbb E[\mathrm{ALG}]
\geq
y(U)
\]
and
\[
\sum_{e\in U}x_ev_e
\leq
(q+1)y(U).
\]
Consequently,
\[
\mathbb E[\mathrm{ALG}]
\geq
\frac{1}{q+1}\sum_{e\in U}x_ev_e.
\]

It is natural to ask how close the linear dependence on $q$ is to being
optimal. Given a prime number $p$, Kleinberg and
Weinberg~\cite{KleinbergWeinbergJournal} gave a downward-closed feasibility
system for which the ratio between the prophet and the optimal online
algorithm is $\Theta(p)$, while the same feasibility system can be represented
as the intersection of $p^2$ partition matroids. This gives an
$\Omega(\sqrt q)$ lower bound on the approximation factor for intersections
of $q$ matroids.

Saxena, Velusamy, and
Weinberg~\cite{saxena_et_al:LIPIcs.ITCS.2023.95} later strengthened this
lower bound by showing that the same underlying hard feasibility system can be
represented as the intersection of asymptotically fewer partition matroids.
Their result implies that no online algorithm can guarantee more than
\[
q^{-1/2-\Omega(1/\log\log q)}
\]
of the prophet's expected value for all intersections of $q$ matroids, even
when all the matroids are partition matroids. Thus, together with our
$\frac{1}{q+1}$ guarantee, a substantial gap remains. An interesting open
question is whether the Kleinberg--Weinberg hard feasibility system can be
represented as the intersection of still fewer matroids.

Table~\ref{tab:adaptivity-landscape} summarizes the known achievable and
impossible guarantees most relevant to this work. The three algorithmic
columns distinguish adaptive thresholds using the original feasibility
constraint, pure non-adaptive thresholds using the original feasibility
constraint, and almost non-adaptive thresholds with an additional fixed
feasibility restriction. All guarantees are written as fractions of the
prophet benchmark. When an achievable guarantee holds with respect to the
stronger ex-ante relaxation, we indicate this explicitly.

\begin{table}[ht]
\centering
\scriptsize
\renewcommand{\arraystretch}{1.5}
\begin{tabular}{
|p{0.14\textwidth}|
p{0.25\textwidth}|
p{0.28\textwidth}|
p{0.28\textwidth}|}
\hline
\textbf{Constraint}
&
\textbf{Adaptive thresholds}
&
\textbf{Pure non-adaptive}
&
\textbf{Almost non-adaptive}
\\
\hline

General matroid
&
\textbf{Achievable:}
$\frac{1}{2}$~\cite{KleinbergWeinbergJournal};
$\frac{1}{2}$ ex-ante~\cite{LeeSingla2018}.
\newline
\textbf{Impossibility:}
no better than $\frac{1}{2}$.
&
\textbf{Achievable:}
no nontrivial general guarantee is known.
\newline
\textbf{Impossibility:}
at most
$O\!\left(\frac{\log\log n}{\log n}\right)$
for some gammoids~\cite{FeldmanSvenssonZenklusen}.
&
\textbf{Achievable:}
$\frac{1}{2}$ ex-ante (this work);
previously $\frac{1}{4}$ ex-ante
~\cite{FeldmanSvenssonZenklusen}.
\newline
\textbf{Impossibility:}
no better than $\frac{1}{2}$.
\\
\hline

Graphic matroid
&
\textbf{Achievable:}
$\frac{1}{2}$~\cite{KleinbergWeinbergJournal}.
\newline
\textbf{Impossibility:}
no better than $\frac{1}{2}$.
&
\textbf{Achievable:}
$\frac{1}{8}$
~\cite{jiang2026costnonadaptivitymatroidprophet}.
\newline
\textbf{Impossibility:}
no better than $\frac{1}{3}$
~\cite{jiang2026costnonadaptivitymatroidprophet}.
&
\textbf{Achievable:}
$\frac{1}{2}$ ex-ante (this work);
previously $\frac{1}{3}$~\cite{Chawla2010}
and $\frac{1}{4}$ ex-ante
~\cite{FeldmanSvenssonZenklusen}.
\newline
\textbf{Impossibility:}
no better than $\frac{1}{2}$.
\\
\hline

Laminar matroid
&
\textbf{Achievable:}
$\frac{1}{2}$~\cite{KleinbergWeinbergJournal}.
\newline
\textbf{Impossibility:}
no better than $\frac{1}{2}$.
&
\textbf{Achievable:}
$\frac{1}{6.311}$
~\cite{jiang2026costnonadaptivitymatroidprophet}.
\newline
\textbf{Impossibility:}
no better than $\frac{1}{2.217}$
~\cite{jiang2026costnonadaptivitymatroidprophet}.
&
\textbf{Achievable:}
$\frac{1}{2}$ ex-ante (this work);
previously $\frac{1}{4}$ ex-ante
~\cite{FeldmanSvenssonZenklusen}.
\newline
\textbf{Impossibility:}
no better than $\frac{1}{2}$.
\\
\hline

Intersection of $q$ matroids
&
\textbf{Achievable:}
$\frac{1}{4q-2}$
~\cite{KleinbergWeinbergJournal}.
\newline
\textbf{Impossibility:}
at most
$q^{-1/2-\Omega(1/\log\log q)}$
~\cite{saxena_et_al:LIPIcs.ITCS.2023.95}.
&
\textbf{Achievable:}
$\frac{1}{q+1}$ for intersections of $q$ partition matroids
~\cite{10.1007/s10107-023-02027-2};
no corresponding general guarantee is known for arbitrary matroids.
\newline
\textbf{Impossibility:}
at most
$q^{-1/2-\Omega(1/\log\log q)}$,
even for partition matroids
~\cite{saxena_et_al:LIPIcs.ITCS.2023.95}.
&
\textbf{Achievable:}
$\frac{1}{q+1}$ ex-ante (this work);
previously
$\frac{1}{e(q+1)}$
~\cite{FeldmanSvenssonZenklusen}.
\newline
\textbf{Impossibility:}
at most
$q^{-1/2-\Omega(1/\log\log q)}$
~\cite{saxena_et_al:LIPIcs.ITCS.2023.95}.
\\
\hline
\end{tabular}

\caption{Selected achievable guarantees and impossibility results for matroid
prophet inequalities under different notions of adaptivity. All quantities are
written as fractions of the prophet benchmark. An impossibility entry of $c$
means that there are instances on which no algorithm in the corresponding
class can guarantee more than a $c$ fraction of the prophet value. Guarantees
marked ``ex-ante'' hold with respect to the stronger ex-ante relaxation.}
\label{tab:adaptivity-landscape}
\end{table}

\subsection{Comparison with related work and computational aspects}
\label{subsec:comparison-related-work}

Our first algorithm achieves the optimal $\frac{1}{2}$ guarantee for a
general matroid, but its information structure differs from the known
$\frac{1}{2}$ algorithms. In the Kleinberg--Weinberg algorithm, the threshold
used for a future element may change when the set of accepted elements
changes. By contrast, all thresholds and all additional feasibility
constraints in our algorithm are determined before any values are observed
and never change during the online phase.

Lee and Singla also obtain the optimal $\frac{1}{2}$ guarantee against the
ex-ante relaxation under adversarial arrival. Their adversarial-order OCRS
formulation, however, assumes that the arrival permutation is known to the
algorithm in advance. Our preprocessing does not require this information:
the same decomposition, thresholds, and online rule apply to every
adversarial arrival order.

Our algorithm is not purely non-adaptive, because it replaces the original
feasibility constraint by a more restrictive one before the online phase.
More precisely, it constructs matroids
\[
M'_1,\ldots,M'_\ell
\]
such that every set independent in
\[
M'_1\oplus\cdots\oplus M'_\ell
\]
is independent in the original matroid. Thus, our result is almost
non-adaptive in the sense formalized in
Section~\ref{subsec:adaptivity-definitions}: the thresholds and the additional
feasibility restriction are fixed in advance. The impossibility result of
Feldman et al. for pure non-adaptive thresholds shows that allowing such an
additional feasibility restriction fundamentally changes what is possible for
general matroids.

Our matroid-intersection result should also be compared with the previous
bounds for intersections of matroids. Kleinberg and Weinberg obtained a
$\frac{1}{4q-2}$-approximation using adaptive thresholds, and Feldman,
Svensson, and Zenklusen obtained a $\frac{1}{e(q+1)}$ guarantee using OCRSs. For the special case of intersections of $q$ partition
matroids, Correa, Cristi, Fielbaum, Pollner, and
Weinberg~\cite{10.1007/s10107-023-02027-2} obtained a
$\frac{1}{q+1}$ guarantee using non-adaptive thresholds. Our result extends
the same $\frac{1}{q+1}$ approximation factor to intersections of $q$
arbitrary matroids.

There is also a conceptual similarity between our construction and the
non-adaptive partition-matroid-intersection algorithm of Correa, Cristi,
Fielbaum, Pollner, and Weinberg. In both cases, the threshold of an element is
obtained by summing contributions from the individual matroid constraints, and
a fixed-point argument is used to choose these contributions consistently. In
our setting, however, the contribution of each matroid is generated by an
extract-and-contract density decomposition with respect to a common surplus
vector. This allows the construction to work for arbitrary matroids rather
than only partition matroids. The price of this generality is that we impose
additional feasibility constraints in advance, and hence our algorithm is
almost non-adaptive rather than purely non-adaptive.

All of our algorithms can be implemented in polynomial time. For the
single-matroid algorithms, once the ex-ante solution is computed, the
decompositions and thresholds can be obtained in polynomial time using an
independence oracle. For the matroid-intersection algorithm, the additional
ingredient is the coupled surplus vector $y$. Although the main body proves
the existence of such a vector using Brouwer's fixed-point theorem, the
appendix gives a polynomial-time method for computing a suitable vector via a
convex optimization formulation and subgradient methods.

\subsection{Organization of the paper}
\label{subsec:organization}

\Cref{sec:preliminaries} introduces the online model, the notions of adaptivity and arrival
order used throughout the paper, the ex-ante relaxation and Bernoulli
reduction, and the matroid preliminaries needed later. \Cref{sec:first-algorithm}
presents our first almost non-adaptive $\frac{1}{2}$-approximation for a
single matroid. \Cref{sec:surplus-algorithm} develops the alternative
surplus-threshold formulation and proves a second almost non-adaptive
$\frac{1}{2}$-approximation. \Cref{sec:q-intersection} extends this framework
to intersections of $q$ arbitrary matroids and proves the $\frac{1}{q+1}$
guarantee. The appendix contains the proof of the ex-ante reduction and the
polynomial-time implementations of the extraction, density decomposition, and
coupled fixed-point procedures.

\section{Preliminaries}
\label{sec:preliminaries}

\subsection{Online Model, Arrival Order, and Notions of Adaptivity}
\label{subsec:adaptivity-definitions}

We first formalize the online model and the notions of adaptivity used
throughout the paper. Let $U$ be a finite ground set and let $\mathcal F\subseteq 2^U$ be a downward-closed feasibility family. Each element $e\in U$ has an
independent nonnegative random value $X_e$ drawn from a known distribution.

\begin{definition}[Adversarial arrival order]
An \emph{adversarial arrival order} is an arbitrary permutation
\[
\pi=(e_1,\ldots,e_n)
\]
of $U$ chosen before the random values are realized. The adversary may choose
$\pi$ using the feasibility constraint and the known value distributions, but
not using the realizations of the random variables or the internal randomness
subsequently used by the online algorithm. The performance guarantee of an
algorithm is required to hold for every such permutation.
\end{definition}

We separately distinguish whether the permutation is revealed to the
algorithm before the online process begins.

\begin{definition}[Order-aware and order-oblivious algorithms]
An online algorithm is \emph{order-aware} if its preprocessing is allowed to
depend on the adversarial arrival permutation $\pi$. It is
\emph{order-oblivious} if its preprocessing and its online decision rule do
not require advance knowledge of $\pi$.
\end{definition}

An order-oblivious algorithm of course observes the identity of an element
when it arrives; the definition only means that the future arrival permutation
is not known in advance.

The amount of information available about the arrival order is separate from
the adaptivity of the online decision rule.

\begin{definition}[Adaptive online algorithm]
An online algorithm is \emph{adaptive} if its decision rule for an arriving
element may depend on the history observed so far, including the identities
and realized values of previously arrived elements and the set of elements
accepted so far. In particular, an adaptive threshold algorithm may change the
threshold assigned to a future element as a function of this history.
\end{definition}

When discussing adaptive threshold algorithms in
Table~\ref{tab:adaptivity-landscape}, we use the original feasibility family
$\mathcal F$ throughout the online process. Thus, the distinction from the two
notions below concerns both whether thresholds are fixed in advance and
whether an additional feasibility restriction is imposed.

\begin{definition}[Pure non-adaptive threshold algorithm]
A \emph{pure non-adaptive threshold algorithm} fixes, before the arrival
process begins, a threshold $\tau_e$ for every element $e\in U$. When $e$
arrives, the algorithm accepts it if
\[
X_e\geq\tau_e
\]
and
\[
I\cup\{e\}\in\mathcal F,
\]
where $I$ is the set accepted so far. The thresholds remain unchanged
throughout the online process, and feasibility is always tested with respect
to the original family $\mathcal F$.
\end{definition}

\begin{definition}[Almost non-adaptive threshold algorithm]
An \emph{almost non-adaptive}, or \emph{constrained non-adaptive}, threshold
algorithm fixes before the arrival process begins both thresholds
\[
\{\tau_e\}_{e\in U}
\]
and a downward-closed feasibility family
\[
\mathcal F'\subseteq\mathcal F.
\]
When $e$ arrives, the algorithm accepts it if
\[
X_e\geq\tau_e
\]
and
\[
I\cup\{e\}\in\mathcal F'.
\]
Thus, both the thresholds and the additional feasibility restriction are fixed
before the online process begins.
\end{definition}

As usual, fixed randomization may be used at a threshold equality. The
preprocessing in the non-adaptive models may depend on the known
distributions, the feasibility constraint, and randomness sampled before the
online process. Whether it may additionally depend on the arrival permutation
is captured separately by the notions of order-awareness and
order-obliviousness.

All algorithms developed in this paper are order-oblivious: their
preprocessing does not use the arrival permutation, and their guarantees hold
for every adversarial arrival order as defined above.

We express approximation guarantees as fractions of the benchmark.

\begin{definition}[Approximation guarantee]
For $c\in[0,1]$, an online algorithm is a \emph{$c$-approximation} if, for
every instance and every adversarial arrival order,
\[
\mathbb E[\mathrm{ALG}]
\geq
c\,\mathbb E[\mathrm{PROPHET}],
\]
where $\mathrm{PROPHET}$ denotes the value of the optimal feasible solution
after all values are revealed.
\end{definition}

When the benchmark is the ex-ante relaxation rather than the prophet
benchmark, we state this explicitly.

\subsection{The Matroid Prophet Inequality Problem}
\label{subsec:matroid-prophet}

We next specialize the general online model to matroid constraints.

\begin{definition}[Matroid prophet inequality]
An instance of the matroid prophet inequality problem consists of a matroid
\[
M=(U,\mathcal I)
\]
on a finite ground set $U$, together with independent nonnegative random
variables $\{X_e\}_{e\in U}$. The elements arrive in an adversarial order.
When an element $e$ arrives, the online algorithm observes the realized value
of $X_e$ and must immediately and irrevocably decide whether to accept $e$.
The set of accepted elements must remain independent in $M$.

The benchmark is the expected value of a prophet who observes all
realizations in advance and selects a maximum-value independent set:
\[
\mathbb E\left[
\max_{I\in\mathcal I}
\sum_{e\in I}X_e
\right].
\]
\end{definition}

Throughout the paper, we assume that each matroid is given by an independence
oracle. Given a set $S\subseteq U$, the oracle determines whether
$S\in\mathcal I$.

We will also consider feasibility given by the intersection of several
matroids.

\begin{definition}[Matroid-intersection prophet inequality]
An instance of the matroid-intersection prophet inequality problem consists of
matroids
\[
M_i=(U,\mathcal I_i),
\qquad i=1,\ldots,q,
\]
on a common finite ground set $U$, together with independent nonnegative
random variables $\{X_e\}_{e\in U}$. The elements arrive in an adversarial
order, and upon observing the realized value of an arriving element, the
online algorithm must immediately and irrevocably decide whether to accept it.
The accepted set must remain independent in every $M_i$. The benchmark is
\[
\mathbb E\left[
\max_{I\in\bigcap_{i=1}^q\mathcal I_i}
\sum_{e\in I}X_e
\right].
\]
\end{definition}

\subsection{The Matroid Polytope and the Ex-Ante Relaxation}
\label{subsec:ex-ante}

For a matroid $M=(U,\mathcal I)$, the matroid polytope is
\[
P_M
:=
\operatorname{conv}\{\mathbf 1_I:I\in\mathcal I\}
\subseteq\mathbb R^U.
\]
Equivalently,
\[
P_M
=
\left\{
x\in\mathbb R_{\geq0}^U:
\sum_{e\in S}x_e\leq r_M(S)
\text{ for every }S\subseteq U
\right\},
\]
where $r_M$ denotes the rank function of $M$.

The ex-ante relaxation is obtained by relaxing the prophet's choice to a
vector of marginal selection probabilities. For each element $e$, define
$R_e(z)$ to be the maximum expected contribution obtainable from $e$ by
selecting it with probability at most $z$. Equivalently,
\[
R_e(z)
=
\max\left\{
\mathbb E[X_ez_e(X_e)]:
0\leq z_e(X_e)\leq1,\;
\mathbb E[z_e(X_e)]\leq z
\right\},
\]
where $z_e(X_e)$ is a randomized selection rule depending only on the realized
value of $X_e$.

An optimal rule selects the largest realized values of $X_e$, with possible
randomization at one cutoff value. The ex-ante relaxation for a single matroid
is
\[
\max\left\{
\sum_{e\in U}R_e(x_e):
x\in P_M
\right\}.
\]
This is an upper bound on the prophet benchmark. Indeed, the vector of
marginal selection probabilities of the prophet belongs to $P_M$, and for
each element the function $R_e$ gives the largest expected contribution
achievable with the corresponding marginal probability.

For the intersection of matroids
\[
M_1,\ldots,M_q,
\]
we use
\[
P_{\cap}
:=
\bigcap_{i=1}^qP_{M_i}.
\]
If a random set is independent in every $M_i$, then its vector of marginal
selection probabilities belongs to $P_{M_i}$ for every $i$, and hence belongs
to $P_{\cap}$. Therefore,
\[
\max\left\{
\sum_{e\in U}R_e(x_e):
x\in\bigcap_{i=1}^qP_{M_i}
\right\}
\]
is an upper bound on the prophet benchmark for the intersection of the
$q$ matroids.

\subsection{Bernoulli Instances and the Ex-Ante Reduction}
\label{subsec:bernoulli-reduction}

We use the standard ex-ante reduction to Bernoulli
instances~\cite{ChawlaGraphic,FeldmanSvenssonZenklusen}. A Bernoulli instance
is specified by a pair $(x,v)$, where
\[
x\in[0,1]^U,
\qquad
v\in\mathbb R_{\geq0}^U.
\]
Element $e$ has value
\[
X_e=
\begin{cases}
v_e,&\text{with probability }x_e,\\
0,&\text{with probability }1-x_e.
\end{cases}
\]
We say that $e$ is \emph{active} if $X_e=v_e$. Thus, the elements are
independently active, and an active element has deterministic value $v_e$.

The feasibility constraint will be clear from context. For a single matroid,
we write the Bernoulli instance as
\[
(M,x,v),
\]
and for an intersection of matroids $M_1,\ldots,M_q$, we write
\[
(M_1,\ldots,M_q,x,v).
\]

In general, a Bernoulli instance need not satisfy any polyhedral constraint on
$x$. However, when it arises from the ex-ante relaxation, $x$ belongs to the
corresponding relaxation. For a single matroid,
\[
x\in P_M,
\]
and for the intersection of $q$ matroids,
\[
x\in\bigcap_{i=1}^qP_{M_i}.
\]

We now state the ex-ante reduction in a form that applies to both settings.
The proof is deferred to Appendix~\ref{app:exante-reduction}.

\begin{lemma}[Ex-ante reduction to Bernoulli instances]
\label{lem:ex-ante-bernoulli}
Let $\mathcal P\subseteq[0,1]^U$ be a relaxation such that the vector of
marginal selection probabilities of every feasible random set belongs to
$\mathcal P$. Suppose that for some $c\in[0,1]$, every Bernoulli instance
$(x,v)$ with $x\in\mathcal P$ admits an online algorithm with expected value at
least
\[
c\sum_{e\in U}x_ev_e.
\]
Then every corresponding prophet inequality instance admits an online
algorithm with expected value at least $c$ times the ex-ante relaxation value
\[
\max\left\{
\sum_{e\in U}R_e(x_e):
x\in\mathcal P
\right\}.
\]
In particular, the algorithm obtains at least a $c$ fraction of the prophet
benchmark. Moreover, the reduction preserves pure non-adaptivity, almost
non-adaptivity, and order-obliviousness.
\end{lemma}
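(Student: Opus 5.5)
Take an optimal solution $x^*$ of the ex-ante relaxation $\max\{\sum_{e}R_e(x_e):x\in\mathcal P\}$. For each element, fix an optimal selection rule $z_e$ attaining $R_e(x^*_e)$. Without loss of generality this rule is a top-quantile rule: it selects $X_e$ when $X_e>\theta_e$, and with a fixed probability $\rho_e$ when $X_e=\theta_e$, so that $\mathbb E[z_e(X_e)]=x^*_e$ exactly. If the optimal rule would select with probability less than $x^*_e$, which can only happen when lower values are worthless, we may pad it with zero-value selections, or simply shrink $x^*_e$ (which keeps $x$ in $\mathcal P$ if $\mathcal P$ is downward closed; otherwise padding keeps the marginal fixed). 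Define the Bernoulli instance $(x^*,v)$ with
\[
v_e:=\frac{R_e(x^*_e)}{x^*_e}=\mathbb E\bigl[X_e\mid e\text{ active}\bigr]
\]
for $x^*_e>0$, and discard elements with $x^*_e=0$. By construction $\sum_e x^*_ev_e$ equals the ex-ante value.

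The original instance is simulated by the Bernoulli algorithm as follows. When $e$ arrives, draw independent coins and declare $e$ \emph{active} iff $z_e(X_e)$ fires, i.e.\ $X_e>\theta_e$, or $X_e=\theta_e$ and a $\rho_e$-coin succeeds. Activity events are independent across elements with probabilities $x^*_e$, so the activity pattern has exactly the law of the Bernoulli instance. Feed the Bernoulli algorithm value $v_e$ if $e$ is active and $0$ otherwise, and accept $e$ exactly when the Bernoulli algorithm does, additionally requiring activity. Feasibility is inherited because the accepted set is the one produced by the Bernoulli algorithm.

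For the value, the Bernoulli algorithm's decision about $e$ is a function of the activity indicators of earlier elements, $e$'s own activity, and its internal randomness. Conditioned on $e$ being active, the realized $X_e$ is independent of everything else, and its conditional mean is $v_e$. So
\[
\mathbb E[X_e\mathbf 1\{e\text{ accepted}\}]=v_e\Pr[e\text{ accepted}],
\]
and summing gives at least $c\sum_e x^*_ev_e=c\cdot\text{ex-ante}$. Since $\mathcal P$ contains the prophet's marginals and each $R_e$ is the best per-element contribution, the ex-ante value is at least the prophet benchmark.

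Structure preservation is then checked directly. If the Bernoulli algorithm accepts active $e$ whenever $v_e\ge\tau_e$ and $I\cup\{e\}\in\mathcal F'$, then the original algorithm uses the fixed rule: accept when $X_e$ clears the fixed quantile cutoff $(\theta_e,\rho_e)$, $v_e\ge\tau_e$ (a fixed predicate), and $I\cup\{e\}\in\mathcal F'$. This is again a fixed threshold, namely $\theta_e$ with tie randomization, or $+\infty$ if $v_e<\tau_e$, together with the same $\mathcal F'$. The preprocessing ($x^*$, $\theta$, $\rho$, $v$) never uses $\pi$, so order-obliviousness is preserved.

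The main obstacle is the conditional-independence step. It needs the Bernoulli algorithm's acceptance of $e$ to be independent of the value of $X_e$ given activity, which holds because the algorithm only sees the activity bit. Handling atoms at the cutoff with fixed pre-sampled randomization, allowed by the paper's definitions, is the technical point to state carefully.
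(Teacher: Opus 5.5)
Your proposal is correct and follows essentially the same route as the paper's proof. Both take an optimal ex-ante solution, declare $e$ active on its top-$x_e$-quantile event (with tie randomization) and set $v_e=\mathbb E[X_e\mid e\text{ active}]$, run the Bernoulli algorithm on the activity bits, and use that acceptance is independent of $X_e$ given activity to get $\mathbb E[X_e\mathbf 1\{e\text{ accepted}\}]=v_e\Pr[e\text{ accepted}]$; your explicit conversion of the predicate $v_e\ge\tau_e$ into a fixed quantile cutoff (or threshold $+\infty$) simply spells out the paper's one-line remark on preserving non-adaptivity.
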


For a single matroid, we apply
\Cref{lem:ex-ante-bernoulli} with
\[
\mathcal P=P_M.
\]
For the intersection of $q$ matroids, we apply it with
\[
\mathcal P=\bigcap_{i=1}^qP_{M_i}.
\]

\subsection{Matroid Operations}
\label{subsec:matroid-operations}

We use standard operations on matroids. Let $M=(U,\mathcal I)$ be a matroid with rank function $r_M$. For a set $A\subseteq U$, the \emph{restriction} of $M$ to $A$, denoted
$M|A$, is the matroid on ground set $A$ whose independent sets are
\[
\mathcal I(M|A)
:=
\{I\subseteq A:I\in\mathcal I\}.
\]
Its rank function satisfies
\[
r_{M|A}(S)=r_M(S)
\qquad
\text{for every }S\subseteq A.
\]

For a set $A\subseteq U$, the \emph{contraction} of $A$, denoted $M/A$, is
the matroid on ground set $U\setminus A$ with rank function
\[
r_{M/A}(S)
=
r_M(S\cup A)-r_M(A)
\qquad
\text{for every }S\subseteq U\setminus A.
\]
We repeatedly use
\[
r_M(A\cup S)
=
r_M(A)+r_{M/A}(S),
\qquad
S\subseteq U\setminus A.
\]

The \emph{closure} of a set $A\subseteq U$, denoted
$\operatorname{cl}_M(A)$, is
\[
\operatorname{cl}_M(A)
:=
\{e\in U:r_M(A\cup\{e\})=r_M(A)\}.
\]
A set $A\subseteq U$ is a \emph{flat} if
\[
\operatorname{cl}_M(A)=A.
\]

We also use direct sums. Suppose
\[
M_1=(U_1,\mathcal I_1),\ldots,M_\ell=(U_\ell,\mathcal I_\ell)
\]
are matroids on pairwise disjoint ground sets. Their \emph{direct sum}
\[
M_1\oplus\cdots\oplus M_\ell
\]
is the matroid on $U_1\cup\cdots\cup U_\ell$ whose independent sets are
\[
\mathcal I(M_1\oplus\cdots\oplus M_\ell)
:=
\left\{
I\subseteq U_1\cup\cdots\cup U_\ell:
I\cap U_j\in\mathcal I_j
\text{ for every }j
\right\}.
\]
Its rank function is additive:
\[
r_{M_1\oplus\cdots\oplus M_\ell}(S)
=
\sum_{j=1}^{\ell}r_{M_j}(S\cap U_j).
\]

In the algorithms of the following sections, we repeatedly construct a direct
sum
\[
M':=M'_1\oplus\cdots\oplus M'_\ell
\]
whose independent sets form a subfamily of the independent sets of the input
matroid. We use the following structural lemma repeatedly.

\begin{lemma}
\label{lem:direct-sum-submatroid}
Let $M=(U,\mathcal I)$ be a matroid, and let
\[
U_1,\ldots,U_\ell
\]
be pairwise disjoint subsets of $U$. Define
\[
A_i:=U_1\cup\cdots\cup U_i,
\qquad
A_0:=\emptyset,
\]
and, for every $i=1,\ldots,\ell$, let
\[
M'_i:=(M/A_{i-1})|U_i.
\]
Then every set independent in
\[
M'_1\oplus\cdots\oplus M'_\ell
\]
is independent in $M$.
\end{lemma}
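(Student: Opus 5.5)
We argue by rank additivity. Let $I$ be independent in $M'_1\oplus\cdots\oplus M'_\ell$, and write $I_i:=I\cap U_i$, so each $I_i$ is independent in $M'_i=(M/A_{i-1})|U_i$. By the rank formula for restriction, this means
\[
r_{M/A_{i-1}}(I_i)=|I_i| .
\]
Set $J_i:=I_1\cup\cdots\cup I_i\subseteq A_i$. We show by induction on $i$ that $J_i$ is independent in $M$; the case $i=\ell$ gives $I=J_\ell\in\mathcal I$, because $I\subseteq A_\ell$.

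\textbf{Base case.} For $i=1$ we have $A_0=\emptyset$, so $M'_1=M|U_1$. Hence $J_1=I_1$ is independent in $M$.

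\textbf{Inductive step.} Assume $J_{i-1}\in\mathcal I$. We need $r_M(J_{i-1}\cup I_i)=|J_{i-1}|+|I_i|$. The key inequality is
\[
r_M(J_{i-1}\cup I_i)-r_M(J_{i-1})\;\ge\; r_M(A_{i-1}\cup I_i)-r_M(A_{i-1}),
\]
which follows from submodularity of $r_M$, since $J_{i-1}\subseteq A_{i-1}$ and $I_i$ is disjoint from $A_{i-1}$. By the contraction identity stated in the preliminaries, the right-hand side equals $r_{M/A_{i-1}}(I_i)=|I_i|$. Combining this with $r_M(J_{i-1})=|J_{i-1}|$ gives
\[
r_M(J_i)\ge |J_{i-1}|+|I_i|=|J_i|.
\]
The reverse inequality is trivial, so $J_i$ is independent. (The sets $I_i$ are disjoint because the $U_i$ are.)

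The only substantive step is the submodularity inequality: adding $I_i$ to a subset of $A_{i-1}$ gains at least as much rank as adding it to $A_{i-1}$ itself. This is the diminishing-returns form of submodularity, obtained by iterating the single-element version over the elements of $I_i$, or directly from $r(X)+r(Y)\ge r(X\cup Y)+r(X\cap Y)$ with $X=J_{i-1}\cup I_i$ and $Y=A_{i-1}$. Here $X\cap Y=J_{i-1}$ and $X\cup Y=A_{i-1}\cup I_i$. Everything else is bookkeeping.
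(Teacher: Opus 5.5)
Your proof is correct, and it has the same inductive skeleton as the paper's: both show that $I_1\cup\cdots\cup I_k$ is independent in $M$ by induction on $k$. The difference is in how the inductive step is justified.

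The paper extends $I_1\cup\cdots\cup I_{k-1}$ to a basis $B$ of $A_{k-1}$. It then uses the standard characterization that $I_k$ is independent in $M/A_{k-1}$ exactly when $B\cup I_k$ is independent in $M$, and finishes by downward closure.

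You instead work only with the rank function. You apply submodularity to $X=J_{i-1}\cup I_i$ and $Y=A_{i-1}$, using $X\cap Y=J_{i-1}$ and $X\cup Y=A_{i-1}\cup I_i$. This shows that adding $I_i$ to the independent set $J_{i-1}\subseteq A_{i-1}$ raises the rank by at least $r_M(A_{i-1}\cup I_i)-r_M(A_{i-1})=r_{M/A_{i-1}}(I_i)=|I_i|$.

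The paper defines $M/A$ directly through the rank function $r_M(S\cup A)-r_M(A)$. So your argument follows from the preliminaries plus submodularity alone. The paper's step relies on the basis characterization of contraction without proving it. That characterization needs a short justification, namely that $B$ spans $A_{k-1}$, so $r_M(B\cup I_k)=r_M(A_{k-1}\cup I_k)$.

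The paper's route is shorter and more combinatorial. Yours makes explicit that submodularity of the rank function is the only property used.
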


\begin{proof}
Let
\[
I\in\mathcal I(M'_1\oplus\cdots\oplus M'_\ell),
\]
and write
\[
I_i:=I\cap U_i.
\]
Thus, $I_i$ is independent in $M'_i$ for every $i$. We prove by induction on
$k$ that
\[
I_1\cup\cdots\cup I_k
\]
is independent in $M$. The claim is immediate for $k=1$, since
\[
M'_1=M|U_1.
\]
Now suppose
\[
I_1\cup\cdots\cup I_{k-1}
\]
is independent in $M$. Extend this set to a basis $B$ of $A_{k-1}$ in $M$.
Since
\[
I_k\in\mathcal I\bigl((M/A_{k-1})|U_k\bigr),
\]
the set $I_k$ is independent in $M/A_{k-1}$. Hence
\[
B\cup I_k
\]
is independent in $M$. Since
\[
I_1\cup\cdots\cup I_k
\subseteq
B\cup I_k,
\]
downward closure implies that
\[
I_1\cup\cdots\cup I_k
\]
is independent in $M$. Taking $k=\ell$ proves the claim.
\end{proof}

\section{An Almost Non-Adaptive $\frac{1}{2}$-Approximation for Matroid Prophet Inequalities}
\label{sec:first-algorithm}

By the reduction in the previous section, given an instance of the matroid
prophet inequality, it suffices to solve the ex-ante relaxation and consider the
resulting Bernoulli instance $(M,x,v)$ with $x\in P_M$. In this section, we
present our first almost non-adaptive $\frac{1}{2}$-approximation for the
single-matroid setting. The algorithm uses thresholds fixed before the arrival
process begins, after replacing the original matroid by a more restrictive
direct-sum matroid.

We first identify a sufficient condition under which a single fixed threshold
already gives a strong guarantee. This motivates the following definition,
which plays a central role in the construction.

\begin{definition}[Threshold of a set]
Let \((M,x,v)\) be a Bernoulli instance with \(M=(U,\mathcal I)\), rank function
\(r\), and \(x_e>0\) for every \(e\in U\). For a set \(S\subseteq U\), define
\[
    x(S):=\sum_{e\in S}x_e,
    \qquad
    w(S):=\sum_{e\in S}x_ev_e.
\]
For every nonempty set \(S\subseteq U\), define
\[
    T(S):=\frac{w(S)}{r(S)+x(S)}.
\]
\end{definition}

\begin{definition}[Nice Bernoulli matroid instance]
A Bernoulli instance \((M,x,v)\), with \(x_e>0\) for every \(e\in U\), is called
\emph{nice} if
\[
    T(S)\le T(U)
    \qquad
    \text{for every nonempty }S\subseteq U.
\]
We emphasize that this definition does not require \(x\in P_M\).
\end{definition}

The next lemma shows that niceness is a sufficient condition for a single
threshold to obtain the desired local guarantee.

\begin{lemma}
\label{lem:nice-values-above-threshold}
Let $(M,x,v)$ be a nice Bernoulli matroid instance and let
\[
T:=T(U).
\]
Then $v_e\geq T$ for every $e\in U$.
\end{lemma}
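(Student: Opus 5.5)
We argue by contradiction: we exhibit a nonempty set $S$ with $T(S)>T(U)$. Suppose some $e\in U$ has $v_e<T$, and consider the set $S:=U\setminus\{e\}$. If $U=\{e\}$, the argument below needs a separate (trivial) treatment, so we first dispose of that case. There $T(U)=x_ev_e/(r(\{e\})+x_e)\le v_e$, because $r(\{e\})\ge 0$ and $x_e>0$; hence $v_e\ge T$ automatically. This is already a contradiction, so we may assume $|U|\ge 2$, which makes $S$ nonempty.

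The key step is a mediant-type comparison. We write
\[
w(U)=w(S)+x_ev_e,\qquad r(U)+x(U)\le r(S)+1+x(S)+x_e,
\]
using submodularity, namely $r(U)\le r(S)+1$. This bound is the wrong direction for showing $T(S)>T(U)$, so instead we use the niceness inequality $T(S)\le T$ directly and derive a contradiction. By niceness,
\[
w(S)\le T\,(r(S)+x(S)).
\]
Adding $x_ev_e<T x_e$ gives
\[
w(U)<T\,(r(S)+x(S)+x_e)=T\,(r(S)+x(U)).
\]
Combining this with $w(U)=T\,(r(U)+x(U))$ yields $r(U)<r(S)$. This contradicts the monotonicity of the rank function, since $S\subseteq U$. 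The one place to check is that $T\ge 0$, so that multiplying inequalities by $T$ preserves their direction; this holds because $w\ge 0$ and the denominators are positive.

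We also need the denominators to be positive and the quantities well defined. Here $r(S)+x(S)>0$ because $x(S)>0$ for nonempty $S$, since every $x_f>0$.

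The main obstacle is just noticing that the argument should remove $e$ from $U$ rather than consider $\{e\}$ alone. The singleton comparison $T(\{e\})=x_ev_e/(r(e)+x_e)$ can be smaller than $v_e$ and gives no contradiction, whereas applying niceness to $U\setminus\{e\}$ together with monotonicity of the rank closes the argument.
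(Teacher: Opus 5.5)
Your proof is correct and is essentially the paper's argument: both remove $e$ from $U$, dispose of $U=\{e\}$ in the same way, and combine $x_ev_e<x_eT$, the identity $w(U)=T\bigl(r(U)+x(U)\bigr)$, and rank monotonicity. The only difference is the reading direction: the paper uses monotonicity to exhibit $T(U\setminus\{e\})>T$, while you use niceness to force $r(U)<r(U\setminus\{e\})$. (The submodularity detour is unused, and your final division by $T$ needs $T>0$ rather than $T\ge 0$, which holds since $T>v_e\ge 0$.)
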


\begin{proof}
Suppose, for contradiction, that there exists $e\in U$ such that $v_e<T.$
Let $U':=U\setminus\{e\}.$ First, $U'$ is nonempty. Indeed, if $U=\{e\}$, then
\[
T
=
\frac{x_ev_e}{r_M(U)+x_e}
\leq v_e,
\]
contradicting $v_e<T$. Since $x_e>0$ and $v_e<T$,
\[
\begin{aligned}
w(U')
&=w(U)-x_ev_e\\
&>w(U)-x_eT\\
&=T\bigl(r_M(U)+x(U)\bigr)-x_eT\\
&=T\bigl(r_M(U)+x(U')\bigr).
\end{aligned}
\]
Since
\[
r_M(U')\leq r_M(U),
\]
it follows that
\[
w(U')
>
T\bigl(r_M(U')+x(U')\bigr).
\]
Therefore,
\[
T(U')
=
\frac{w(U')}{r_M(U')+x(U')}
>
T=T(U),
\]
contradicting niceness.
\end{proof}

\begin{lemma}[Single-threshold guarantee for nice matroids]
\label{lem:nice-matroid-single-threshold}
Let \((M,x,v)\) be a nice Bernoulli matroid instance. Let
\[
    T:=T(U)=\frac{w(U)}{r(U)+x(U)}.
\]
Consider the single-threshold algorithm that accepts an active element \(e\) if
\(v_e\ge T\) and accepting \(e\) preserves independence in \(M\). If \(I\)
denotes the random independent set selected by the algorithm, then
\[
    \mathbb E[v(I)]
    \geq
    r(U)T.
\]
\end{lemma}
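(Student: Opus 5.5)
Because every positive Bernoulli value lies above the threshold, the algorithm is greedy on active elements. I will split the value of $I$ into a part equal to $T$ per accepted element and a surplus part, and use niceness on the random set of \emph{blocked} elements to show that the surplus makes up for the rank not collected.

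\textbf{Step 1 (greedy reduction).} By \Cref{lem:nice-values-above-threshold}, $v_e\ge T$ for every $e\in U$. So the rule accepts an arriving element $e$ exactly when $e$ is active and $I_{<e}\cup\{e\}$ is independent. Here $I_{<e}$ denotes the set accepted before $e$ arrives. We write
\[
v(I)=T|I|+\sum_{e\in I}(v_e-T),
\]
and note that every summand $v_e-T$ is nonnegative.

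\textbf{Step 2 (blocked set).} Define the random set
\[
D:=\{e\in U: e\in\operatorname{cl}_M(I_{<e})\}.
\]
Whether $e\in D$ depends only on the activity of elements arriving before $e$, so it is independent of whether $e$ is active. Element $e$ is accepted if and only if $e$ is active and $e\notin D$. Hence
\[
\mathbb E\Bigl[\sum_{e\in I}(v_e-T)\Bigr]=\sum_{e}x_e(v_e-T)\Pr[e\notin D]=\mathbb E\bigl[w(U\setminus D)-T\,x(U\setminus D)\bigr].
\]
Moreover $I_{<e}\subseteq I$ for every $e$, so $D\subseteq\operatorname{cl}_M(I)$ and therefore $r(D)\le r(I)=|I|$.

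\textbf{Step 3 (niceness on $D$).} This is the key step. If $D\neq\emptyset$, niceness gives $T(D)\le T$, that is,
\[
w(D)-T\,x(D)\le T\,r(D)\le T|I|.
\]
The same inequality holds trivially when $D=\emptyset$. By the definition of $T=T(U)$ we also have $w(U)-T\,x(U)=T\,r(U)$. Subtracting,
\[
w(U\setminus D)-T\,x(U\setminus D)\ge T\,r(U)-T|I|
\]
holds pointwise. Taking expectations and combining with Step 1 gives
\[
\mathbb E[v(I)]\ge T\,\mathbb E|I|+T\,r(U)-T\,\mathbb E|I|=r(U)T.
\]

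The main obstacle is identifying the right set to which niceness should be applied. The blocked set $D$ works because three things hold at once. Its acceptance events factor through independence. Its rank is controlled by $|I|$, which exactly cancels the $T|I|$ term. And its complement carries the surplus $w-Tx$ that niceness pins to $T\,r(U)$ on all of $U$.
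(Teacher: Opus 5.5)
Your proof is correct and follows essentially the same route as the paper. Both split $v(I)$ into $T|I|$ plus a surplus, and both apply niceness to a random set spanned by $I$, so that its rank cancels the $T|I|$ term. The only difference is that the paper works with $F=\operatorname{cl}_M(I)\supseteq D$ and the bound $\Pr[e\text{ has space}]\ge 1-\Pr[e\in F]$, which uses $v_e\ge T$ to keep the inequality's direction, whereas your blocked set $D$ turns the surplus term into an exact identity.
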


\begin{proof}
Let $T:=T(U).$ By \Cref{lem:nice-values-above-threshold}, $v_e\geq T$ for every $e\in U$. For every set $S\subseteq U$, define
\[
p(S):=\sum_{e\in S}x_e(v_e-T)
      =w(S)-Tx(S).
\]
Since the instance is nice, for every nonempty set $S\subseteq U$, $T(S)\leq T.$ Equivalently,
\[
w(S)\leq T\bigl(r(S)+x(S)\bigr),
\]
and hence
\begin{equation}
\label{eq:nice-density-bound}
    p(S)=w(S)-Tx(S)\leq Tr(S).
\end{equation}
Inequality~\eqref{eq:nice-density-bound} is also trivial for $S=\emptyset$. For the whole ground set $U$, by the definition of $T=T(U)$,
\[
p(U)
=
w(U)-Tx(U)
=
Tr(U).
\]

Let $I$ be the random independent set selected by the algorithm, and let
\[
F:=\operatorname{cl}_M(I)
\]
be its closure. Let $I'$ denote the set of active elements. Since every
element satisfies $v_e\geq T$, the algorithm greedily constructs a maximal
independent subset of $I'$. Therefore,
\[
\operatorname{cl}_M(I)=\operatorname{cl}_M(I'),
\]
and hence $F=\operatorname{cl}_M(I').$ Since $I$ is independent and spans $F$,
\[
r(F)=|I|.
\]

Fix an element $e\in U$. If $e$ has no space when it arrives, then it is
spanned by elements that have already been accepted. Since these elements are
contained in $I$, it follows that $e\in F$. Hence
\[
\Pr[e\text{ has space}]
\geq
1-\Pr[e\in F].
\]
The arrival order is fixed before the Bernoulli realizations. Therefore, the
event that $e$ has space depends only on the activations of elements preceding
$e$ in this fixed order and is independent of the activation of $e$. Thus,
\[
\begin{aligned}
\mathbb E\left[\sum_{e\in I}(v_e-T)\right]
&=
\sum_{e\in U}x_e(v_e-T)\Pr[e\text{ has space}]\\
&\geq
\sum_{e\in U}x_e(v_e-T)\bigl(1-\Pr[e\in F]\bigr)\\
&=
p(U)-\mathbb E[p(F)].
\end{aligned}
\]

Consequently,
\[
\begin{aligned}
\mathbb E[v(I)]
&=
T\mathbb E[|I|]
+
\mathbb E\left[\sum_{e\in I}(v_e-T)\right]\\
&\geq
p(U)+\mathbb E\left[Tr(F)-p(F)\right]\\
&\geq
p(U)
=
Tr(U),
\end{aligned}
\]
where the second inequality follows from
\eqref{eq:nice-density-bound} applied to the random set $F$.
\end{proof}

\subsection{The Extraction Procedure}
\label{subsec:first-extraction}

We now describe the preprocessing step. Let \((M,x,v)\) be the Bernoulli
instance obtained from the ex-ante relaxation, so \(x\in P_M\). We first remove
all elements \(e\) with \(x_e=0\), and hence assume throughout this subsection
that
\[
    x_e>0
    \qquad
    \text{for every }e\in U.
\]
The goal is to decompose the instance into nice Bernoulli matroid instances. The extraction procedure repeatedly finds a set of maximum threshold in the
current matroid, extracts the restriction to this set, and contracts it. More
formally, at step \(i\), let \(M_i\) be the current matroid. We choose a
nonempty set \(U_i\) maximizing \(T_{M_i}(S)\), where
\[
    T_{M_i}(S):=\frac{w(S)}{r_{M_i}(S)+x(S)}.
\]
We then define
\[
    M'_i:=M_i|U_i
\]
and continue with
\[
    M_{i+1}:=M_i/U_i.
\]
The procedure stops when the remaining matroid is nice.

\begin{algorithm}[H]
\caption{\textsc{Extract}(\(M,x,v\))}
\label{alg:extract}
\begin{algorithmic}[1]
\State \(i\gets 1\), \(M_1\gets M\)
\While{\(M_i\) is not nice}
    \State Choose nonempty \(U_i\subseteq U(M_i)\) maximizing
    \(T_{M_i}(S):=w(S)/(r_{M_i}(S)+x(S))\)
    \State \(M'_i\gets M_i|U_i\), \(T_i\gets T_{M'_i}(U_i)\),
    \(M_{i+1}\gets M_i/U_i\), \(i\gets i+1\)
\EndWhile
\State \(U_i\gets U(M_i)\), \(M'_i\gets M_i\), \(T_i\gets T_{M'_i}(U_i)\)
\State \Return \(M'_1,\ldots,M'_i\) and \(T_1,\ldots,T_i\)
\end{algorithmic}
\end{algorithm}

We denote the number of output matroids by \(\ell\). Thus the output is
\[
    M'_1,\ldots,M'_\ell,
    \qquad
    U_1,\ldots,U_\ell,
    \qquad
    T_1,\ldots,T_\ell.
\]
The sets \(U_1,\ldots,U_\ell\) form a partition of \(U\). Moreover, for \(i=1,\ldots,\ell\), let
\[
    A_i:=U_1\cup\cdots\cup U_i,
    \qquad \textit{and}\qquad
    A_0:=\emptyset .
\]
then, it can be derived that   $M_i = M/A_{i-1}$, which in turn implies
\[
    M'_i = M_i|U_i = (M/A_{i-1})|U_i .
\]

We next establish the main properties of the extraction procedure that will be
used in the analysis of the algorithm.

\begin{lemma}\label{lem:extraction-polytime}
    Given an independence oracle for \(M\), the extraction procedure can be
implemented in polynomial time. (See the proof in \Cref{app-lem:extraction-polytime}.)
\end{lemma}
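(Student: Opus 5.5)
The plan is to implement the two nontrivial subroutines, maximizing \(T_{M_i}(S)\) over nonempty \(S\) and testing niceness, by parametric submodular minimization, and then to bound the number of iterations by \(|U|\).

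\textbf{Reduction to submodular minimization.} For a parameter \(\lambda\ge 0\), define
\[
g_\lambda(S):=\lambda\bigl(r_{M_i}(S)+x(S)\bigr)-w(S).
\]
This is submodular: the rank function is submodular, and \(x(S)\) and \(w(S)\) are modular. We have \(\max_{S\neq\emptyset}T_{M_i}(S)\le\lambda\) if and only if \(g_\lambda(S)\ge 0\) for every nonempty \(S\). Since \(x_e>0\), every nonempty \(S\) has \(r_{M_i}(S)+x(S)>0\). Minimizing a submodular function over nonempty sets is polynomial: for each \(e\), minimize over sets containing \(e\), which is the submodular function obtained by contracting \(e\) in the lattice sense. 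Evaluating \(r_{M_i}\) is polynomial with the independence oracle for \(M\), since \(r_{M/A}(S)=r_M(S\cup A)-r_M(A)\) and \(r_M\) is computed greedily.

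\textbf{Computing a maximizer exactly.} I would use Dinkelbach / Newton iteration. Start with \(\lambda=T_{M_i}(U(M_i))\). Repeatedly find \(S\) minimizing \(g_\lambda\) over nonempty sets. If the minimum is negative, set \(\lambda:=T_{M_i}(S)\), which strictly increases \(\lambda\). Otherwise stop: \(\lambda\) is the maximum value and the last \(S\) attains it. Polynomiality of the number of Newton steps follows from the standard bound for ratio maximization with a submodular numerator/denominator structure (Radzik's analysis of Newton's method for linear fractional combinatorial optimization). Alternatively, the denominator takes values in \(\{k+x(S)\}\), which gives a polynomial bound on distinct breakpoints. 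As a fallback, one can use Megiddo's parametric search together with strongly polynomial submodular minimization. Niceness of \(M_i\) is then decided by checking whether the computed maximum equals \(T_{M_i}(U(M_i))\).

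\textbf{Iteration count and main obstacle.} The procedure has at most \(|U|\) iterations, because each extracted \(U_i\) is nonempty and the ground sets are disjoint. Each \(M_{i+1}=M/A_i\) admits a rank oracle by the contraction formula. The main obstacle is the exact polynomial bound on the Newton iterations for the ratio maximization, i.e., obtaining an exact maximizer rather than an approximate one. I would first try Radzik's bound, treating \(\lambda\mapsto\min_S g_\lambda(S)\) as a piecewise-linear concave function. If that argument fails to transfer cleanly, I would fall back to a general parametric submodular minimization result.
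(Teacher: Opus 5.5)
Your proposal is correct and is essentially the paper's argument. Both use the same submodular function $\lambda\bigl(r_{M_i}(S)+x(S)\bigr)-w(S)$, with rank evaluated greedily through the contraction formula. Both run a parametric search for the maximum ratio; you use Dinkelbach/Newton where the paper invokes parametric submodular minimization or binary search over breakpoints. Both then test niceness against $T_{M_i}(U(M_i))$ and bound the number of iterations by $|U|$.

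The one loose point is your bound on the number of Newton steps. Radzik's bound is stated for ratios of linear functions. Also, the values $k+x(S)$ do not by themselves give polynomially many breakpoints, since $x(S)$ can take exponentially many values. A clean fix is to use that $r_{M_i}+x$ is monotone. Then the inclusion-wise minimal minimizers of $g_\lambda$ are nested (decreasing) as $\lambda$ increases toward the optimum. Hence $\lambda\mapsto\min_S g_\lambda(S)$ has at most $|U|$ linear pieces on that range, and Newton's method stops after $O(|U|)$ steps.
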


\begin{lemma}\label{lem:extraction-nice}
    For every \(i=1,\ldots,\ell\), the Bernoulli instance
\[
    (M'_i,x|_{U_i},v|_{U_i})
\]
is nice.
\end{lemma}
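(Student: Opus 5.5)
There are two cases, according to whether \(i<\ell\) or \(i=\ell\). The last piece is immediate: the loop stops precisely when \(M_\ell\) is nice, and \(M'_\ell=M_\ell\) on \(U_\ell=U(M_\ell)\), so \((M'_\ell,x|_{U_\ell},v|_{U_\ell})\) is nice by the stopping condition. Here \(x_e>0\) holds on every piece because zero-probability elements were removed at the start.

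For \(i<\ell\), the set \(U_i\) was chosen as a maximizer of \(T_{M_i}\) over all nonempty subsets of \(U(M_i)\). The matroid \(M'_i=M_i|U_i\) has rank function \(r_{M'_i}(S)=r_{M_i}(S)\) for every \(S\subseteq U_i\). Hence for each nonempty \(S\subseteq U_i\),
\[
T_{M'_i}(S)=\frac{w(S)}{r_{M_i}(S)+x(S)}=T_{M_i}(S)\le T_{M_i}(U_i)=T_{M'_i}(U_i),
\]
where the inequality uses maximality of \(U_i\) among all nonempty subsets of \(U(M_i)\), and \(S\) is one such subset. This is exactly the niceness condition for \((M'_i,x|_{U_i},v|_{U_i})\), whose ground set is \(U_i\).

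There is one point to check. The denominators \(r_{M_i}(S)+x(S)\) are positive, since \(x(S)>0\) for nonempty \(S\). So \(T\) is well defined even if \(M_i\) contains loops after contraction, and the maximum exists because there are finitely many subsets. No step is a real obstacle. The argument only needs that restriction preserves the rank function on subsets of \(U_i\), and that the maximization in Algorithm~\ref{alg:extract} ranges over all nonempty subsets of the current ground set.
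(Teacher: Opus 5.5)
Your proof is correct and follows the paper's argument: the last piece is nice by the stopping condition, and for \(i<\ell\) you combine \(r_{M'_i}(S)=r_{M_i}(S)\) for \(S\subseteq U_i\) with the maximality of \(U_i\) over all nonempty subsets of \(U(M_i)\). Your remark that the denominators \(r_{M_i}(S)+x(S)\) are positive because \(x_e>0\) is a check the paper leaves implicit.
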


\begin{proof}
Fix \(i\in\{1,\ldots,\ell\}\). If \(i=\ell\) and the extraction procedure stops
because the current matroid is nice, then \(M'_\ell=M_\ell\), so the claim holds
by the stopping condition. Now suppose \(i<\ell\). Then \(U_i\) was chosen to maximize   $ T_{M_i}(S)$ over all nonempty \(S\subseteq U(M_i)\). Since
\[
    M'_i=M_i|U_i,
\]
for every nonempty set \(S\subseteq U_i\) we have
\[
    r_{M'_i}(S)=r_{M_i}(S).
\]
Therefore
\[
    T_{M'_i}(S)=T_{M_i}(S).
\]
By maximality of \(U_i\),
\[
    T_{M_i}(S)\le T_{M_i}(U_i).
\]
Thus
\[
    T_{M'_i}(S)\le T_{M'_i}(U_i)
\]
for every nonempty \(S\subseteq U_i\). Hence
\[
    (M'_i,x|_{U_i},v|_{U_i})
\]
is nice.    
\end{proof}

\begin{lemma}\label{lem:extraction-thresholds}
    The thresholds returned by the extraction procedure satisfy
\[
    T_\ell\le T_{\ell-1}\le\cdots\le T_2\le T_1 .
\]
\end{lemma}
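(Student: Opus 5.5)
The plan is to prove $T_{i+1}\le T_i$ for every $i=1,\ldots,\ell-1$ and chain these inequalities. The argument uses only the maximality of $U_i$ in $M_i$, the contraction identity $r_M(A\cup S)=r_M(A)+r_{M/A}(S)$ from \Cref{subsec:matroid-operations}, and an elementary mediant inequality.

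\textbf{Step 1: express both thresholds in $M_i$ and $M_{i+1}$.} Fix $i<\ell$, so $U_i$ was chosen inside the while loop. Since $M'_i=M_i|U_i$ and restriction preserves ranks, $T_i=T_{M'_i}(U_i)=T_{M_i}(U_i)$. By the choice of $U_i$, this equals $\max\{T_{M_i}(S): \emptyset\neq S\subseteq U(M_i)\}$. Likewise $T_{i+1}=T_{M_{i+1}}(U_{i+1})$, whether $U_{i+1}$ is chosen in the loop or is the whole remaining ground set at termination. In either case $U_{i+1}$ is a nonempty subset of $U(M_{i+1})=U(M_i)\setminus U_i$. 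It therefore suffices to show
\[
T_{M_{i+1}}(S)\le T_i \qquad\text{for every nonempty } S\subseteq U(M_i)\setminus U_i .
\]

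\textbf{Step 2: decompose the union $S\cup U_i$.} Since $M_{i+1}=M_i/U_i$, the contraction identity gives $r_{M_i}(S\cup U_i)=r_{M_i}(U_i)+r_{M_{i+1}}(S)$. Because $S$ and $U_i$ are disjoint, $w$ and $x$ are additive over them. Write $a=w(U_i)$, $b=r_{M_i}(U_i)+x(U_i)$, $c=w(S)$ and $d=r_{M_{i+1}}(S)+x(S)$. Then
\[
T_{M_i}(S\cup U_i)=\frac{a+c}{b+d}, \qquad T_i=\frac{a}{b}, \qquad T_{M_{i+1}}(S)=\frac{c}{d}.
\]
Both $b$ and $d$ are strictly positive, because $x_e>0$ for every element and $U_i,S$ are nonempty.

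\textbf{Step 3: apply maximality and the mediant inequality.} Maximality of $U_i$ gives $(a+c)/(b+d)\le a/b$. Multiplying out, this is $bc\le ad$, i.e.\ $c/d\le a/b$. Hence $T_{M_{i+1}}(S)\le T_i$, which completes Step 1's reduction.

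I do not expect a real obstacle. The only points needing care are these:
\begin{itemize}
\item confirming that the final component's threshold is also of the form $T_{M_{i+1}}(S)$ for a nonempty $S$, so the same bound applies to it;
\item confirming positivity of the denominators, which relies on having removed all elements with $x_e=0$.
\end{itemize}
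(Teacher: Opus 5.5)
Your proof is correct and follows the paper's argument: the paper also shows $T_{M_{i+1}}(S)\le T_i$ for every nonempty $S\subseteq U(M_{i+1})$ by combining the maximality bound $T_{M_i}(U_i\cup S)\le T_i$ with the contraction identity, and its step of subtracting $w(U_i)=T_i\bigl(r_{M_i}(U_i)+x(U_i)\bigr)$ from the union inequality is algebraically the same as your mediant computation $bc\le ad$. Your explicit checks of the two details the paper leaves implicit are a welcome addition: the denominators are positive, and the final block is nonempty because when $M_i$ is not nice, $U(M_i)$ is not a maximizer, so $U_i$ is a proper subset.
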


\begin{proof}
We prove that
\[
    T_{i+1}\le T_i
\]
for every \(i=1,\ldots,\ell-1\). This implies
\[
    T_\ell\le T_{\ell-1}\le\cdots\le T_2\le T_1.
\]

Fix \(i<\ell\), and let \(S\subseteq U(M_{i+1})\) be any nonempty set. Since
\[
    M_{i+1}=M_i/U_i,
\]
we have
\[
    r_{M_{i+1}}(S)
    =
    r_{M_i}(U_i\cup S)-r_{M_i}(U_i).
\]
Because \(U_i\) maximizes \(T_{M_i}\), we have
\[
    T_{M_i}(U_i\cup S)\le T_{M_i}(U_i)=T_i.
\]
Therefore,
\begin{equation}
\label{eq:first-threshold-union-bound}
    w(U_i)+w(S)
    \le
    T_i\bigl(r_{M_i}(U_i\cup S)+x(U_i)+x(S)\bigr).
\end{equation}
On the other hand, by definition of \(T_i\),
\begin{equation}
\label{eq:first-threshold-piece-equality}
    w(U_i)=T_i\bigl(r_{M_i}(U_i)+x(U_i)\bigr).
\end{equation}
Subtracting \eqref{eq:first-threshold-piece-equality} from
\eqref{eq:first-threshold-union-bound} gives
\[
    w(S)
    \le
    T_i\bigl(r_{M_i}(U_i\cup S)-r_{M_i}(U_i)+x(S)\bigr).
\]
Using the contraction identity, this becomes
\[
    w(S)
    \le
    T_i\bigl(r_{M_{i+1}}(S)+x(S)\bigr).
\]
Hence
\[
    T_{M_{i+1}}(S)\le T_i
\]
for every nonempty \(S\subseteq U(M_{i+1})\).

In particular, the set chosen at the next step, or the whole remaining ground
set if the algorithm stops at the next step, has threshold at most \(T_i\).
Thus
\[
    T_{i+1}\le T_i.
\]
This proves the lemma.
\end{proof}

\begin{lemma}\label{lem:extraction-rank-dominance}
    For every \(k=1,\ldots,\ell\),
\[
    \sum_{i=1}^k r_{M'_i}(U_i)
    \ge
    x(U_1\cup\cdots\cup U_k).
\]
\end{lemma}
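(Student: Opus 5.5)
The identity to exploit is the telescoping contraction formula. Since $M'_i=(M/A_{i-1})|U_i$, we have
\[
r_{M'_i}(U_i)=r_{M/A_{i-1}}(U_i)=r_M(A_{i-1}\cup U_i)-r_M(A_{i-1})=r_M(A_i)-r_M(A_{i-1}).
\]
Summing over $i=1,\ldots,k$ telescopes to
\[
\sum_{i=1}^k r_{M'_i}(U_i)=r_M(A_k)-r_M(A_0)=r_M(A_k).
\]
The statement therefore reduces to showing $r_M(A_k)\ge x(A_k)$.

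This last inequality is exactly the rank constraint of the matroid polytope applied to the set $S=A_k$. In this subsection $x\in P_M$, since it comes from the ex-ante relaxation.

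One point needs care. We removed the elements with $x_e=0$, so the ground set $U$ here is the original ground set minus these elements, and $M$ denotes the restriction of the original matroid to $U$. Restricting preserves ranks of subsets, so $x|_U$ still lies in $P_{M|U}$. Hence
\[
x(A_k)\le r_M(A_k)
\]
for every $k$, which combined with the telescoping identity proves the claim.

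The only step needing verification is $M_i=M/A_{i-1}$, which the paper asserts just before the lemma. It follows by induction from the fact that iterated contractions compose: $(M/A)/B=M/(A\cup B)$. Concretely, the rank function of $M_i/U_i$ is
\[
S\mapsto r_M(A_{i-1}\cup U_i\cup S)-r_M(A_{i-1})-\bigl(r_M(A_{i-1}\cup U_i)-r_M(A_{i-1})\bigr)=r_M(A_i\cup S)-r_M(A_i),
\]
which is the rank function of $M/A_i$.

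There is no real obstacle. Note that the argument does not use the maximality of the extracted sets at all, only the polytope constraint and the telescoping of ranks.
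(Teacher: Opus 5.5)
Your proposal is correct and follows essentially the same route as the paper: telescope $r_{M'_i}(U_i)=r_M(A_i)-r_M(A_{i-1})$ to get $r_M(A_k)$, then apply the polytope constraint $x(A_k)\le r_M(A_k)$ from $x\in P_M$. Your two extra checks, that removing zero-weight elements keeps $x$ in the restricted polytope and that iterated contractions compose to give $M_i=M/A_{i-1}$, are valid details the paper leaves implicit.
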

   \begin{proof}
For every \(i\), we have
\[
    r_{M'_i}(U_i)
    =
    r_{M_i}(U_i)
    =
    r_{M/A_{i-1}}(U_i).
\]
By the contraction identity,
\[
    r_{M/A_{i-1}}(U_i)
    =
    r_M(A_i)-r_M(A_{i-1}).
\]
Therefore, for every \(k=1,\ldots,\ell\),
\[
    \sum_{i=1}^k
    r_M(A_i)-r_M(A_{i-1})
    =
    r_M(A_k).
\]
Since \(x\in P_M\), we have
\[
    x(A_k)\le r_M(A_k).
\]
Thus
\[
    \sum_{i=1}^k r_{M'_i}(U_i)
    =
    r_M(A_k)
    \ge
    x(A_k)
    =
    x(U_1\cup\cdots\cup U_k).
\]
This proves the claim.
\end{proof}

\begin{lemma}\label{lem:extraction-charging}
    Let \(T_i\) be the threshold of the component \(M'_i\). Then
\[
    \sum_{i=1}^{\ell} r_{M'_i}(U_i)T_i
    \ge
    \frac12\sum_{e\in U}x_ev_e .
\]
\end{lemma}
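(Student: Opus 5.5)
We need to show $\sum_i r_i T_i \ge \frac12 w(U)$, where $r_i:=r_{M'_i}(U_i)$ and $x_i:=x(U_i)$. By the definition of $T_i$, $w(U_i)=T_i(r_i+x_i)$, so
\[
w(U)=\sum_{i=1}^\ell T_i r_i+\sum_{i=1}^\ell T_i x_i .
\]
The claim is therefore equivalent to
\[
\sum_{i=1}^\ell T_i x_i\le \sum_{i=1}^\ell T_i r_i .
\]
We will prove this by Abel summation, combining the monotonicity of \Cref{lem:extraction-thresholds} with the prefix dominance of \Cref{lem:extraction-rank-dominance}.

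Set $R_k:=\sum_{i\le k} r_i$ and $X_k:=\sum_{i\le k}x_i=x(A_k)$, with $R_0=X_0=0$, and put $T_{\ell+1}:=0$. Summation by parts gives
\[
\sum_{i=1}^\ell T_i(r_i-x_i)=\sum_{k=1}^{\ell}(T_k-T_{k+1})(R_k-X_k).
\]
By \Cref{lem:extraction-thresholds} we have $T_k-T_{k+1}\ge 0$ for $k<\ell$. For $k=\ell$ the coefficient is $T_\ell\ge 0$, since weights $w$ are nonnegative. By \Cref{lem:extraction-rank-dominance} we have $R_k-X_k\ge 0$ for every $k$. Every term of the sum is thus nonnegative, which yields $\sum_i T_i r_i\ge\sum_i T_i x_i$. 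Adding $\sum_i T_i r_i$ to both sides gives $2\sum_i T_i r_i\ge w(U)$, which is the statement.

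The only delicate point is the index bookkeeping in the summation by parts, including the boundary term $T_\ell R_\ell - T_\ell X_\ell$. We also need the denominators $r_i+x_i$ to be positive so that each $T_i$ is well defined; this holds because $x_e>0$ on every part. No further obstacle is expected, since the two previous lemmas were set up precisely to supply the two sign conditions.
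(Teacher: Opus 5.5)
Your proposal is correct and is essentially the paper's argument. Both write $w(U)=\sum_i (R_i+X_i)T_i$ and reduce the claim to $\sum_i R_iT_i\ge\sum_i X_iT_i$, which follows by summation by parts from \Cref{lem:extraction-thresholds} and \Cref{lem:extraction-rank-dominance}; your convention $T_{\ell+1}:=0$ just absorbs the paper's boundary term $D_\ell T_\ell$.
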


\begin{proof}
For each \(i\), let
\[
    X_i:=x(U_i),
    \qquad
    W_i:=w(U_i),
    \qquad
    R_i:=r_{M'_i}(U_i).
\]
By definition of \(T_i\),
\[
    T_i=\frac{W_i}{R_i+X_i}.
\]
Thus
\[
    W_i=(R_i+X_i)T_i.
\]
Therefore,
\[
    \sum_{e\in U}x_ev_e
    =
    \sum_{i=1}^{\ell}W_i
    =
    \sum_{i=1}^{\ell}(R_i+X_i)T_i.
\]
It is enough to prove that
\[
    \sum_{i=1}^{\ell}R_iT_i
    \ge
    \sum_{i=1}^{\ell}X_iT_i,
\]
because then
\[
    \sum_{i=1}^{\ell}R_iT_i
    \ge
    \frac12
    \sum_{i=1}^{\ell}(R_i+X_i)T_i
    =
    \frac12\sum_{e\in U}x_ev_e .
\]

Define
\[
    D_k:=\sum_{i=1}^k (R_i-X_i)
    \qquad
    \text{for }k=1,\ldots,\ell.
\]
By \Cref{lem:extraction-rank-dominance}, we have
\[
    D_k\ge 0
    \qquad
    \text{for every }k=1,\ldots,\ell.
\]
Also, by \Cref{lem:extraction-thresholds},
\[
    T_1\ge T_2\ge\cdots\ge T_\ell\ge 0.
\]
Using summation by parts,
\[
\begin{aligned}
    \sum_{i=1}^{\ell}(R_i-X_i)T_i
    &=
    \sum_{i=1}^{\ell-1}
        D_i(T_i-T_{i+1})
    +
    D_\ell T_\ell .
\end{aligned}
\]
Each term on the right-hand side is nonnegative, since
\[
    D_i\ge 0
    \qquad\text{and}\qquad
    T_i-T_{i+1}\ge 0.
\]
Therefore,
\[
    \sum_{i=1}^{\ell}(R_i-X_i)T_i\ge 0.
\]
Equivalently,
\[
    \sum_{i=1}^{\ell}R_iT_i
    \ge
    \sum_{i=1}^{\ell}X_iT_i.
\]
As explained above, this implies
\[
    \sum_{i=1}^{\ell} r_{M'_i}(U_i)T_i
    =
    \sum_{i=1}^{\ell}R_iT_i
    \ge
    \frac12\sum_{e\in U}x_ev_e .
\]
This completes the proof.
\end{proof}

\subsection{The Algorithm}
\label{subsec:first-online}

We now describe the final online algorithm for a matroid prophet inequality
instance with underlying matroid \(M\).

\begin{algorithm}[H]
\caption{\textsc{FixedThresholdMatroidProphet}}
\label{alg:main}
\begin{algorithmic}[1]
\State Solve the ex-ante relaxation and obtain \(x\in P_M\)
\State Reduce to the Bernoulli instance \((M,x,v)\), and remove all elements \(e\) with \(x_e=0\)
\State Run \textsc{Extract}(\(M,x,v\)) and obtain
\[
    (M'_1,U_1,T_1),\ldots,(M'_\ell,U_\ell,T_\ell).
\]
\State \(I\gets\emptyset\)
\For{each arriving element \(e\) in the Bernoulli instance}
    \State Let \(j\) be the index with \(e\in U_j\)
    \If{\(e\) is active and
    \((I\cap U_j)\cup\{e\}\in\mathcal I(M'_j)\)}\label{line:independence-Mj}
        \State \(I\gets I\cup\{e\}\)
    \EndIf
\EndFor
\State \Return \(I\)
\end{algorithmic}
\end{algorithm}

All thresholds and all additional constraints are computed before the arrival
process begins. During the online phase, the algorithm only checks whether the
arriving element is active and whether accepting it preserves independence in
the extracted matroid corresponding to its component.

\begin{remark}
The condition \(v_e\geq T_j\) does not need to be checked explicitly in the
online algorithm. Indeed, by \Cref{lem:extraction-nice}, each extracted
instance
\[
(M'_j,x|_{U_j},v|_{U_j})
\]
is nice. Hence, by \Cref{lem:nice-values-above-threshold}, every element
\(e\in U_j\) satisfies
\[
v_e\geq T_j.
\]
Thus, on each extracted component, every element automatically lies above the
corresponding threshold. The thresholds are still essential for constructing
the decomposition and for the analysis, but they play no role in the online
acceptance rule once the decomposition has been computed.
\end{remark}

\begin{remark}
The number of extracted matroids is at most \(n\), where \(n=|U|\). Indeed, at
each iteration of \Cref{alg:extract}, at least one element is removed from the
current ground set. Moreover, the independence test in Line~\ref{line:independence-Mj} can be
implemented efficiently. Since each \(M'_j\) is obtained from \(M\) by a sequence
of contractions and restrictions, an independence oracle for \(M\) gives an
independence oracle for \(M'_j\) with only polynomial overhead.
\end{remark}

\begin{theorem}
\label{thm:main}
Let \(I\) be the random set accepted by the algorithm. Then
\[
    \mathbb E[v(I)]
    \ge
    \frac12\sum_{e\in U}x_ev_e .
\]
Consequently, the algorithm is a \(1/2\)-approximation with respect to the
prophet benchmark.
\end{theorem}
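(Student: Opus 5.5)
The plan is to decompose the expected value of $I$ component by component, apply the single-threshold guarantee on each component, and then sum using \Cref{lem:extraction-charging}.

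\emph{Step 1: each component runs the single-threshold rule.} Since the $U_j$ partition $U$, write $v(I)=\sum_{j=1}^{\ell}v(I\cap U_j)$. Fix a component $j$. The algorithm accepts $e\in U_j$ exactly when $e$ is active and $(I\cap U_j)\cup\{e\}\in\mathcal I(M'_j)$. This decision involves only elements of $U_j$. Moreover, the adversarial order restricted to $U_j$ is a fixed order chosen before the realizations. Hence $I\cap U_j$ is exactly the output of the single-threshold algorithm of \Cref{lem:nice-matroid-single-threshold} on the instance $(M'_j,x|_{U_j},v|_{U_j})$ with threshold $T_j=T_{M'_j}(U_j)$. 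The check $v_e\ge T_j$ in that lemma is automatic, because by \Cref{lem:extraction-nice} the instance is nice, and then \Cref{lem:nice-values-above-threshold} gives $v_e\geq T_j$ for all $e\in U_j$. We also need $x_e>0$, which holds since zero-probability elements were removed.

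\emph{Step 2: sum the local guarantees.} \Cref{lem:nice-matroid-single-threshold} yields $\mathbb E[v(I\cap U_j)]\ge r_{M'_j}(U_j)T_j$. By linearity of expectation,
\[
\mathbb E[v(I)]\ge\sum_{j=1}^{\ell}r_{M'_j}(U_j)T_j\ge\frac12\sum_{e\in U}x_ev_e,
\]
where the last inequality is \Cref{lem:extraction-charging}. The removed elements with $x_e=0$ contribute nothing to the right-hand side.

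\emph{Step 3: feasibility and the prophet benchmark.} The output is feasible: $I$ is independent in $M'_1\oplus\cdots\oplus M'_\ell$, and $M'_j=(M/A_{j-1})|U_j$, so \Cref{lem:direct-sum-submatroid} shows $I\in\mathcal I$. For the approximation claim, apply \Cref{lem:ex-ante-bernoulli} with $\mathcal P=P_M$ and $c=\tfrac12$. This translates the Bernoulli guarantee into one for the original instance, relative to the ex-ante value and hence the prophet. The reduction also preserves almost non-adaptivity and order-obliviousness.

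The main subtlety is Step 1: the per-component process must be exactly an instance of the earlier lemma. In particular, within component $j$ the "has space" event must be determined by earlier activations in $U_j$ alone, which holds because acceptance in $U_j$ never depends on other components. Everything else is direct composition of established lemmas.
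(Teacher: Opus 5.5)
Your proposal is correct and follows the paper's proof essentially step for step. Like the paper, you identify each component's behavior with the single-threshold rule, with the check $v_e\ge T_j$ automatic by niceness; you sum the local guarantees and apply the charging lemma; and you obtain feasibility from the direct-sum lemma. Your extra remarks on the restricted arrival order and the explicit appeal to the ex-ante reduction only spell out details the paper leaves implicit.
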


\begin{proof}
Let
\[
    M'_1,\ldots,M'_\ell
\]
be the matroids returned by the extraction procedure, with corresponding ground
sets \(U_1,\ldots,U_\ell\) and thresholds \(T_1,\ldots,T_\ell\). By \Cref{lem:direct-sum-submatroid}, every set independent in
\[
M'_1\oplus\cdots\oplus M'_\ell
\]
is independent in $M$. Therefore, any set accepted by the algorithm is
independent in \(M\), so the algorithm is feasible for the original matroid.

By \Cref{lem:extraction-nice}, for each \(i=1,\ldots,\ell\), the Bernoulli
instance
\[
    (M'_i,x|_{U_i},v|_{U_i})
\]
is nice. Moreover, by \Cref{lem:nice-values-above-threshold}, every
\(e\in U_i\) satisfies \(v_e\geq T_i\). Hence, on component \(U_i\), the
algorithm is exactly the single-threshold algorithm with threshold \(T_i\),
with the threshold condition being automatically satisfied. Therefore, by
\Cref{lem:nice-matroid-single-threshold}, if \(I_i:=I\cap U_i\), then
\[
    \mathbb E[v(I_i)]
    \ge
    r_{M'_i}(U_i)T_i .
\]
Since the sets \(U_1,\ldots,U_\ell\) partition \(U\), we have
\[
    v(I)=\sum_{i=1}^{\ell}v(I_i).
\]
Therefore,
\[
\begin{aligned}
    \mathbb E[v(I)]
    &=
    \sum_{i=1}^{\ell}\mathbb E[v(I_i)] \\
    &\ge
    \sum_{i=1}^{\ell} r_{M'_i}(U_i)T_i .
\end{aligned}
\]
By \Cref{lem:extraction-charging},
\[
    \sum_{i=1}^{\ell} r_{M'_i}(U_i)T_i
    \ge
    \frac12\sum_{e\in U}x_ev_e .
\]
Thus,
\[
    \mathbb E[v(I)]
    \ge
    \frac12\sum_{e\in U}x_ev_e .
\]

Finally, \(\sum_{e\in U}x_ev_e\) is the value of the ex-ante relaxation, which
upper bounds the prophet benchmark. Hence the algorithm is a
\(1/2\)-approximation.
\end{proof}

\begin{remark}[Polynomial-time implementation]
By \Cref{lem:extraction-polytime}, the extraction procedure and all component
thresholds can be computed in polynomial time. Together with the polynomial
overhead for simulating independence oracles for the extracted minors discussed
above, this implies that, once the ex-ante solution is available, the entire
procedure of this section runs in polynomial time. Hence whenever the ex-ante
relaxation can be solved efficiently, the full algorithm can be implemented in
polynomial time.
\end{remark}

\section{An Almost Non-Adaptive $\frac{1}{2}$-Approximation via Surplus Thresholds}
\label{sec:surplus-algorithm}

In this section, we present another almost non-adaptive algorithm for matroid
prophet inequalities. The approach closely parallels the one developed in the
previous section. There, we defined a particular threshold for every subset of
a given matroid, used these thresholds to define nice instances, and introduced
an extraction procedure that decomposes an arbitrary instance into nice
instances. We then established several structural properties of this
decomposition and used them to obtain a $1/2$-approximation.

Here, we follow the same framework using a different notion of threshold. We
again associate a threshold with every subset of the matroid, define the
corresponding notion of a nice instance, and introduce an analogous extraction
procedure. We will show that this new extraction procedure satisfies
counterparts of the main properties proved in the previous section and again
yields a $1/2$-approximation for matroid prophet inequalities. The reason for introducing this alternative threshold is that, unlike the
threshold used in the previous section, it admits a natural generalization to
the intersection of several matroids. In the next section, we use this
generalization to obtain a $(q+1)$-approximation for the intersection of $q$
matroids.

\begin{definition}[Surplus threshold of a set]
Let $(M,x,v)$ be a Bernoulli instance, where $M=(U,\mathcal I)$ is a
loopless matroid with rank function $r_M$. For every nonempty set
$S\subseteq U$, define the \emph{surplus threshold} $T'(S)$ to be the unique
nonnegative real number satisfying
\[
T'(S)r_M(S)
=
\sum_{e\in S}x_e\bigl(v_e-T'(S)\bigr)^+.
\]
\end{definition}

We next define the analogue of the nice instances from the previous section
with respect to the new threshold. We call such instances surplus-nice.

\begin{definition}[Surplus-nice Bernoulli matroid instance]
A Bernoulli instance $(M,x,v)$ with $M$ loopless is called
\emph{surplus-nice} if
\[
T'(S)\leq T'(U)
\]
for every nonempty set $S\subseteq U$.
\end{definition}

\begin{remark}
Unlike niceness in the previous section, surplus-niceness does not imply that
$v_e\geq T'(U)$ for every element $e$. For example, consider a rank-one
matroid on two parallel elements $a,b$ with
$x_a=x_b=\frac{1}{2}$ and $(v_a,v_b)=(10,1)$. Then the instance is
surplus-nice and $T'(U)=\frac{10}{3}$, while $v_b<T'(U)$. Thus the threshold
condition $v_e\geq T'$ remains essential in the online algorithm of this
section.
\end{remark}

For a surplus-nice instance, we use the single threshold
\[
T':=T'(U).
\]

We first show that, as in the previous section, surplus-niceness is sufficient
for a single fixed threshold to give the desired local guarantee.

\begin{lemma}[Single-threshold guarantee for surplus-nice matroids]
\label{lem:surplus-nice-single-threshold}
Let $(M,x,v)$ be a surplus-nice Bernoulli matroid instance, and let
\[
T':=T'(U).
\]
Consider the single-threshold algorithm that accepts an active element $e$ if
$v_e\geq T'$ and accepting $e$ preserves independence in $M$. If $I$ denotes
the random independent set selected by the algorithm, then
\[
\mathbb E[v(I)]
\geq
r_M(U)T'.
\]
\end{lemma}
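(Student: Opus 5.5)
The plan is to mirror the proof of \Cref{lem:nice-matroid-single-threshold}, with the linear surplus replaced by the truncated surplus. For every $S\subseteq U$ define
\[
p(S):=\sum_{e\in S}x_e\bigl(v_e-T'\bigr)^+ .
\]
\textbf{Step 1 (surplus-niceness gives a rank bound).} We show that $p(S)\le T' r_M(S)$ for every $S$; this is trivial for $S=\emptyset$. Consider the function
\[
g_S(t):=t\,r_M(S)-\sum_{e\in S}x_e(v_e-t)^+ .
\]
Since $M$ is loopless, $r_M(S)\ge1$ for nonempty $S$, so $g_S$ is continuous and strictly increasing. Its unique zero is $T'(S)$. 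Surplus-niceness gives $T'(S)\le T'$, hence $g_S(T')\ge 0$, which is exactly $p(S)\le T'r_M(S)$. For $S=U$ the definition of $T'=T'(U)$ gives equality: $p(U)=T'r_M(U)$.

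\textbf{Step 2 (closure of the accepted set).} Let $I'$ be the set of active elements with $v_e\ge T'$, i.e.\ the elements that pass the threshold test. The algorithm greedily builds a maximal independent subset of $I'$ in arrival order. Therefore $F:=\operatorname{cl}_M(I)=\operatorname{cl}_M(I')$ and $r_M(F)=|I|$. If an element $e$ with $v_e\ge T'$ finds no space when it arrives, it is spanned by previously accepted elements, so $e\in F$. The arrival order is fixed in advance, so whether $e$ has space depends only on the activations of earlier elements and is independent of $e$'s own activation.

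\textbf{Step 3 (accounting).} Only elements with $v_e\ge T'$ are ever accepted, and for them $v_e-T'=(v_e-T')^+$. Hence
\[
\mathbb E\Bigl[\sum_{e\in I}(v_e-T')\Bigr]=\sum_{e}x_e(v_e-T')^+\Pr[e\text{ has space}]\ge p(U)-\mathbb E[p(F)].
\]
Elements with $v_e<T'$ contribute zero on both sides, so they cause no trouble. Then
\[
\mathbb E[v(I)]=T'\mathbb E[|I|]+\mathbb E\Bigl[\sum_{e\in I}(v_e-T')\Bigr]\ge p(U)+\mathbb E\bigl[T'r_M(F)-p(F)\bigr]\ge p(U)=T'r_M(U),
\]
where the last inequality applies Step~1 to the random flat $F$.

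\textbf{Main obstacle.} The delicate point is Step~1: converting the implicit fixed-point definition of $T'(S)$ into the explicit inequality $p(S)\le T'r_M(S)$. This relies on the monotonicity of $g_S$, for which looplessness ($r_M(S)\ge1$) is needed. The remainder is a direct transcription of the earlier argument, with $(\cdot)^+$ handling the elements that the threshold rejects.
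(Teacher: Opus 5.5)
Your proposal is correct and follows the paper's proof essentially step for step: your $p(S)$ is the paper's $y(S)$, and you use the same monotonicity argument for the rank bound, the same closure/space argument, and the same accounting. One small refinement: the rank bound needs only that $g_S$ is non-decreasing, which holds even without looplessness. Looplessness is needed only so that $T'(S)$ is well defined.
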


\begin{proof}
Define
\[
y_e:=x_e(v_e-T')^+
\]
for every $e\in U$. By the definition of $T'=T'(U)$,
\[
y(U)
=
\sum_{e\in U}x_e(v_e-T')^+
=
T'r_M(U).
\]

Moreover, since the instance is surplus-nice, for every nonempty set
$S\subseteq U$ we have
\[
T'(S)\leq T'.
\]
The function
\[
t\longmapsto
\sum_{e\in S}x_e(v_e-t)^+-t\,r_M(S)
\]
is non-increasing in $t$ and equals zero at $t=T'(S)$. Therefore,
\[
\sum_{e\in S}x_e(v_e-T')^+
\leq
T'r_M(S).
\]
Equivalently,
\begin{equation}
\label{eq:surplus-density-bound}
y(S)\leq T'r_M(S)
\qquad
\text{for every }S\subseteq U.
\end{equation}

This also holds for $S=\varnothing$. Let $I$ be the random independent set
selected by the algorithm, and let
\[
F:=\operatorname{cl}_M(I)
\]
be its closure in $M$. Let $I'$ be the set of active elements $e$ satisfying
$v_e\geq T'$. Since the algorithm greedily accepts every element of $I'$ whose
addition preserves independence, $I$ is a maximal independent subset of $I'$.
Hence
\[
F=\operatorname{cl}_M(I)=\operatorname{cl}_M(I').
\]
Since $I$ is independent and spans $F$,
\[
r_M(F)=|I|.
\]

Fix an element $e$ with $v_e\geq T'$. If $e$ has no space when it arrives,
then it is spanned by elements that have already been accepted. Since these
elements are contained in $I$, it follows that $e\in F$. Hence
\[
\Pr[e\text{ has space}]
\geq
1-\Pr[e\in F].
\]
The arrival order is fixed before the Bernoulli realizations. Therefore, the
event that $e$ has space depends only on the activations of elements preceding
$e$ in this fixed order and is independent of the activation of $e$. Thus,
\[
\begin{aligned}
\mathbb E\left[\sum_{e\in I}(v_e-T')\right]
&=
\sum_{e\in U}x_e(v_e-T')^+\Pr[e\text{ has space}]\\
&\geq
\sum_{e\in U}x_e(v_e-T')^+\bigl(1-\Pr[e\in F]\bigr)\\
&=
y(U)-\mathbb E[y(F)].
\end{aligned}
\]

Consequently,
\[
\begin{aligned}
\mathbb E[v(I)]
&=
T'\mathbb E[|I|]
+
\mathbb E\left[\sum_{e\in I}(v_e-T')\right]\\
&\geq
y(U)+\mathbb E\left[T'r_M(F)-y(F)\right]\\
&\geq
y(U).
\end{aligned}
\]
The last inequality follows from \eqref{eq:surplus-density-bound} applied to
the random set $F$. Finally, $y(U)=T'r_M(U)$, and hence
\[
\mathbb E[v(I)]
\geq
T'r_M(U).
\]
\end{proof}

\subsection{The Extraction Procedure}
\label{subsec:surplus-extraction}

We now describe the extraction procedure. Given a Bernoulli instance
$(M,x,v)$ with $M$ loopless, the procedure repeatedly extracts a set of
maximum surplus threshold, restricts the current matroid to this set, and
contracts it. The following lemma shows that a maximum-threshold set can always
be chosen to be a flat.

\begin{lemma}
\label{lem:surplus-max-flat}
Let $(M,x,v)$ be a Bernoulli instance with $M$ loopless. Then the maximum of
$T'(S)$ over all nonempty sets $S\subseteq U$ is attained by a flat of $M$.
More precisely, for every nonempty set $S\subseteq U$,
\[
T'(S)
\leq
T'(\operatorname{cl}_M(S)).
\]
\end{lemma}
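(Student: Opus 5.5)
The plan is to compare $S$ with its closure $F:=\operatorname{cl}_M(S)$ through the defining equation of the surplus threshold. We use the auxiliary function
\[
\phi_S(t):=\sum_{e\in S}x_e\bigl(v_e-t\bigr)^+-t\,r_M(S),\qquad t\ge 0.
\]
First we record the basic facts. Since $M$ is loopless and $S\neq\emptyset$, we have $r_M(S)\ge 1$. Hence $\phi_S$ is continuous and strictly decreasing, with $\phi_S(0)\ge 0$ and $\phi_S(t)\to-\infty$. Therefore $T'(S)$ is its unique zero, and for every $t\ge 0$ we have $\phi_S(t)\ge 0$ if and only if $t\le T'(S)$. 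The same holds for $F$, which is nonempty because it contains $S$, and loopless because $M$ is.

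Next we set $t:=T'(S)$ and evaluate $\phi_F(t)$. The closure has the same rank, $r_M(F)=r_M(S)$, and $S\subseteq F$. Every summand $x_e(v_e-t)^+$ is nonnegative, so
\[
\phi_F(t)=\sum_{e\in F}x_e(v_e-t)^+-t\,r_M(S)\ \ge\ \sum_{e\in S}x_e(v_e-t)^+-t\,r_M(S)=\phi_S(t)=0.
\]
By the monotonicity characterization above, $\phi_F(T'(S))\ge 0$ gives $T'(S)\le T'(F)$, which is the displayed inequality. The maximization claim follows immediately. Take any nonempty maximizer $S$ of $T'$; there are finitely many sets, so one exists. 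Then $\operatorname{cl}_M(S)$ is a flat, since closure is idempotent, and its value is at least $T'(S)$. Hence it is also a maximizer.

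There is no real obstacle here; the argument is essentially one line. The only point needing care is the well-definedness and strict monotonicity of $\phi$, and this is exactly where looplessness is used: it guarantees $r_M(S)\ge1$, so the zero is unique and the sign of $\phi$ determines the order. Nonnegativity of $x$ is the other ingredient, since it makes adding the elements of $F\setminus S$ only increase the surplus sum while the rank term stays unchanged.
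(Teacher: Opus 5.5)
Your proposal is correct and follows the paper's proof essentially verbatim: evaluate the defining function for $F=\operatorname{cl}_M(S)$ at $t=T'(S)$, use $r_M(F)=r_M(S)$ and nonnegativity of the extra summands to get a nonnegative value, and conclude $T'(F)\ge T'(S)$ from strict monotonicity, which holds because looplessness gives $r_M(F)>0$. Your explicit verification that $T'(\cdot)$ is well defined, and your remark that the closure of a maximizer is a flat maximizer, match the paper's concluding sentence as well.
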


\begin{proof}
Fix a nonempty set $S\subseteq U$, and let
\[
F:=\operatorname{cl}_M(S).
\]
Since taking the closure does not change the rank,
\[
r_M(F)=r_M(S).
\]

Let
\[
T:=T'(S).
\]
By the definition of the surplus threshold,
\[
Tr_M(S)
=
\sum_{e\in S}x_e(v_e-T)^+.
\]
Since $S\subseteq F$ and all terms in the sum are nonnegative,
\begin{equation}
\label{eq:surplus-closure-bound}
\sum_{e\in F}x_e(v_e-T)^+
\geq
\sum_{e\in S}x_e(v_e-T)^+
=
Tr_M(S)
=
Tr_M(F).
\end{equation}

Now define
\[
g_F(t)
:=
\sum_{e\in F}x_e(v_e-t)^+
-
t r_M(F).
\]
By definition, $T'(F)$ is the unique nonnegative value satisfying
\[
g_F(T'(F))=0.
\]
By \eqref{eq:surplus-closure-bound},
\[
g_F(T)\geq 0.
\]
Since $F$ is nonempty and $M$ is loopless, we have $r_M(F)>0$. Hence
$g_F$ is strictly decreasing. It follows that
\[
T'(F)\geq T=T'(S).
\]

Therefore,
\[
T'(S)
\leq
T'(\operatorname{cl}_M(S)).
\]
Consequently, the closure of any maximum-threshold set is also a
maximum-threshold set, and hence a maximum surplus threshold is attained by a
flat.
\end{proof}

\begin{remark}
The flat property above is specific to the surplus threshold and need not hold
for the threshold used in \Cref{sec:first-algorithm}. For example, in the rank-one matroid on
two parallel elements $a,b$ with $x_a=x_b=\frac{1}{2}$ and
$(v_a,v_b)=(10,1)$, the threshold of \Cref{sec:first-algorithm} satisfies
$T(\{a\})=\frac{10}{3}$, whereas
$T(\operatorname{cl}_M(\{a\}))=T(\{a,b\})=\frac{11}{4}$. Thus taking the
closure can decrease the old threshold, and a maximizing set need not be a
flat.
\end{remark}

By \Cref{lem:surplus-max-flat}, at every iteration we may choose a
maximum-threshold set that is a flat of the current matroid. Since contracting
a flat does not create loops, every residual matroid produced by the extraction
procedure remains loopless.

\begin{algorithm}[H]
\caption{\textsc{Surplus-Extract}$(M,x,v)$}
\label{alg:Surplus-Extract}
\begin{algorithmic}[1]
\State $i\gets 1$, $M_i\gets M$
\While{$M_i$ is not surplus-nice}
\State Choose a nonempty flat $U_i\subseteq U(M_i)$ maximizing
$T'_{M_i}(S)$ over all nonempty sets $S\subseteq U(M_i)$
\State $M'_i\gets M_i|U_i$, $T'_i\gets T'_{M'_i}(U_i)$,
$M_{i+1}\gets M_i/U_i$, $i\gets i+1$
\EndWhile
\State $U_i\gets U(M_i)$, $M'_i\gets M_i$,
$T'_i\gets T'_{M'_i}(U_i)$
\State \Return $(M'_1,U_1,T'_1),\ldots,(M'_\ell,U_\ell,T'_\ell)$
\end{algorithmic}
\end{algorithm}

Let the output of the procedure be
\[
(M'_1,U_1,T'_1),\ldots,(M'_\ell,U_\ell,T'_\ell).
\]
The sets
\[
U_1,\ldots,U_\ell
\]
form a partition of $U$. Moreover, define
\[
A_i:=U_1\cup\cdots\cup U_i,
\qquad
A_0:=\emptyset.
\]
By construction,
\[
M_i=M/A_{i-1},
\]
and therefore
\[
M'_i
=
M_i|U_i
=
(M/A_{i-1})|U_i.
\]

Since each $U_i$ is a flat of the current matroid $M_i$, the contraction
\[
M_{i+1}=M_i/U_i
\]
is loopless whenever $M_i$ is loopless. Thus, by induction, every residual
matroid $M_i$ appearing in the extraction procedure is loopless.

We next establish the main properties of the extraction procedure.

\begin{lemma}
\label{lem:extraction-Surplus-polytime}
Given an independence oracle for $M$, the extraction procedure can be
implemented in polynomial time. Moreover, at every iteration, a
maximum-surplus-threshold flat can be found in polynomial time. (See the proof in \Cref{app-lem:extraction-Surplus-polytime}.)
\end{lemma}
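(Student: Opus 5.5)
The plan is to reduce the search for a maximum-surplus-threshold set to a sequence of parametric submodular minimization problems, and then run a Newton/Dinkelbach-type iteration (or binary search with an exactness argument) on the threshold parameter. The loop of \Cref{alg:Surplus-Extract} runs at most $|U|$ times, since each iteration removes a nonempty set. Independence oracles for the minors $M_i=M/A_{i-1}$ are simulated from the oracle for $M$: rank in $M/A$ is computed greedily via $r_M(S\cup A)-r_M(A)$. So it suffices to show two things: a single maximum-threshold flat can be found in polynomial time, and surplus-niceness can be tested in polynomial time. The test is immediate once the maximum threshold $T^*$ is computed, since $M_i$ is surplus-nice iff $T'(U(M_i))\ge T^*$. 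Computing $T'(S)$ for a given $S$ is itself easy: $g_S(t)=\sum_{e\in S}x_e(v_e-t)^+-t\,r(S)$ is piecewise linear and strictly decreasing with breakpoints at the $v_e$. We sort these breakpoints, locate the bracketing interval, and solve one linear equation.

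For the core step, fix $t\ge 0$ and consider
\[
h(t):=\max_{S\subseteq U}\Bigl(\sum_{e\in S}x_e(v_e-t)^+-t\,r(S)\Bigr).
\]
The objective is a modular function minus $t$ times a rank function, so it is supermodular, and the maximization is a submodular minimization solvable in polynomial time with a rank oracle. Since each $g_S$ is strictly decreasing for nonempty $S$ in a loopless matroid, we have $T'(S)\ge t$ iff $g_S(t)\ge 0$. Hence $T^*:=\max_S T'(S)$ is the largest $t$ at which some nonempty $S$ has $g_S(t)\ge 0$. To handle nonemptiness, I would either force a specific element $e$ into $S$ and enumerate over $e$, or simply note that $h(t)>0$ already forces $S\neq\emptyset$.

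I would then run the Dinkelbach iteration. Start with $t_0=T'(U)$. Given $t_k$, maximize $g_S(t_k)$, taking a maximizer $S_k$ of maximal cardinality. If the maximum value is $0$ (attained by a nonempty set with $T'=t_k$), stop. Otherwise set $t_{k+1}=T'(S_k)>t_k$.

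The main obstacle is bounding the number of iterations, because the objective is only piecewise linear in $t$: it is linear in $t$ with the set $\{e: v_e>t\}$ fixed. I would handle this by restricting to one of the at most $|U|+1$ intervals between consecutive sorted values $v_e$. Inside an interval $J$, each $g_S(t)=a_S-t\,b_S$ is affine, with $b_S=r(S)+x(S\cap\{v>t\})$, and the maximum is a genuine linear fractional problem $\max_S a_S/b_S$ subject to $T'(S)\in J$. The standard Newton analysis for parametric submodular minimization (as in Radzik and Goemans et al.) gives polynomially many iterations in each interval. Alternatively, since the $t_k$ are ratios of polynomially bounded-size rationals, binary search to sufficient precision followed by exact recovery also works. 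Scanning the intervals from the top down, the first interval containing a $t$ with $h(t)\ge 0$ contains $T^*$.

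Finally, given a maximizing set $S^*$, I would return $\operatorname{cl}_{M_i}(S^*)$. This closure is computable with $|U|$ rank queries, and by \Cref{lem:surplus-max-flat} it is still a maximizer. This gives the required flat and keeps the residual matroids loopless.
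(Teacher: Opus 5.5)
Your proposal is correct and takes the same route as the paper. Both characterize $\max_S T'(S)>\lambda$ by submodular minimization of $\lambda r(S)-\sum_{e\in S}x_e(v_e-\lambda)^+$, search over $\lambda$, and replace the maximizer by its closure via \Cref{lem:surplus-max-flat}, with at most $|U|$ outer iterations and minor oracles simulated from $M$. Your breakpoint-interval reduction to a ratio $\max_S a_S/b_S$ (Newton/Dinkelbach or binary search with exact recovery) just spells out what the paper calls ``standard parametric submodular minimization''; the one slip is that the top-down scan must test the nonempty-restricted maximum (e.g.\ by your forced-element enumeration), since $h(t)\ge 0$ always holds via $S=\emptyset$.
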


\begin{lemma}
For every $i=1,\ldots,\ell$, the Bernoulli instance
\[
(M'_i,x|_{U_i},v|_{U_i})
\]
is surplus-nice.
\end{lemma}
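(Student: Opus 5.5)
The plan is to mirror the proof of \Cref{lem:extraction-nice}, splitting into the two ways a component can arise in \Cref{alg:Surplus-Extract}. The first step is to record that each $M'_i$ is loopless, so that the surplus threshold $T'_{M'_i}$ is well defined. Every residual matroid $M_i$ is loopless, as shown above by induction using that each extracted $U_i$ is a flat. A restriction of a loopless matroid is loopless, so $M'_i=M_i|U_i$ is loopless as well.

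Next I handle the last component. If $i=\ell$, the procedure stopped because $M_\ell$ is surplus-nice. In that case $M'_\ell=M_\ell$ and $U_\ell=U(M_\ell)$, so the claim is exactly the stopping condition.

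For $i<\ell$, the set $U_i$ was chosen to maximize $T'_{M_i}(S)$ over all nonempty $S\subseteq U(M_i)$. The key observation is that $T'(S)$ depends on the matroid only through $r(S)$, together with the fixed data $x|_S$ and $v|_S$. For every nonempty $S\subseteq U_i$ we have $r_{M'_i}(S)=r_{M_i}(S)$, because restriction preserves ranks of subsets. The defining equation
\[
t\,r(S)=\sum_{e\in S}x_e(v_e-t)^+
\]
is therefore identical in $M'_i$ and $M_i$. Its unique nonnegative solution is the same in both, which gives $T'_{M'_i}(S)=T'_{M_i}(S)$. Uniqueness holds because $r(S)>0$ by looplessness, so the left side minus the right side is strictly increasing in $t$. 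Maximality of $U_i$ then yields
\[
T'_{M'_i}(S)=T'_{M_i}(S)\le T'_{M_i}(U_i)=T'_{M'_i}(U_i)
\]
for all nonempty $S\subseteq U_i$, which is precisely surplus-niceness of $(M'_i,x|_{U_i},v|_{U_i})$.

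I expect no real obstacle here. The only point needing care is the well-definedness and equality of $T'$ under restriction, which is where looplessness of the residual matroids, and hence the flat choice from \Cref{lem:surplus-max-flat}, is used.
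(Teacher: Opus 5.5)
Your proposal is correct and follows essentially the same route as the paper. Both arguments get looplessness of $M'_i$ from the flat choice and restriction, use the stopping condition for $i=\ell$, and for $i<\ell$ combine $T'_{M'_i}(S)=T'_{M_i}(S)$ (from $r_{M'_i}(S)=r_{M_i}(S)$) with maximality of $U_i$; your remark that the root is unique by strict monotonicity when $r(S)>0$ is a detail the paper leaves implicit.
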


\begin{proof}
Fix $i\in\{1,\ldots,\ell\}$. By construction, every residual matroid $M_i$
is loopless. Since
\[
M'_i=M_i|U_i,
\]
and restriction does not create loops, $M'_i$ is also loopless. If $i=\ell$, then the extraction procedure terminates because the remaining
instance is surplus-nice, and the claim follows immediately. Now suppose $i<\ell$. By construction, $U_i$ maximizes
\[
T'_{M_i}(S)
\]
over all nonempty sets $S\subseteq U(M_i)$. Since
\[
M'_i=M_i|U_i,
\]
for every nonempty set $S\subseteq U_i$ we have
\[
r_{M'_i}(S)=r_{M_i}(S),
\]
and therefore
\[
T'_{M'_i}(S)=T'_{M_i}(S).
\]

By the maximality of $U_i$,
\[
T'_{M_i}(S)\leq T'_{M_i}(U_i).
\]
Hence,
\[
T'_{M'_i}(S)
\leq
T'_{M'_i}(U_i)
\]
for every nonempty set $S\subseteq U_i$. Thus $M'_i$ is loopless and
\[
T'_{M'_i}(S)\leq T'_{M'_i}(U_i)
\]
for every nonempty $S\subseteq U_i$. Therefore,
\[
(M'_i,x|_{U_i},v|_{U_i})
\]
is surplus-nice.
\end{proof}

\begin{lemma}
The extracted thresholds satisfy
\[
T'_1\geq T'_2\geq\cdots\geq T'_\ell.
\]
\end{lemma}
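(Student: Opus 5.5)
It suffices to prove $T'_{i+1}\le T'_i$ for each $i<\ell$, mirroring the proof of \Cref{lem:extraction-thresholds}. Fix such an $i$ and a nonempty $S\subseteq U(M_{i+1})$. The aim is to show $T'_{M_{i+1}}(S)\le T'_i$. Applying this to the set chosen at step $i+1$, or to the whole remaining ground set if the procedure stops, gives the claim. Since every residual matroid is loopless, the surplus threshold is well defined in $M_{i+1}$. For every nonempty $S$ in a loopless matroid the function
\[
g_S(t):=\sum_{e\in S}x_e(v_e-t)^+-t\,r(S)
\]
is strictly decreasing with unique root $T'(S)$. Hence it is enough to verify $g^{M_{i+1}}_S(T'_i)\le 0$, that is,
\[
\sum_{e\in S}x_e(v_e-T'_i)^+\le T'_i\,r_{M_{i+1}}(S).
\]

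For this I use maximality of $U_i$ in $M_i$, applied to the set $U_i\cup S$. Since $T'_{M_i}(U_i\cup S)\le T'_{M_i}(U_i)=T'_i$ and $g_{U_i\cup S}$ is non-increasing with root $T'_{M_i}(U_i\cup S)$, we get $g_{U_i\cup S}(T'_i)\le 0$:
\[
\sum_{e\in U_i}x_e(v_e-T'_i)^+ +\sum_{e\in S}x_e(v_e-T'_i)^+\le T'_i\,r_{M_i}(U_i\cup S).
\]
This uses that $U_i$ and $S$ are disjoint, so the surplus sum splits. On the other hand, by the definition of $T'_i$, and since $r_{M'_i}=r_{M_i}$ on $U_i$,
\[
\sum_{e\in U_i}x_e(v_e-T'_i)^+=T'_i\,r_{M_i}(U_i).
\]
Subtracting the equality from the inequality and using the contraction identity $r_{M_{i+1}}(S)=r_{M_i}(U_i\cup S)-r_{M_i}(U_i)$ gives exactly the desired inequality.

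The only delicate point is the monotonicity step: we must pass from a comparison of roots to a sign condition on $g$ at $T'_i$, and back. The forward direction only needs $g$ to be non-increasing. The backward direction needs strict decrease, which follows from looplessness of $M_{i+1}$ because $r_{M_{i+1}}(S)>0$; this is guaranteed by extracting flats, via \Cref{lem:surplus-max-flat}. The rest is the same telescoping argument as in \Cref{sec:first-algorithm}, with $w(\cdot)-T\,x(\cdot)$ replaced by the surplus sum, which is still additive over disjoint sets at a fixed $t$.
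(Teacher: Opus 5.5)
Your proposal is correct and follows essentially the same route as the paper's proof. Both apply the maximality of $U_i$ to the set $U_i\cup S$, subtract the defining equality for $T'_i$, use the contraction identity, and conclude via $U_{i+1}$ or the final remaining ground set. Your explicit argument through the monotone function $g_S$, which turns the comparison of roots into a sign condition at $T'_i$ and back, makes precise a step the paper passes over with ``by the definition of the surplus threshold.''
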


\begin{proof}
We show that
\[
T'_{i+1}\leq T'_i
\]
for every $i=1,\ldots,\ell-1$. Fix $i$, and consider any nonempty set
\[
S\subseteq U(M_{i+1}).
\]
Since every residual matroid is loopless, $r_{M_{i+1}}(S)>0$ and
$T'_{M_{i+1}}(S)$ is well defined. By construction,
\[
M_{i+1}=M_i/U_i,
\]
and hence
\[
r_{M_{i+1}}(S)
=
r_{M_i}(U_i\cup S)-r_{M_i}(U_i).
\]

Since $U_i$ maximizes the surplus threshold in $M_i$,
\[
T'_{M_i}(U_i\cup S)\leq T'_i.
\]
Therefore, by the definition of the surplus threshold,
\begin{equation}
\label{eq:surplus-union-bound}
\sum_{e\in U_i\cup S}x_e(v_e-T'_i)^+
\leq
T'_i r_{M_i}(U_i\cup S).
\end{equation}
On the other hand, since
\[
T'_i=T'_{M_i}(U_i),
\]
we have
\begin{equation}
\label{eq:surplus-piece-equality}
\sum_{e\in U_i}x_e(v_e-T'_i)^+
=
T'_i r_{M_i}(U_i).
\end{equation}
Subtracting \eqref{eq:surplus-piece-equality} from
\eqref{eq:surplus-union-bound} gives
\[
\sum_{e\in S}x_e(v_e-T'_i)^+
\leq
T'_i
\bigl(
r_{M_i}(U_i\cup S)-r_{M_i}(U_i)
\bigr)
=
T'_i r_{M_{i+1}}(S).
\]
It follows that
\begin{equation}
\label{eq:surplus-residual-threshold-bound}
T'_{M_{i+1}}(S)\leq T'_i.
\end{equation}

If $i+1<\ell$, then $U_{i+1}$ maximizes the surplus threshold in
$M_{i+1}$, and therefore
\[
T'_{i+1}
=
T'_{M_{i+1}}(U_{i+1})
\leq T'_i.
\]
If $i+1=\ell$, then $U_\ell=U(M_\ell)$, so applying \eqref{eq:surplus-residual-threshold-bound}
with $S=U_\ell$ gives
\[
T'_\ell\leq T'_i.
\]

Thus,
\[
T'_1\geq T'_2\geq\cdots\geq T'_\ell.
\]
\end{proof}

\begin{lemma}
For every \(k=1,\ldots,\ell\),
\[
    \sum_{i=1}^k r_{M'_i}(U_i)
    \ge
    x(U_1\cup\cdots\cup U_k).
\]
\end{lemma}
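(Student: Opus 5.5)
This is the same telescoping argument as in \Cref{lem:extraction-rank-dominance}. It uses only the chain structure $M_i=M/A_{i-1}$ and $x\in P_M$, not the particular threshold used to choose the pieces. The surplus extraction procedure also satisfies $M'_i=M_i|U_i=(M/A_{i-1})|U_i$, so the earlier proof carries over verbatim.

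The plan has three steps.

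\begin{enumerate}
\item Since restriction does not change the rank of subsets of $U_i$,
\[
r_{M'_i}(U_i)=r_{M_i}(U_i)=r_{M/A_{i-1}}(U_i).
\]
\item The contraction identity $r_M(A_{i-1}\cup S)=r_M(A_{i-1})+r_{M/A_{i-1}}(S)$, applied with $S=U_i$ and using $A_{i-1}\cup U_i=A_i$, gives
\[
r_{M'_i}(U_i)=r_M(A_i)-r_M(A_{i-1}).
\]
\item Summing over $i=1,\ldots,k$, the sum telescopes to $r_M(A_k)-r_M(A_0)=r_M(A_k)$, because $r_M(\emptyset)=0$.
\end{enumerate}

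Finally, $x\in P_M$ gives the rank inequality $x(A_k)\le r_M(A_k)$ for the set $A_k=U_1\cup\cdots\cup U_k$, which is the claimed bound.

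The only point needing care, and the one I would check explicitly, is that $M_i=M/A_{i-1}$ indeed holds. This follows by induction from $M_{i+1}=M_i/U_i$ and the standard identity $(M/A)/B=M/(A\cup B)$ for disjoint $A,B$. The flat and looplessness properties of the surplus extraction play no role here. For the last index, where $U_\ell$ is the whole residual ground set, the same computation applies since $M'_\ell=M_\ell=(M/A_{\ell-1})|U_\ell$.

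The lemma needs $x\in P_M$, which is the standing assumption for instances arising from the ex-ante relaxation. No serious obstacle is expected.
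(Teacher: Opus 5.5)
Your proposal is correct and matches the paper's proof, which simply notes that the telescoping argument of \Cref{lem:extraction-rank-dominance} carries over unchanged. That argument needs only the chain structure $M'_i=(M/A_{i-1})|U_i$ and $x\in P_M$, not the choice of threshold. Your explicit check that $M_i=M/A_{i-1}$ via $(M/A)/B=M/(A\cup B)$ is a welcome detail that the paper leaves implicit.
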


\begin{proof}
This is exactly the same argument as in the corresponding \cref{lem:extraction-rank-dominance}, since it depends only on the contraction structure of the
extraction and on $x\in P_M$, and not on the definition of the threshold.
\end{proof}

\begin{lemma}
The extracted thresholds satisfy
\[
\sum_{i=1}^{\ell} r_{M'_i}(U_i)T'_i
\geq
\frac{1}{2}\sum_{e\in U}x_ev_e.
\]
\end{lemma}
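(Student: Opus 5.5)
Our plan is to mirror the summation-by-parts argument of \Cref{lem:extraction-charging}, with one additional step to control the elements whose value lies below their component threshold. For each $i$, write $R_i:=r_{M'_i}(U_i)$ and $X_i:=x(U_i)$. Since $T'_i=T'_{M'_i}(U_i)$, the definition of the surplus threshold gives
\[
\sum_{e\in U_i}x_e(v_e-T'_i)^+=T'_iR_i .
\]

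The first step is to bound the contribution of each piece. For every $e$ we have the pointwise inequality
\[
v_e\le (v_e-T'_i)^+ + T'_i .
\]
Multiplying by $x_e$ and summing over $e\in U_i$ yields
\[
w(U_i)\le T'_iR_i+T'_iX_i .
\]
Note that this is an inequality rather than the equality used in \Cref{lem:extraction-charging}. The loss arises from elements with $v_e<T'_i$. It goes in the favourable direction, so the bound is still usable. Summing over $i$ gives
\[
\sum_{e\in U}x_ev_e\le\sum_{i=1}^{\ell}(R_i+X_i)T'_i .
\]

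The second step is to show that
\[
\sum_i R_iT'_i\ge\sum_i X_iT'_i .
\]
Combining this with the first step gives
\[
\sum_i R_iT'_i\ge \tfrac12\sum_i(R_i+X_i)T'_i\ge\tfrac12\sum_e x_ev_e .
\]
To prove the second step, set $D_k:=\sum_{i\le k}(R_i-X_i)$. The preceding rank-dominance lemma gives $D_k\ge 0$ for all $k$, and the preceding monotonicity lemma gives $T'_1\ge\cdots\ge T'_\ell\ge 0$. Abel summation then gives
\[
\sum_i(R_i-X_i)T'_i=\sum_{i<\ell}D_i(T'_i-T'_{i+1})+D_\ell T'_\ell ,
\]
and every term on the right-hand side is nonnegative.

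There is no serious obstacle here. The only point needing care is the first step: in contrast to \Cref{sec:first-algorithm}, elements can lie below their threshold, so the identity $W_i=(R_i+X_i)T_i$ must be replaced by the inequality derived from $v_e\le(v_e-t)^++t$. Nonnegativity of the $T'_i$ holds by definition.
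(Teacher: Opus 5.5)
Your proposal is correct and follows essentially the same route as the paper's proof. Both use the pointwise bound $v_e\le (v_e-T'_i)^+ + T'_i$ to get $w(U_i)\le (R_i+X_i)T'_i$, and then apply summation by parts with $D_k\ge 0$ from rank dominance and the monotonicity $T'_1\ge\cdots\ge T'_\ell\ge 0$.
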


\begin{proof}
For every $i$, let
\[
R_i:=r_{M'_i}(U_i),
\qquad
X_i:=x(U_i).
\]
Since
\[
T'_i=T'_{M'_i}(U_i),
\]
the definition of the surplus threshold gives
\[
R_iT'_i
=
\sum_{e\in U_i}x_e(v_e-T'_i)^+.
\]

For every $e\in U_i$,
\[
v_e
=
(v_e-T'_i)^+
+
\min\{v_e,T'_i\}
\leq
(v_e-T'_i)^+ + T'_i.
\]
Multiplying by $x_e$ and summing over $U_i$, we obtain
\[
\sum_{e\in U_i}x_ev_e
\leq
R_iT'_i+X_iT'_i.
\]
Therefore,
\[
\sum_{e\in U}x_ev_e
\leq
\sum_{i=1}^{\ell}R_iT'_i
+
\sum_{i=1}^{\ell}X_iT'_i.
\]

It remains to compare the two terms on the right. By the previous lemma,
for every $k=1,\ldots,\ell$,
\[
\sum_{i=1}^k X_i
\leq
\sum_{i=1}^k R_i,
\]
and by the monotonicity lemma,
\[
T'_1\geq T'_2\geq\cdots\geq T'_\ell\geq 0.
\]
Hence, by summation by parts,
\[
\sum_{i=1}^{\ell}X_iT'_i
\leq
\sum_{i=1}^{\ell}R_iT'_i.
\]
Indeed, writing
\[
D_k:=\sum_{i=1}^k(R_i-X_i)\geq 0,
\]
we have
\[
\sum_{i=1}^{\ell}(R_i-X_i)T'_i
=
\sum_{k=1}^{\ell-1}
D_k(T'_k-T'_{k+1})
+
D_\ell T'_\ell
\geq 0.
\]

Consequently,
\[
\sum_{e\in U}x_ev_e
\leq
2\sum_{i=1}^{\ell}R_iT'_i.
\]
Substituting $R_i=r_{M'_i}(U_i)$ gives
\[
\sum_{i=1}^{\ell}r_{M'_i}(U_i)T'_i
\geq
\frac{1}{2}\sum_{e\in U}x_ev_e,
\]
as claimed.
\end{proof}

\subsection{The Algorithm}
\label{subsec:surplus-online}

We now describe the final online algorithm. All thresholds and all additional
constraints are computed before the arrival process begins.

\begin{algorithm}[H]
\caption{\textsc{FixedThresholdMatroidProphet-Surplus}}
\begin{algorithmic}[1]
\State Solve the ex-ante relaxation and obtain $x\in P_M$.
\State Reduce to the Bernoulli instance $(M,x,v)$ and remove all elements
with $x_e=0$.
\State Run \textsc{Surplus-Extract} and obtain
\[
(M'_1,U_1,T'_1),\ldots,(M'_\ell,U_\ell,T'_\ell).
\]
\State $I\gets\emptyset$.
\For{each arriving element $e$}
\State Let $j$ be the index such that $e\in U_j$.
\If{$e$ is active, $v_e\geq T'_j$, and
\[
(I\cap U_j)\cup\{e\}\in\mathcal I(M'_j)
\]
}
\State $I\gets I\cup\{e\}$.
\EndIf
\EndFor
\State \Return $I$.
\end{algorithmic}
\end{algorithm}

\begin{theorem}
The algorithm above satisfies
\[
\mathbb E[v(I)]
\geq
\frac{1}{2}\sum_{e\in U}x_ev_e.
\]
Consequently, it is a $\frac{1}{2}$-approximation with respect to the prophet
benchmark.
\end{theorem}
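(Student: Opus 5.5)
The proof will mirror that of \Cref{thm:main}, with the surplus-threshold lemmas from this section taking the place of their counterparts from \Cref{sec:first-algorithm}.

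\emph{Feasibility.} Consider the output $(M'_1,U_1,T'_1),\ldots,(M'_\ell,U_\ell,T'_\ell)$ of \textsc{Surplus-Extract}. By construction, $M'_i=(M/A_{i-1})|U_i$, and the sets $U_1,\ldots,U_\ell$ partition $U$. The algorithm keeps $I\cap U_j$ independent in $M'_j$ for every $j$, so $I$ is independent in the direct sum $M'_1\oplus\cdots\oplus M'_\ell$. By \Cref{lem:direct-sum-submatroid}, $I$ is then independent in $M$. Removing elements with $x_e=0$ changes nothing, since those elements contribute nothing to $\sum_e x_ev_e$.

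\emph{Decomposition across components.} The algorithm's decision on an element $e\in U_j$ depends only on three things: whether $e$ is active, whether $v_e\ge T'_j$, and the current set $I\cap U_j$. The restriction of the run to $U_j$, taken in the induced arrival order, is therefore exactly the single-threshold algorithm of \Cref{lem:surplus-nice-single-threshold} on the Bernoulli instance $(M'_j,x|_{U_j},v|_{U_j})$ with threshold $T'_j=T'_{M'_j}(U_j)$.

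To invoke that lemma, I need this instance to be surplus-nice with $M'_j$ loopless. Both facts come from the lemma above stating that every extracted piece is surplus-nice. Looplessness holds because each $U_i$ is chosen as a flat, via \Cref{lem:surplus-max-flat}. The induced order is fixed before the activations, as that lemma requires. Hence
\[
\mathbb E[v(I\cap U_j)]\ge r_{M'_j}(U_j)\,T'_j .
\]

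\emph{Summing and charging.} Since the $U_j$ partition $U$, linearity of expectation gives
\[
\mathbb E[v(I)]=\sum_j \mathbb E[v(I\cap U_j)]\ge \sum_j r_{M'_j}(U_j)T'_j .
\]
The charging lemma for surplus thresholds, which combines the monotonicity $T'_1\ge\cdots\ge T'_\ell$ with the rank-dominance lemma and summation by parts, bounds the right-hand side below by $\tfrac12\sum_e x_ev_e$.

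\emph{Transfer to the prophet benchmark.} \Cref{lem:ex-ante-bernoulli} with $\mathcal P=P_M$ carries the bound back to the original distributions, preserving almost non-adaptivity and order-obliviousness. Since $\sum_e x_ev_e$ equals the ex-ante value, which is at least the prophet's, the algorithm is a $\tfrac12$-approximation.

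\emph{Main obstacle.} The only real subtlety is the reduction to independent components. One must check that acceptance in component $j$ is unaffected by the other components, and that the independence between an element's activation and its ``has space'' event survives the restriction. Both hold because the feasibility test in each component looks only at $I\cap U_j$, and activations are independent across elements.
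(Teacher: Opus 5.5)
Your proposal is correct and follows the paper's proof essentially step for step: feasibility via \Cref{lem:direct-sum-submatroid}, the per-component bound $\mathbb E[v(I\cap U_j)]\ge r_{M'_j}(U_j)T'_j$ from surplus-niceness and \Cref{lem:surplus-nice-single-threshold}, summation over the partition, and the surplus charging lemma. Your remarks on why the run restricted to each $U_j$ is exactly the single-threshold algorithm, and your appeal to \Cref{lem:ex-ante-bernoulli} for the transfer to the prophet benchmark, spell out steps the paper leaves implicit.
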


\begin{proof}
Let
\[
(M'_1,U_1,T'_1),\ldots,(M'_\ell,U_\ell,T'_\ell)
\]
be the output of \textsc{Surplus-Extract}. By the direct-sum submatroid \cref{lem:direct-sum-submatroid}, every set that is
independent in
\[
M'_1\oplus\cdots\oplus M'_\ell
\]
is independent in the original matroid $M$. Hence the set accepted by the
algorithm is feasible in $M$. For every $i=1,\ldots,\ell$, the Bernoulli instance
\[
(M'_i,x|_{U_i},v|_{U_i})
\]
is surplus-nice. Therefore, by the single-threshold guarantee for
surplus-nice instances,
\[
\mathbb E[v(I\cap U_i)]
\geq
r_{M'_i}(U_i)T'_i.
\]

Since $U_1,\ldots,U_\ell$ form a partition of $U$,
\[
\begin{aligned}
\mathbb E[v(I)]
&=
\sum_{i=1}^{\ell}\mathbb E[v(I\cap U_i)]\\
&\geq
\sum_{i=1}^{\ell}r_{M'_i}(U_i)T'_i.
\end{aligned}
\]

By the preceding lemma,
\[
\sum_{i=1}^{\ell}r_{M'_i}(U_i)T'_i
\geq
\frac{1}{2}\sum_{e\in U}x_ev_e.
\]
Therefore,
\[
\mathbb E[v(I)]
\geq
\frac{1}{2}\sum_{e\in U}x_ev_e.
\]

Finally, the ex-ante relaxation upper-bounds the prophet benchmark, so the
algorithm is a $\frac{1}{2}$-approximation with respect to the prophet.
\end{proof}

\subsection{The Surplus Vector}
\label{subsec:surplus-vector}

We conclude this section by recording a reformulation of the preceding
decomposition that will be useful for matroid intersection. For every
$e\in U_i$, define
\[
T'_e:=T'_i
\]
and
\[
y_e:=x_e(v_e-T'_e)^+
=
x_e(v_e-T'_i)^+.
\]
We refer to $y$ as the \emph{surplus vector} associated with the extracted
decomposition.

For every extracted component $U_i$, the definition of $T'_i$ gives
\[
y(U_i)
=
\sum_{e\in U_i}x_e(v_e-T'_i)^+
=
T'_i r_{M'_i}(U_i).
\]
Moreover, since $(M'_i,x|_{U_i},v|_{U_i})$ is surplus-nice, for every
$S\subseteq U_i$,
\[
y(S)
=
\sum_{e\in S}x_e(v_e-T'_i)^+
\leq
T'_i r_{M'_i}(S).
\]
In particular,
\[
y(U)
=
\sum_{e\in U}x_e(v_e-T'_e)^+.
\]

This viewpoint is the bridge to the matroid-intersection setting. For a
single matroid, the extraction procedure determines the thresholds
$T'_e$, and these thresholds in turn determine the surplus vector $y$. In
the next section, we will instead seek a common surplus vector $y$ that
simultaneously induces suitable decompositions for all $q$ matroids. Each
matroid will contribute a price to every element, and the sum of these
prices will determine its threshold. This leads to a coupled fixed-point
condition between the surplus vector and the resulting thresholds.

\begin{remark}[Polynomial-time implementation]
By \Cref{lem:extraction-Surplus-polytime}, a maximum-surplus-threshold flat can
be found and the entire \textsc{Surplus-Extract} decomposition can be computed
in polynomial time. Restrictions, contractions, threshold evaluations, and the
online independence tests can all be implemented with polynomial overhead from
an independence oracle for the original matroid. Consequently, once the
ex-ante solution is available, the whole procedure of this section runs in
polynomial time.
\end{remark}

\section{A $(q+1)$-Approximation for Bernoulli Prophet Inequalities under
Matroid Intersection}
\label{sec:q-intersection}

In this section, we generalize the construction of the previous section to
intersections of matroids. Let
\[
M_i=(U,\mathcal I_i),
\qquad i=1,\ldots,q,
\]
be matroids on a common finite ground set $U$, and let
\[
x\in\bigcap_{i=1}^q P(M_i).
\]
For every element $e\in U$, independently,
\[
X_e=
\begin{cases}
v_e,&\text{with probability }x_e,\\
0,&\text{with probability }1-x_e.
\end{cases}
\]

As before, we may remove all elements with $x_e=0$. In particular, after
this removal every matroid $M_i$ is loopless, since $x\in P(M_i)$ implies
$x_e=0$ for every loop $e$ of $M_i$.

We write
\[
W:=\sum_{e\in U}x_ev_e.
\]

We construct an almost non-adaptive algorithm for the intersection of the
$q$ matroids. The main new ingredient is a single surplus vector $y$ that
simultaneously induces a density decomposition in each matroid. Each
decomposition assigns a price to every element, and the threshold of an
element will be the sum of its prices across the $q$ matroids.

\subsection{A Coupled Density Decomposition and the Surplus Vector}
\label{subsec:coupled-density}

We first define a canonical way of assigning prices to the elements of a
matroid from a nonnegative vector. Let $M=(U,\mathcal I)$ be a loopless matroid with rank function $r_M$, and
let $y\in\mathbb R_+^U$. For every nonempty set $S\subseteq U$, define its
density by
\[
\rho_M(S;y):=\frac{y(S)}{r_M(S)}.
\]

We repeatedly extract a maximum-density set and contract it. At every step,
among all nonempty subsets of the current matroid, we choose the
inclusion-wise maximal maximizer of the density. As we show below, this set
is a flat. Consequently, contracting it does not create loops, and every
residual matroid remains loopless. If the selected set is $V$, we assign the price
\[
p(V):=\frac{y(V)}{r_M(V)}
\]
to every element of $V$, contract $V$, and continue.

\begin{lemma}[Canonical maximum-density set]
\label{lem:density-union}
\label{lem:density-max-flat}
Let
\[
\lambda
:=
\max_{\emptyset\neq S\subseteq U}
\frac{y(S)}{r_M(S)}.
\]
If $A$ and $B$ both have density $\lambda$, then $A\cup B$ also has density
$\lambda$. Consequently, there is a unique inclusion-wise maximal
maximum-density set, and this set is a flat of $M$.
\end{lemma}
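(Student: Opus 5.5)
The plan is to use submodularity of $r_M$ together with modularity of $y$. Since $\lambda$ is the maximum density, every nonempty $S$ satisfies $y(S)\le\lambda r_M(S)$. Because $y\ge0$ and $M$ is loopless, $r_M(S)\ge1$ for nonempty $S$, so $\lambda$ is well defined. The same inequality holds trivially for $S=\emptyset$.

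For the union claim, suppose $y(A)=\lambda r_M(A)$ and $y(B)=\lambda r_M(B)$. By modularity of $y$ and submodularity of $r_M$,
\[
y(A\cup B)+y(A\cap B)=y(A)+y(B)=\lambda\bigl(r_M(A)+r_M(B)\bigr)\ge\lambda\bigl(r_M(A\cup B)+r_M(A\cap B)\bigr),
\]
using $\lambda\ge0$. Applying the maximality bound to $A\cap B$ (possibly empty) gives $y(A\cap B)\le\lambda r_M(A\cap B)$. Subtracting this from the display yields $y(A\cup B)\ge\lambda r_M(A\cup B)$. Since $A\cup B$ is nonempty, the reverse inequality also holds, so $A\cup B$ has density exactly $\lambda$.

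For uniqueness, let $V$ be the union of all maximum-density sets. Iterating the union claim over the finitely many maximizers shows that $V$ is itself a maximizer. It contains every maximizer, so it is the unique inclusion-wise maximal one.

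For the flat property, set $F=\operatorname{cl}_M(V)$, so that $r_M(F)=r_M(V)$. Since $y\ge0$, we get $y(F)\ge y(V)=\lambda r_M(V)=\lambda r_M(F)$. Hence $F$ also attains density $\lambda$, which is the same argument as \Cref{lem:surplus-max-flat}. Because $V$ is the maximal maximizer and $V\subseteq F$, it follows that $F=V$, so $V$ is a flat.

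The only delicate point is the case $A\cap B=\emptyset$, where density is undefined. I handle it by stating the bound $y(S)\le\lambda r_M(S)$ for all $S$, including the empty set. The main obstacle, such as it is, is the sign of $\lambda$ needed for the submodularity step; it holds because $y\ge0$. The rest is routine.
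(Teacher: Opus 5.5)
Your proof is correct and follows the paper's argument essentially step for step. You bound $y(A\cap B)\le\lambda r_M(A\cap B)$, combine this with modularity of $y$ and submodularity of $r_M$ to get $y(A\cup B)\ge\lambda r_M(A\cup B)$, take the union of all maximizers, and use that closure preserves rank while $y\ge 0$ to show that this union is a flat. You also handle the empty intersection and the sign condition $\lambda\ge 0$ explicitly, which the paper leaves implicit.
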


\begin{proof}
Since $\lambda$ is the maximum density,
\[
y(A\cap B)\leq \lambda r_M(A\cap B).
\]
Therefore,
\begin{equation}
\label{eq:density-union-lower-bound}
\begin{aligned}
y(A\cup B)
&=y(A)+y(B)-y(A\cap B)\\
&\geq
\lambda r_M(A)+\lambda r_M(B)
-\lambda r_M(A\cap B)\\
&\geq
\lambda r_M(A\cup B).
\end{aligned}
\end{equation}
The final inequality in~\eqref{eq:density-union-lower-bound} follows from
the submodularity of the rank function. By maximality of $\lambda$, the
reverse inequality to~\eqref{eq:density-union-lower-bound} also holds. Hence
\[
y(A\cup B)=\lambda r_M(A\cup B),
\]
so $A\cup B$ is also a maximum-density set. Since the ground set is finite,
the union of all maximum-density sets is therefore the unique inclusion-wise
maximal maximum-density set; denote it by $V$.

It remains to show that $V$ is a flat. Let
\[
F:=\operatorname{cl}_M(V).
\]
Since taking the closure does not change the rank,
\[
r_M(F)=r_M(V).
\]
Moreover, since $V\subseteq F$ and $y\geq 0$,
\[
y(F)\geq y(V).
\]
Thus
\[
\frac{y(F)}{r_M(F)}
\geq
\frac{y(V)}{r_M(V)}
=
\lambda.
\]
By maximality of $\lambda$, equality must hold, so $F$ is also a
maximum-density set. The inclusion-wise maximality of $V$ then gives $F=V$.
Hence $V$ is a flat.
\end{proof}

It follows that the density decomposition produces an ordered partition
\[
U=V_1\dot\cup\cdots\dot\cup V_m
\]
with corresponding prices
\[
p_1\geq p_2\geq\cdots\geq p_m\geq 0.
\]
For a vector $y$, denote the resulting element-wise price vector by
\[
D_M(y)\in\mathbb R_+^U.
\]

\begin{lemma}[Continuity of the density-price map]
\label{lem:density-price-continuity}
For every loopless matroid $M$, the map
\[
D_M:\mathbb R_+^U\to\mathbb R_+^U
\]
is continuous.
\end{lemma}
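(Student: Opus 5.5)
The plan is to replace the iterative definition of $D_M$ by a characterization of each coordinate $D_M(y)_e$ through finitely many maxima of functions that depend continuously (indeed piecewise linearly) on $(t,y)$. For $t\geq 0$ and $S\subseteq U$ (including $S=\emptyset$), set
\[
f_{t,y}(S):=y(S)-t\,r_M(S).
\]
This is a supermodular set function, so its maximizers form a lattice. Let $V_1,\dots,V_m$ be the density decomposition of $y$ with prices $p_1\geq\cdots\geq p_m$, and let $A_k:=V_1\cup\cdots\cup V_k$. The key structural claim is:
\[
\text{the minimal maximizer of } f_{t,y} \text{ is } \bigcup_{p_k>t}V_k,
\qquad
\text{the maximal maximizer is } \bigcup_{p_k\geq t}V_k.
\]

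To prove the claim, I would write any $S$ as $S=\bigcup_k (S\cap V_k)$ and use the telescoping contraction identity from \Cref{lem:extraction-rank-dominance}:
\[
r_M(S)=\sum_k \bigl(r_M(S\cap A_k)-r_M(S\cap A_{k-1})\bigr).
\]
By submodularity, each summand is at least $r_{M/A_{k-1}}(S\cap V_k)$. Since $V_k$ is a maximum-density set of $M/A_{k-1}$, we have $y(S\cap V_k)\leq p_k\, r_{M/A_{k-1}}(S\cap V_k)$. Combining these term by term gives
\[
f_{t,y}(S)\leq \sum_k (p_k-t)\,r_{M/A_{k-1}}(S\cap V_k).
\]
The right-hand side is at most $\sum_{p_k>t}(p_k-t)\,r_{M/A_{k-1}}(V_k)$, which is exactly the value attained by $\bigcup_{p_k>t}V_k$ (and also by $\bigcup_{p_k\geq t}V_k$). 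To identify the extreme maximizers, I would analyse the equality cases. Strict positivity of $r_{M/A_{k-1}}(V_k)$ holds because every residual matroid is loopless (\Cref{lem:density-max-flat}). Inclusion-wise maximality of each $V_k$ in $M/A_{k-1}$ forces that any maximizer contains every $V_k$ with $p_k>t$ and is contained in $\bigcup_{p_k\geq t}V_k$.

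I expect this equality-case bookkeeping to be the main obstacle. A maximizer might a priori take a proper subset of some $V_k$, or skip a piece while still taking later ones. I would handle this by induction on $k$, peeling off $V_1$ first and applying the same argument in $M/V_1$.

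Given the claim, fix $e$ and define
\[
g(t,y):=\max_{S}f_{t,y}(S)-\max_{S\ni e}f_{t,y}(S),
\qquad
h(t,y):=\max_{S\not\ni e}f_{t,y}(S)-\max_{S}f_{t,y}(S).
\]
Both are continuous in $(t,y)$, being differences of maxima of finitely many affine functions. By the claim, $g(t,y)=0$ iff $t\leq D_M(y)_e$, and $h(t,y)<0$ iff $t<D_M(y)_e$. The second equivalence holds because $e$ lies in the minimal maximizer iff $t<D_M(y)_e$, and every maximizer contains the minimal one.

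Now let $y^n\to y$. For upper semicontinuity, any limit point $\bar t$ of $D_M(y^n)_e$ satisfies $g(\bar t,y)=0$ by continuity of $g$, so $\bar t\leq D_M(y)_e$. The values stay bounded, since $D_M(y)_e\leq \max_e y_e/r_M(\{e\})$, a quantity that is continuous in $y$. For lower semicontinuity, take any $t<D_M(y)_e$. Then $h(t,y)<0$, hence $h(t,y^n)<0$ for large $n$, which gives $D_M(y^n)_e> t$. Together these give $D_M(y^n)_e\to D_M(y)_e$ for every $e$, which proves continuity of $D_M$.
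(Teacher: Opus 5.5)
Your upper-semicontinuity half is sound: the description of the \emph{maximal} maximizer of $f_{t,y}$ as $\bigcup_{p_k\geq t}V_k$ is correct. The description of the \emph{minimal} maximizer as $\bigcup_{p_k>t}V_k$ is false, however, and with it the equivalence ``$h(t,y)<0$ if and only if $t<D_M(y)_e$'' on which your lower-semicontinuity step rests. For a maximizer $S$ and a block with $p_k>t$, the equality cases only give $r_{M/A_{k-1}}(S\cap V_k)=r_{M/A_{k-1}}(V_k)$ and $y(V_k\setminus S)=0$. They do not give $V_k\subseteq S$. Indeed, an element $e\in V_k$ with $y_e=0$ and $p_k>0$ is never a coloop of $(M/A_{k-1})|V_k$, since removing it would strictly raise the density. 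So $e\in\operatorname{cl}_M(A_k\setminus\{e\})$, and $e$ can be deleted from $\bigcup_{p_k>t}V_k$ without changing $y$ or the rank. Concretely, take the rank-one matroid on two parallel elements $a,e$ with $y_a=1$ and $y_e=0$. The canonical decomposition is the single block $\{a,e\}$ with price $1$, so $D_M(y)_e=1$. Yet for every $t\in[0,1)$, both $\{a\}$ and $\{a,e\}$ maximize $f_{t,y}$ with value $1-t$, so $h(t,y)=0$. Your argument therefore gives no lower bound on $\liminf_n D_M(y^n)_e$ at such coordinates. Continuity is claimed on all of $\mathbb R_+^U$ (and used on the Brouwer box), which contains such points. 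The planned induction on $k$ cannot repair this, because the statement it would prove is false.

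The fix is to test whether $e$ is \emph{spanned} rather than contained. Define $h$ with the first maximum taken over sets $S$ with $e\notin\operatorname{cl}_M(S)$ (equivalently, over flats not containing $e$). Suppose $t<D_M(y)_e$, and let $K$ be the number of blocks with $p_k>t$. For every maximizer $S$, the forced equalities $r_{M/A_{k-1}}(S\cap V_k)=r_{M/A_{k-1}}(V_k)$ for $k\leq K$, together with your submodularity step, give $r_M(S\cap A_K)\geq\sum_{k\leq K}r_{M/A_{k-1}}(S\cap V_k)=r_M(A_K)$. Hence $e\in A_K\subseteq\operatorname{cl}_M(S)$, and the modified $h(t,y)$ is strictly negative. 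Conversely, if $t\geq D_M(y)_e$, the flat $A_K$ is a maximizer that does not span $e$, so $h(t,y)=0$. The modified $h$ is still a difference of maxima of finitely many affine functions of $(t,y)$, so the rest of your argument goes through. This makes it a genuine alternative to the paper's proof, which instead freezes the block structure along a subsequence and shows that the limiting decomposition only merges consecutive blocks with equal limiting prices.

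Separately, your boundedness estimate is wrong: for two parallel elements with $y=(1,1)$ the price is $2>\max_e y_e$. Use $D_M(y)_e\leq y(U)$ instead. Alternatively, avoid limit points entirely: $g(t,y)>0$ for every $t>D_M(y)_e$, hence $g(t,y^n)>0$ for large $n$.
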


\begin{proof}
Let
\[
y^{(n)}\to y.
\]
We show that
\[
D_M(y^{(n)})\to D_M(y).
\]

Since the ground set is finite, there are only finitely many ordered
partitions of $U$. Thus, after passing to a subsequence, we may assume that
the density decomposition of every $y^{(n)}$ has the same ordered blocks
\[
V_1,\ldots,V_m.
\]

For each $j$, let
\[
A_{j-1}:=V_1\cup\cdots\cup V_{j-1},
\qquad
N_j:=(M/A_{j-1})|V_j.
\]
The price assigned to the elements of $V_j$ under $y^{(n)}$ is
\[
p_j^{(n)}
=
\frac{y^{(n)}(V_j)}{r_{N_j}(V_j)}.
\]
Since the denominator is fixed and positive,
\[
p_j^{(n)}
\to
p_j:=
\frac{y(V_j)}{r_{N_j}(V_j)}.
\]

Moreover, because $V_j$ is a maximum-density set in the corresponding
residual matroid under $y^{(n)}$, for every nonempty
$S\subseteq U(N_j)$,
\[
y^{(n)}(S)
\leq
p_j^{(n)}r_{N_j}(S).
\]
Taking limits gives
\[
y(S)\leq p_jr_{N_j}(S).
\]
For $S=V_j$, equality holds:
\[
y(V_j)=p_jr_{N_j}(V_j).
\]
Thus $V_j$ is a maximum-density set for $y$ in the residual matroid $N_j$. The decomposition
\[
V_1,\ldots,V_m
\]
need not itself be the canonical decomposition of $y$, because consecutive
blocks whose limiting prices coincide may merge. However, these are the only
possible changes. Indeed, whenever
\[
p_j>p_{j+1},
\]
the strict inequality separates the two blocks in the canonical
maximum-density decomposition of $y$. If instead
\[
p_j=p_{j+1},
\]
the canonical decomposition may merge $V_j$ and $V_{j+1}$, but every element
in the merged block still receives the same price. Consequently, every element $e\in V_j$ receives price $p_j$ under
$D_M(y)$. Since under $D_M(y^{(n)})$ it receives price $p_j^{(n)}$, we obtain
\[
D_M(y^{(n)})_e\to D_M(y)_e
\]
for every $e\in U$. Hence,
\[
D_M(y^{(n)})\to D_M(y),
\]
and $D_M$ is continuous.
\end{proof}

For a candidate surplus vector $y\in\mathbb R_+^U$, define
\[
p^i(y):=D_{M_i}(y),
\qquad i=1,\ldots,q.
\]
For every element $e\in U$, define its total threshold by
\[
T_e(y):=\sum_{i=1}^q p^i_e(y).
\]

We now show that there exists a surplus vector that is consistent with the
thresholds induced by its own density decompositions.

\begin{theorem}[Coupled density fixed point]
\label{thm:coupled-density-fixed-point}
There exists a vector
\[
y^*\in\mathbb R_+^U
\]
such that, if
\[
p^i:=D_{M_i}(y^*)
\]
for every $i=1,\ldots,q$, and
\[
T_e:=\sum_{i=1}^q p^i_e,
\]
then
\[
y^*_e=x_e(v_e-T_e)^+
\]
for every $e\in U$.
\end{theorem}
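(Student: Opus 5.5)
We will apply Brouwer's fixed-point theorem to the map
\[
\Phi:K\to K,
\qquad
\Phi(y)_e:=x_e\bigl(v_e-T_e(y)\bigr)^+,
\]
on the compact convex box
\[
K:=\prod_{e\in U}[0,x_ev_e]\subseteq\mathbb R_+^U .
\]
A fixed point $y^*=\Phi(y^*)$ is exactly the vector required in \Cref{thm:coupled-density-fixed-point}. All elements with $x_e=0$ have been removed, so every $M_i$ is loopless and each $D_{M_i}$ is defined on all of $\mathbb R_+^U$.

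\textbf{Step 1: $\Phi$ maps $K$ into itself.} Every price produced by the density decomposition is a ratio $y(V)/r(V)$ with $y\geq 0$ and positive rank, so it is nonnegative. Hence $T_e(y)\geq 0$, and therefore
\[
0\leq x_e\bigl(v_e-T_e(y)\bigr)^+\leq x_ev_e .
\]
Thus $\Phi(K)\subseteq K$, and $K$ is a nonempty compact convex set.

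\textbf{Step 2: $\Phi$ is continuous.} By \Cref{lem:density-price-continuity}, each map $y\mapsto p^i(y)=D_{M_i}(y)$ is continuous on $\mathbb R_+^U$, and in particular on $K$. A finite sum of continuous maps is continuous, so $y\mapsto T_e(y)$ is continuous for every $e$. Composing with the continuous map $t\mapsto x_e(v_e-t)^+$ shows that each coordinate of $\Phi$ is continuous. Brouwer's theorem then yields a fixed point $y^*\in K$. Unwinding the definitions, $p^i=D_{M_i}(y^*)$ and $T_e=\sum_i p^i_e$ satisfy $y^*_e=x_e(v_e-T_e)^+$ for every $e$.

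\textbf{Main obstacle.} The only nontrivial ingredient is continuity of $D_M$, which is already provided by \Cref{lem:density-price-continuity}. I would double-check two points there.
\begin{itemize}
\item The map must be well defined at $y=0$ and whenever some blocks have $y(V)=0$. In these cases the canonical maximal maximum-density set is the whole residual ground set, and it receives price $0$. This is consistent with the limit argument in that lemma.
\item The subsequence argument must give convergence of the full sequence. Every subsequence has a further subsequence along which the prices converge to the same limit $D_M(y)$, and this suffices.
\end{itemize}
With these checks in place, the proof is a direct application of Brouwer's theorem.
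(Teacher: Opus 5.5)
Your proposal is correct and follows the paper's proof essentially step for step. You apply Brouwer's theorem on the box $K=\prod_{e\in U}[0,x_ev_e]$ to the map $y\mapsto\bigl(x_e(v_e-\sum_{i=1}^q D_{M_i}(y)_e)^+\bigr)_{e\in U}$, get the self-mapping property from nonnegativity of the density prices, and get continuity from \Cref{lem:density-price-continuity}. Your remark that the subsequence argument in that lemma should be finished with the ``every subsequence has a further subsequence converging to $D_M(y)$'' principle is a correct tightening of a step the paper leaves implicit.
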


\begin{proof}
Consider the compact convex set
\[
K:=\prod_{e\in U}[0,x_ev_e].
\]
Define a map
\[
F:K\to K
\]
coordinate-wise by
\[
F_e(y)
:=
x_e
\left(
v_e-\sum_{i=1}^q D_{M_i}(y)_e
\right)^+.
\]

For every $e\in U$ and every $y\in K$,
\[
0\leq F_e(y)\leq x_ev_e,
\]
and hence
\[
F(K)\subseteq K.
\]

By \Cref{lem:density-price-continuity}, each map
\[
D_{M_i}:\mathbb R_+^U\to\mathbb R_+^U
\]
is continuous. Therefore, $F$ is continuous. By Brouwer's fixed-point theorem, there exists
\[
y^*\in K
\]
such that
\[
F(y^*)=y^*.
\]
Thus, for every $e\in U$,
\[
y^*_e
=
x_e
\left(
v_e-\sum_{i=1}^q D_{M_i}(y^*)_e
\right)^+.
\]

Setting
\[
p^i:=D_{M_i}(y^*)
\]
and
\[
T_e:=\sum_{i=1}^q p^i_e
\]
gives
\[
y^*_e=x_e(v_e-T_e)^+,
\]
as desired.
\end{proof}

The theorem above establishes the existence of a surplus vector satisfying the
coupled fixed-point condition. In \Cref{lem:coupled-fixed-point-polytime}, we
show that such a vector can be computed in polynomial time. Fix such a vector
\[
y:=y^*
\]
for the remainder of the section. For each matroid $M_i$, let
\[
V^i_1,\ldots,V^i_{m_i}
\]
be the density decomposition induced by $y$, with corresponding prices
\[
p^i_1\geq\cdots\geq p^i_{m_i}\geq 0.
\]
For every $e\in V^i_j$, define
\[
p^i_e:=p^i_j.
\]
Finally, define
\[
T_e:=\sum_{i=1}^q p^i_e.
\]
By the fixed-point identity,
\[
y_e=x_e(v_e-T_e)^+
\]
for every $e\in U$.

\subsection{The Extracted Matroids}
\label{subsec:extracted-matroids}

Fix the surplus vector $y$ obtained from
\Cref{thm:coupled-density-fixed-point}. For each matroid $M_i$, let
\[
V^i_1,\ldots,V^i_{m_i}
\]
be its density decomposition. For every $j=1,\ldots,m_i$, define
\[
A^i_{j-1}
:=
V^i_1\cup\cdots\cup V^i_{j-1},
\qquad
A^i_0:=\emptyset,
\]
and let
\[
N^i_j
:=
\left(M_i/A^i_{j-1}\right)|V^i_j.
\]

We define the stricter matroid
\[
\widehat M_i
:=
\bigoplus_{j=1}^{m_i} N^i_j.
\]

Since every block $V^i_j$ is a flat of the corresponding residual matroid,
each contraction preserves looplessness. In particular, every matroid
$N^i_j$ is loopless.

\begin{lemma}[Feasibility of the extracted matroids]
\label{lem:intersection-extracted-feasibility}
For every $i=1,\ldots,q$, every set independent in $\widehat M_i$ is
independent in $M_i$.
\end{lemma}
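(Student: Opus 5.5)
This is a direct application of \Cref{lem:direct-sum-submatroid}, applied separately to each matroid $M_i$. Fix $i\in\{1,\ldots,q\}$ and take the density decomposition $V^i_1,\ldots,V^i_{m_i}$ of $M_i$ induced by $y$. These blocks are pairwise disjoint subsets of $U$; in fact they partition $U$, because the decomposition keeps extracting until the residual ground set is empty.

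We then invoke \Cref{lem:direct-sum-submatroid} with the matroid $M_i$ and the sets $U_j:=V^i_j$ for $j=1,\ldots,m_i$. The prefix sets in that lemma are exactly the sets $A^i_{j}$ defined above, with $A^i_0=\emptyset$. The minors it produces are $(M_i/A^i_{j-1})|V^i_j$, which is exactly $N^i_j$. Hence the direct sum appearing in the lemma is literally $\widehat M_i=\bigoplus_{j}N^i_j$. The lemma then says that every set independent in $\widehat M_i$ is independent in $M_i$.

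One point deserves a check, and I expect it to be the only place needing care. The density decomposition is defined iteratively: at step $j$ it extracts the maximal maximum-density set in the current residual matroid. For the lemma to apply, that residual matroid must equal $M_i/A^i_{j-1}$. This follows by induction on $j$ from the identity $(M/A)/B=M/(A\cup B)$ for disjoint $A,B$, which holds because the contraction rank formulas compose:
\[
r_{M/A}(S\cup B)-r_{M/A}(B)=r_M(S\cup A\cup B)-r_M(A\cup B).
\]
Once this identification is made, the claim is immediate. The looplessness of the pieces, and the fact that the blocks are flats, play no role in feasibility.
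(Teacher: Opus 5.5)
Your proposal is correct and matches the paper's proof: both apply \Cref{lem:direct-sum-submatroid} to the blocks $V^i_1,\ldots,V^i_{m_i}$ and note that the resulting minors are exactly $N^i_j$. Your extra check that the residual matroid equals $M_i/A^i_{j-1}$ is harmless but not needed for feasibility. The reason is that $N^i_j$ is defined directly as $(M_i/A^i_{j-1})|V^i_j$, and the lemma only requires the blocks to be pairwise disjoint.
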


\begin{proof}
This follows directly from the direct-sum submatroid \cref{lem:direct-sum-submatroid}, applied to the decomposition
\[
V^i_1,\ldots,V^i_{m_i}.
\]
Indeed,
\[
N^i_j
=
\left(M_i/A^i_{j-1}\right)|V^i_j,
\]
so every set independent in
\[
\widehat M_i
=
\bigoplus_{j=1}^{m_i}N^i_j
\]
is independent in $M_i$.
\end{proof}

\begin{lemma}[Density inequalities]
\label{lem:intersection-density-inequalities}
For every $i=1,\ldots,q$ and every $j=1,\ldots,m_i$,
\[
y(V^i_j)
=
p^i_j r_{N^i_j}(V^i_j).
\]
Moreover, for every subset
\[
S\subseteq V^i_j,
\]
we have
\[
y(S)
\leq
p^i_j r_{N^i_j}(S).
\]
\end{lemma}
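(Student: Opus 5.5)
The plan is to read both statements off the definition of the density decomposition. At step $j$ of the decomposition of $M_i$, the current matroid is $M_i/A^i_{j-1}$. The block $V^i_j$ is, by \Cref{lem:density-max-flat}, the inclusion-wise maximal maximizer of the density $\rho_{M_i/A^i_{j-1}}(\cdot\,;y)$ over all nonempty subsets of the residual ground set $U\setminus A^i_{j-1}$. Its price is
\[
p^i_j=\frac{y(V^i_j)}{r_{M_i/A^i_{j-1}}(V^i_j)}.
\]
This denominator is positive because the residual matroid is loopless, which holds since every previously contracted block is a flat.

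For the equality, we use the fact that $N^i_j=(M_i/A^i_{j-1})|V^i_j$ and that restriction preserves rank. Hence $r_{N^i_j}(S)=r_{M_i/A^i_{j-1}}(S)$ for every $S\subseteq V^i_j$. In particular $r_{N^i_j}(V^i_j)$ equals the denominator above, and rearranging the definition of $p^i_j$ gives $y(V^i_j)=p^i_j\,r_{N^i_j}(V^i_j)$.

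For the inequality, the case $S=\emptyset$ is trivial, since both sides are zero. For nonempty $S\subseteq V^i_j$, $S$ is a nonempty subset of the residual ground set, so maximality of the density of $V^i_j$ gives
\[
\frac{y(S)}{r_{M_i/A^i_{j-1}}(S)}\le p^i_j.
\]
Here the denominator is positive by looplessness. Multiplying through and replacing the rank by $r_{N^i_j}(S)$, using the restriction identity, yields $y(S)\le p^i_j\,r_{N^i_j}(S)$.

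The only point needing care is the last block. We must confirm that the decomposition continues until the whole ground set is exhausted, so that each block, including the final one, is a maximum-density set of its residual matroid. This holds by the construction of $D_{M_i}$: at every step we extract the canonical maximum-density set. Once this and looplessness are checked, the lemma is essentially definitional, and there is no substantive obstacle.
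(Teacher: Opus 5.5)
Your proposal is correct and follows essentially the same route as the paper: the equality is read off from the definition of $p^i_j$, the inequality comes from the maximality of the density of $V^i_j$ in the residual matroid $M_i/A^i_{j-1}$, and $S=\emptyset$ is handled as a trivial case. Your explicit use of the restriction rank identity and of looplessness (so that the denominators are positive) spells out steps the paper uses implicitly.
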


\begin{proof}
By definition, $V^i_j$ is a maximum-density set in the residual matroid
\[
M_i/A^i_{j-1},
\]
and
\[
p^i_j
=
\frac{y(V^i_j)}{r_{N^i_j}(V^i_j)}.
\]
This gives
\[
y(V^i_j)
=
p^i_j r_{N^i_j}(V^i_j).
\]

For every nonempty $S\subseteq V^i_j$, maximality of the density of
$V^i_j$ gives
\[
\frac{y(S)}{r_{N^i_j}(S)}
\leq
p^i_j.
\]
Therefore,
\[
y(S)
\leq
p^i_j r_{N^i_j}(S).
\]
The inequality is trivial for $S=\emptyset$.
\end{proof}

\begin{lemma}[Polynomial-time density decomposition]
\label{lem:matroid-intersection-poly-time}
Given an independence oracle for a loopless matroid $M$ and a vector
$y\in\mathbb R_+^U$, its density decomposition
\[
V_1,\ldots,V_m
\]
and the corresponding prices
\[
p_1,\ldots,p_m
\]
can be computed in polynomial time. (See the proof in \Cref{app-lem:matroid-intersection-poly-time}.)
\end{lemma}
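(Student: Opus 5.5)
The plan is to compute the decomposition greedily, one block at a time. At each step we find the maximum density $\lambda$ over nonempty subsets of the current residual matroid, together with the inclusion-wise maximal maximizer $V$, which is unique by \Cref{lem:density-union}. We then contract $V$ and repeat. The number of iterations is at most $|U|$, since each block is nonempty. An independence oracle for any minor $(M/A)|B$ is simulated from the oracle for $M$ in the standard way: fix a basis $B_A$ of $A$, computed greedily, and declare $I\subseteq B$ independent in $M/A$ if and only if $I\cup B_A\in\mathcal I$. So it suffices to solve one step in polynomial time for a loopless matroid $N$ given by an oracle.

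For one step, I would reduce density maximization to a sequence of parametric submodular minimizations. For a fixed $\lambda\ge 0$, consider
\[
f_\lambda(S):=\lambda r_N(S)-y(S).
\]
This function is submodular, because $r_N$ is submodular and $y$ is modular. The maximum density is at most $\lambda$ exactly when $\min_S f_\lambda(S)\ge 0$, where $S=\emptyset$ gives the value $0$. Moreover, $\lambda$ equals the maximum density exactly when this minimum is $0$ and is attained by some nonempty set.

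To find $\lambda^*$, I would use a Dinkelbach/Newton iteration. Start from $\lambda_0=y(U)/r_N(U)$. Given $\lambda_t$, minimize $f_{\lambda_t}$ by a polynomial-time submodular function minimization algorithm, using the oracle to evaluate $r_N$. If the minimizer $S_t$ has negative value, set $\lambda_{t+1}:=y(S_t)/r_N(S_t)>\lambda_t$ and continue. Alternatively, since the density is a ratio whose denominator is a rank value in $\{1,\ldots,r_N(U)\}$, the candidate values $\lambda$ are limited, and the search can be organized as: for each $k\le r_N(U)$, maximize $y(S)$ subject to $r_N(S)\le k$. For a given $k$ the optimum is the closure of the greedy maximum-weight independent set of size $k$; since $y\ge 0$, adding spanned elements only helps. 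Then take the best ratio. This second route avoids general SFM entirely and is the one I would commit to. The optimal $S$ with $r_N(S)\le k$ is $\operatorname{cl}_N(\cdot)$ of a max-$y$ set of rank $k$. Computing such a set needs care, because $y$ sums over all elements of $S$, not only those of an independent set. So maximizing $y(S)$ subject to $r(S)\le k$ is really maximizing $y(F)$ over flats of rank $k$, which is not greedy in general. Here I would fall back to the parametric SFM route, which is rigorous.

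Once $\lambda^*$ is known, the maximal maximizer $V$ is obtained as the unique maximal minimizer of $f_{\lambda^*}$. Standard SFM algorithms return the maximal minimizer, or one can take the union of minimizers, using the fact that minimizers of a submodular function form a lattice. The main obstacle is bounding the number of Newton iterations and keeping the numbers of polynomial bit-size. For the iteration count I would invoke the known strongly polynomial bound for Dinkelbach's method on ratios with submodular denominators. Alternatively, observe that each iteration strictly decreases $r_N(S_t)$ among improving sets, so there are at most $r_N(U)$ iterations. This follows from the standard Newton argument that the denominators of successive minimizers are strictly decreasing. The bit-size stays polynomial since every $\lambda_t$ is a ratio $y(S)/k$. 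Finally, the prices are $p_j=y(V_j)/r_{N_j}(V_j)$, computed directly by rank-oracle calls.
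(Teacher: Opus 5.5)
Your committed argument (extract blocks one at a time, finding each block by parametric minimization of the submodular function $\lambda r_N(S)-y(S)$) is correct and is essentially the paper's proof. You also supply detail the paper omits: the Dinkelbach iteration bound via strictly decreasing ranks of successive minimizers, taking the maximal minimizer of $f_{\lambda^*}$ to obtain the canonical inclusion-wise maximal block, and simulating the contraction oracle. You should delete the abandoned greedy detour, since its claim that a maximum-$y$ set of rank at most $k$ is the closure of a greedy independent set is false in general, as you note yourself, and nothing in the final argument depends on it.
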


\subsection{The Online Algorithm}
\label{subsec:intersection-online}

The algorithm uses the fixed thresholds
\[
T_e=\sum_{i=1}^q p^i_e.
\]
All thresholds and all additional matroid constraints are determined before
the arrival process begins.

\begin{algorithm}[H]
\caption{\textsc{FixedThresholdMatroidIntersectionProphet}}
\begin{algorithmic}[1]
\State Compute a surplus vector $y$ satisfying the coupled fixed-point
condition, using \Cref{lem:coupled-fixed-point-polytime}.
\State For every $i=1,\ldots,q$, compute the density decomposition of $M_i$
and construct $\widehat M_i$.
\State $I\gets\emptyset$.
\For{each arriving element $e$}
\If{$e$ is active, $v_e\geq T_e$, and $I\cup\{e\}$ is independent in every
$\widehat M_i$}
\State $I\gets I\cup\{e\}$.
\EndIf
\EndFor
\State \Return $I$.
\end{algorithmic}
\end{algorithm}

We now analyze the algorithm. For every $i=1,\ldots,q$ and every density
block $V^i_j$, let
\[
I^i_j:=I\cap V^i_j
\]
and define
\[
F^i_j:=\operatorname{cl}_{N^i_j}(I^i_j).
\]
Since $I^i_j$ is independent in $N^i_j$,
\[
r_{N^i_j}(F^i_j)=|I^i_j|.
\]
Therefore, by the density inequality,
\begin{equation}
\label{eq:intersection-closure-density}
y(F^i_j)
\leq
p^i_j r_{N^i_j}(F^i_j)
=
p^i_j|I^i_j|.
\end{equation}

\begin{lemma}[Value versus surplus]
\label{lem:intersection-gain-surplus}
The online algorithm satisfies
\[
\mathbb E[v(I)]
\geq
y(U).
\]
\end{lemma}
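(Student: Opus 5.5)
We follow the analysis pattern of \Cref{lem:surplus-nice-single-threshold}, now with one term per matroid. First we split the value of every accepted element $e$ as
\[
v_e = T_e + (v_e-T_e) = \sum_{i=1}^q p^i_e + (v_e-T_e).
\]
Every accepted element satisfies $v_e\geq T_e$. Summing over $I$ therefore gives
\[
v(I)=\sum_{i=1}^q\sum_{j=1}^{m_i}p^i_j|I^i_j| + \sum_{e\in I}(v_e-T_e)^+ .
\]
By \eqref{eq:intersection-closure-density}, the first term is at least $\sum_i\sum_j y(F^i_j)$ pointwise. The goal is then to show
\[
\mathbb E\Bigl[\sum_{e\in I}(v_e-T_e)^+\Bigr]\;\geq\; y(U)-\sum_{i=1}^q\sum_{j}\mathbb E[y(F^i_j)].
\]
Adding the two bounds then gives $\mathbb E[v(I)]\geq y(U)$.

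For the surplus term, fix $e$ with $v_e\geq T_e$. Say $e$ has space if, at its arrival, $I\cup\{e\}$ is independent in every $\widehat M_i$. The arrival order is fixed in advance, so the event ``$e$ has space'' depends only on activations of earlier elements and is independent of $e$ being active. Hence
\[
\mathbb E\Bigl[\sum_{e\in I}(v_e-T_e)^+\Bigr]=\sum_e x_e(v_e-T_e)^+\Pr[e\text{ has space}]=\sum_e y_e\Pr[e\text{ has space}],
\]
using the fixed-point identity $y_e=x_e(v_e-T_e)^+$. Elements with $v_e<T_e$ have $y_e=0$, so including them is harmless.

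Next we bound $\Pr[e\text{ has space}]$ by a union bound. Suppose $e$ lies in block $V^i_{j(i)}$ of $M_i$ for each $i$. Since $\widehat M_i$ is a direct sum, $e$ lacks space in $\widehat M_i$ exactly when $e$ is spanned in $N^i_{j(i)}$ by the previously accepted elements of $V^i_{j(i)}$. Those elements lie in the final $I^i_{j(i)}$, so in that case $e\in F^i_{j(i)}$. Thus
\[
\Pr[e\text{ lacks space}]\leq\sum_{i=1}^q\Pr\bigl[e\in F^i_{j(i)}\bigr].
\]
Multiplying by $y_e\ge 0$ and summing over $e$ gives
\[
\sum_e y_e\Pr[e\text{ lacks space}]\leq\sum_i\sum_j\mathbb E[y(F^i_j)],
\]
because the blocks $V^i_j$ partition $U$ for each fixed $i$.

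The main point to check is this union bound across matroids. It is where the intersection structure enters, and it is what lets each matroid's blocking loss be charged to its own closure term. These closure terms are exactly absorbed by the price contributions $p^i_j|I^i_j|$ via the density inequalities of \Cref{lem:intersection-density-inequalities}. One should also confirm that closures are taken inside $N^i_j$, where $I^i_j$ is independent, so that $r_{N^i_j}(F^i_j)=|I^i_j|$ holds. Combining the pieces yields $\mathbb E[v(I)]\geq y(U)$.
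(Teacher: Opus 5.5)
Your proposal is correct and follows essentially the same argument as the paper. Both proofs split each accepted value into $T_e=\sum_i p^i_e$ plus the surplus $(v_e-T_e)^+$, use that the space event is independent of $e$'s activation, and apply a union bound over the $q$ matroids to charge blocking to the block closures $F^i_j$, which are then absorbed via $y(F^i_j)\le p^i_j|I^i_j|$.
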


\begin{proof}
Fix an element $e$ with $v_e\geq T_e$. For every matroid $M_i$, let $j_i(e)$
be the index such that
\[
e\in V^i_{j_i(e)}.
\]

If $e$ does not have space when it arrives, then adding $e$ violates
independence in at least one of the matroids $\widehat M_i$. For such a
matroid $i$, the element $e$ lies in the closure of the elements already
accepted in its block $V^i_{j_i(e)}$. Since the accepted set only grows,
this implies
\[
e\in F^i_{j_i(e)}.
\]
Consequently,
\[
\Pr[e\text{ has space}]
\geq
1-
\sum_{i=1}^q
\Pr\left[e\in F^i_{j_i(e)}\right].
\]

Recall that
\[
y_e=x_e(v_e-T_e)^+.
\]
The arrival order is fixed before the Bernoulli realizations. Therefore,
whether $e$ has space when it arrives depends only on the activations of
elements preceding $e$ in this fixed order and is independent of the
activation of $e$. Hence the expected surplus contribution is at least
\[
\begin{aligned}
\sum_{e\in U}
y_e\Pr[e\text{ has space}]
&\geq
y(U)
-
\sum_{e\in U}
y_e
\sum_{i=1}^q
\Pr\left[e\in F^i_{j_i(e)}\right]\\
&=
y(U)
-
\sum_{i=1}^q
\sum_j
\mathbb E\left[y(F^i_j)\right].
\end{aligned}
\]

On the other hand, every accepted element contributes its threshold $T_e$.
Thus the expected threshold contribution is
\[
\begin{aligned}
\mathbb E\left[\sum_{e\in I}T_e\right]
&=
\mathbb E\left[
\sum_{e\in I}\sum_{i=1}^q p^i_e
\right]\\
&=
\sum_{i=1}^q
\sum_j
p^i_j\,\mathbb E[|I^i_j|].
\end{aligned}
\]

For every realization,~\eqref{eq:intersection-closure-density} gives
\[
y(F^i_j)\leq p^i_j|I^i_j|.
\]
Therefore,
\[
\sum_{i=1}^q\sum_j
\mathbb E[y(F^i_j)]
\leq
\sum_{i=1}^q\sum_j
p^i_j\,\mathbb E[|I^i_j|].
\]

Combining the threshold and surplus contributions gives
\[
\begin{aligned}
\mathbb E[v(I)]
&\geq
\sum_{i=1}^q\sum_j
p^i_j\,\mathbb E[|I^i_j|]
+
y(U)
-
\sum_{i=1}^q\sum_j
\mathbb E[y(F^i_j)]\\
&\geq
y(U).
\end{aligned}
\]
\end{proof}

\begin{lemma}[$(q+1)$ accounting]
\label{lem:q-plus-one-accounting}
The surplus vector satisfies
\[
\sum_{e\in U}x_ev_e
\leq
(q+1)y(U).
\]
\end{lemma}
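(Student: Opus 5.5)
Split each element's weight into surplus plus threshold, then charge the threshold part to the $q$ matroids separately, exactly as in the single-matroid summation-by-parts argument.

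\textbf{Pointwise split.} For every $e\in U$,
\[
v_e\le (v_e-T_e)^+ + T_e .
\]
Multiplying by $x_e$ and summing gives
\[
\sum_{e\in U}x_ev_e\le y(U)+\sum_{e\in U}x_eT_e = y(U)+\sum_{i=1}^q\sum_{e\in U}x_e p^i_e ,
\]
using the fixed-point identity $y_e=x_e(v_e-T_e)^+$ and $T_e=\sum_i p^i_e$. It therefore suffices to prove, for each fixed $i$,
\[
\sum_{e\in U}x_ep^i_e\le y(U).
\]
Summing this over $i$ gives the bound $(q+1)y(U)$.

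\textbf{Per-matroid bound.} Fix $i$ and drop the superscript. Let $V_1,\ldots,V_m$ be the density decomposition with prices $p_1\ge\cdots\ge p_m\ge0$, and set
\[
X_j:=x(V_j),\qquad R_j:=r_{N_j}(V_j).
\]
Then
\[
\sum_e x_ep_e=\sum_j X_jp_j,\qquad y(U)=\sum_j y(V_j)=\sum_j p_jR_j ,
\]
the latter by \Cref{lem:intersection-density-inequalities}. So we need $\sum_j X_jp_j\le\sum_j R_jp_j$.

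\textbf{Prefix dominance and summation by parts.} As in \Cref{lem:extraction-rank-dominance}, the contraction identity telescopes to
\[
\sum_{j\le k}R_j=r_{M_i}(A_k).
\]
Since $x\in P(M_i)$, this is at least $x(A_k)=\sum_{j\le k}X_j$. So $D_k:=\sum_{j\le k}(R_j-X_j)\ge0$. Abel summation then gives
\[
\sum_j (R_j-X_j)p_j=\sum_{k<m}D_k(p_k-p_{k+1})+D_mp_m\ge 0,
\]
because every $D_k\ge0$ and the prices are non-increasing.

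\textbf{Main point to verify.} The only step needing care is that the prices of the density decomposition are non-increasing. This is asserted after \Cref{lem:density-max-flat} and follows exactly as in the surplus monotonicity lemma. Because $V_j$ is a maximum-density set in $M_i/A_{j-1}$, for any $S$ in the contracted matroid,
\[
y(V_j\cup S)\le p_j\, r(V_j\cup S).
\]
Subtracting $y(V_j)=p_j\,r(V_j)$ shows that the density of $S$ in the contracted matroid is at most $p_j$, and hence $p_{j+1}\le p_j$. The remaining steps are routine.
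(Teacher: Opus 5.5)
Your proposal is correct and follows the paper's proof essentially step by step. Both split $v_e\le T_e+(v_e-T_e)^+$, use the fixed-point identity, and bound each $\sum_e x_ep^i_e$ by $y(U)$ via prefix rank dominance from $x\in P(M_i)$ together with summation by parts over non-increasing density prices. Your explicit check that the density prices are non-increasing is a small bonus, since the paper only asserts this after \Cref{lem:density-max-flat}.
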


\begin{proof}
For every element $e\in U$,
\[
v_e
\leq
T_e+(v_e-T_e)^+.
\]
Multiplying by $x_e$ and summing over all elements gives
\[
\sum_{e\in U}x_ev_e
\leq
\sum_{e\in U}x_eT_e
+
\sum_{e\in U}x_e(v_e-T_e)^+.
\]
By the fixed-point identity,
\[
\sum_{e\in U}x_e(v_e-T_e)^+
=
y(U).
\]
It therefore remains to show that
\[
\sum_{e\in U}x_eT_e
\leq
q\,y(U).
\]

Since
\[
T_e=\sum_{i=1}^q p^i_e,
\]
we have
\[
\sum_{e\in U}x_eT_e
=
\sum_{i=1}^q
\sum_{e\in U}x_ep^i_e.
\]
Fix a matroid $M_i$. Let
\[
V^i_1,\ldots,V^i_{m_i}
\]
be its density decomposition, and define
\[
R^i_j:=r_{N^i_j}(V^i_j),
\qquad
X^i_j:=x(V^i_j).
\]
Then
\[
\sum_{e\in U}x_ep^i_e
=
\sum_{j=1}^{m_i}p^i_jX^i_j.
\]

As in the rank-dominance argument from the previous section, for every
$k=1,\ldots,m_i$,
\[
\sum_{j=1}^kX^i_j
\leq
\sum_{j=1}^kR^i_j.
\]
Indeed, if
\[
A^i_k:=V^i_1\cup\cdots\cup V^i_k,
\]
then
\[
\sum_{j=1}^kR^i_j
=
r_{M_i}(A^i_k)
\geq
x(A^i_k)
=
\sum_{j=1}^kX^i_j,
\]
where we use $x\in P(M_i)$.

Moreover, the density prices are non-increasing:
\[
p^i_1\geq p^i_2\geq\cdots\geq p^i_{m_i}\geq 0.
\]
Hence, by summation by parts,
\[
\sum_{j=1}^{m_i}p^i_jX^i_j
\leq
\sum_{j=1}^{m_i}p^i_jR^i_j.
\]
By the definition of the density prices,
\[
p^i_jR^i_j
=
y(V^i_j).
\]
Since the blocks form a partition of $U$,
\[
\sum_{j=1}^{m_i}p^i_jR^i_j
=
\sum_{j=1}^{m_i}y(V^i_j)
=
y(U).
\]
Therefore,
\[
\sum_{e\in U}x_ep^i_e
\leq
y(U).
\]

Summing over all $q$ matroids gives
\[
\sum_{e\in U}x_eT_e
\leq
q\,y(U).
\]
Consequently,
\[
\sum_{e\in U}x_ev_e
\leq
q\,y(U)+y(U)
=
(q+1)y(U),
\]
as claimed.
\end{proof}

\begin{theorem}
\label{thm:q-plus-one}
For the intersection of $q$ matroids, the algorithm above satisfies
\[
\mathbb E[v(I)]
\geq
\frac{1}{q+1}
\sum_{e\in U}x_ev_e.
\]
Consequently, it is a $(q+1)$-approximation with respect to the prophet
benchmark.
\end{theorem}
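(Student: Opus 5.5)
The theorem follows by chaining the two lemmas just established, together with a feasibility check and the ex-ante reduction.

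First, I will verify feasibility. The algorithm accepts $e$ only if $I\cup\{e\}$ is independent in every $\widehat M_i$. So the final set $I$ is independent in each $\widehat M_i$. By \Cref{lem:intersection-extracted-feasibility}, $I$ is then independent in each $M_i$, and hence lies in $\bigcap_{i=1}^q\mathcal I_i$. Thus the online rule always maintains a feasible set for the original intersection constraint. Its thresholds $T_e$ and constraints $\widehat M_i$ are fixed before arrivals, which gives almost non-adaptivity.

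Second, I will combine the accounting lemmas. By \Cref{lem:intersection-gain-surplus},
\[
\mathbb E[v(I)]\ge y(U).
\]
By \Cref{lem:q-plus-one-accounting},
\[
y(U)\ge \frac{1}{q+1}\sum_{e\in U}x_ev_e.
\]
Chaining the two gives the displayed inequality of the theorem for every adversarial order. Both lemmas hold for any fixed order chosen before the realizations, and the preprocessing does not depend on that order.

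Third, I will pass to the prophet benchmark. The Bernoulli instance comes from an optimal solution $x$ of the ex-ante relaxation over $\mathcal P=\bigcap_{i=1}^qP_{M_i}$. The intersection polytope contains the marginal vector of every feasible random set, as noted in \Cref{subsec:ex-ante}. Applying \Cref{lem:ex-ante-bernoulli} with $c=\frac1{q+1}$ therefore yields an online algorithm on the original distributions. Its expected value is at least $\frac1{q+1}$ times the ex-ante value, which upper bounds $\mathbb E[\mathrm{PROPHET}]$. The same lemma preserves almost non-adaptivity and order-obliviousness.

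The only point needing care is that the elements with $x_e=0$ were removed. Such elements contribute nothing to $\sum_e x_ev_e$, and leaving them out of $I$ keeps $I$ feasible, so removing them does not affect the bound. The substantive work is already in the two lemmas, so I expect no real obstacle here; the mildest subtlety is ensuring the fixed point $y$ exists, which is given by \Cref{thm:coupled-density-fixed-point}.
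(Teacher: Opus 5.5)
Your proposal is correct and follows the paper's proof: feasibility comes from \Cref{lem:intersection-extracted-feasibility}, the bound comes from chaining \Cref{lem:intersection-gain-surplus} with \Cref{lem:q-plus-one-accounting}, and the prophet guarantee comes from the fact that the ex-ante value upper-bounds the prophet benchmark. Your explicit appeal to \Cref{lem:ex-ante-bernoulli} and your remark about the removed elements with $x_e=0$ only spell out steps that the paper leaves implicit.
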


\begin{proof}
By \Cref{lem:intersection-extracted-feasibility}, every set independent in
each extracted matroid $\widehat M_i$ is independent in the corresponding
original matroid $M_i$. Hence the set accepted by the algorithm is feasible
for the intersection
\[
\bigcap_{i=1}^q \mathcal I_i.
\]

By \Cref{lem:intersection-gain-surplus},
\[
\mathbb E[v(I)]
\geq
y(U).
\]
On the other hand, by \Cref{lem:q-plus-one-accounting},
\[
\sum_{e\in U}x_ev_e
\leq
(q+1)y(U).
\]
Therefore,
\[
y(U)
\geq
\frac{1}{q+1}
\sum_{e\in U}x_ev_e.
\]
Combining \Cref{lem:intersection-gain-surplus,lem:q-plus-one-accounting} gives
\[
\mathbb E[v(I)]
\geq
\frac{1}{q+1}
\sum_{e\in U}x_ev_e.
\]

Finally, the ex-ante relaxation upper-bounds the prophet benchmark. Hence the
algorithm is a $(q+1)$-approximation with respect to the prophet.
\end{proof}

\begin{remark}[Polynomial-time implementation]
By \Cref{lem:matroid-intersection-poly-time}, every density decomposition and
its prices can be computed in polynomial time, and by
\Cref{lem:coupled-fixed-point-polytime}, a surplus vector satisfying the
coupled fixed-point condition can be computed in polynomial time. The
extracted direct-sum matroids and their independence tests have polynomial
oracle overhead. Therefore, once the ex-ante solution $x$ is available, the
entire matroid-intersection procedure of this section runs in polynomial time.
\end{remark}

\section*{Acknowledgements}
The first author acknowledges the use of ChatGPT, specifically GPT-5.5 Thinking, in the development of this work. The system was used to explore proofs of \cref{lem:extraction-charging} for special classes of matroids, and to help organize a clean procedure for decomposing a given instance into nice instances. It was also used to assist with editing and writing parts of the manuscript. The authors take full responsibility for all mathematical claims, proofs, and final editorial decisions.

In the development of \Cref{sec:surplus-algorithm,sec:q-intersection}, the first author also acknowledges the use of ChatGPT, specifically GPT-5.6 Sol. Motivated by the $(q+1)$-approximation of Alon, Pollner, and Weinberg for intersections of partition matroids, the first author introduced the surplus-threshold formulation of \Cref{sec:surplus-algorithm} and used GPT-5.6 Sol to explore and establish analogues of the structural properties developed in the preceding section, including the corresponding extraction procedure and its analysis. The first author then asked GPT-5.6 Sol to investigate whether the extraction framework of \Cref{sec:surplus-algorithm} could be combined with a Brouwer fixed-point argument to obtain a $(q+1)$-approximation for intersections of arbitrary matroids. The guiding idea was to replace the parts appearing in the partition-matroid setting by the extracted matroid pieces. GPT-5.6 Sol assisted in developing the resulting coupled density decomposition, the fixed-point formulation and proof of existence of the common surplus vector, and the proofs underlying the $(q+1)$ guarantee in \Cref{sec:q-intersection}. The authors subsequently checked, revised, and organized these arguments and take full responsibility for all mathematical claims, proofs, and final editorial decisions.
\newpage

\bibliographystyle{alpha}
\bibliography{main}

@incollection{KrengelSucheston,
  author    = {Ulrich Krengel and Louis Sucheston},
  title     = {On Semiamarts, Amarts, and Processes with Finite Value},
  booktitle = {Probability on Banach Spaces},
  editor    = {James Kuelbs},
  series    = {Advances in Probability and Related Topics},
  volume    = {4},
  pages     = {197--266},
  publisher = {Marcel Dekker},
  address   = {New York},
  year      = {1978}
}

@article{SamuelCahn,
  author  = {Ester Samuel-Cahn},
  title   = {Comparison of Threshold Stop Rules and Maximum for Independent Nonnegative Random Variables},
  journal = {The Annals of Probability},
  volume  = {12},
  number  = {4},
  pages   = {1213--1216},
  year    = {1984},
  doi     = {10.1214/aop/1176993150}
}

@article{KleinbergWeinbergJournal,
  author  = {Robert Kleinberg and S. Matthew Weinberg},
  title   = {Matroid Prophet Inequalities and Applications to Multi-Dimensional Mechanism Design},
  journal = {Games and Economic Behavior},
  volume  = {113},
  pages   = {97--115},
  year    = {2019},
  doi     = {10.1016/j.geb.2014.11.002}
}

@inproceedings{Chawla2010,
  author    = {Shuchi Chawla and Jason D. Hartline and David L. Malec and Balasubramanian Sivan},
  title     = {Multi-Parameter Mechanism Design and Sequential Posted Pricing},
  booktitle = {Proceedings of the Forty-Second ACM Symposium on Theory of Computing},
  series    = {STOC '10},
  pages     = {311--320},
  publisher = {Association for Computing Machinery},
  year      = {2010},
  doi       = {10.1145/1806689.1806733}
}

@article{FeldmanSvenssonZenklusen,
  author  = {Moran Feldman and Ola Svensson and Rico Zenklusen},
  title   = {Online Contention Resolution Schemes with Applications to Bayesian Selection Problems},
  journal = {SIAM Journal on Computing},
  volume  = {50},
  number  = {2},
  pages   = {255--300},
  year    = {2021},
  doi     = {10.1137/18M1226130}
}

@article{ChawlaGraphic,
  author  = {Shuchi Chawla and Kira Goldner and Anna R. Karlin and J. Benjamin Miller},
  title   = {Non-Adaptive Matroid Prophet Inequalities},
  journal = {CoRR},
  volume  = {abs/2011.09406},
  year    = {2020},
  eprint  = {2011.09406},
  archivePrefix = {arXiv},
  primaryClass  = {cs.DS}
}

@article{PashkovichSayutina,
  author  = {Kanstantsin Pashkovich and Alice Sayutina},
  title   = {Non-Adaptive Prophet Inequalities for Minor-Closed Classes of Matroids},
  journal = {Discrete Applied Mathematics},
  volume  = {383},
  pages   = {26--43},
  year    = {2026},
  doi     = {10.1016/j.dam.2025.12.001}
}

@inproceedings{LeeSingla2018,
  author    = {Lee, Euiwoong and Singla, Sahil},
  title     = {Optimal Online Contention Resolution Schemes via Ex-Ante Prophet Inequalities},
  booktitle = {26th Annual European Symposium on Algorithms (ESA)},
  series    = {LIPIcs},
  volume    = {112},
  pages     = {57:1--57:14},
  year      = {2018},
  doi       = {10.4230/LIPIcs.ESA.2018.57},
  note      = {Preprint: arXiv:1806.09251}
}

@misc{jiang2026costnonadaptivitymatroidprophet,
  author        = {Tianle Jiang},
  title         = {On the Cost of Non-Adaptivity in Matroid Prophet Inequalities},
  year          = {2026},
  note          = {arXiv preprint arXiv:2607.02766. Available at
                   \url{https://arxiv.org/abs/2607.02766}}
}

@InProceedings{saxena_et_al:LIPIcs.ITCS.2023.95,
  author =	{Saxena, Raghuvansh R. and Velusamy, Santhoshini and Weinberg, S. Matthew},
  title =	{{An Improved Lower Bound for Matroid Intersection Prophet Inequalities}},
  booktitle =	{14th Innovations in Theoretical Computer Science Conference (ITCS 2023)},
  pages =	{95:1--95:20},
  series =	{Leibniz International Proceedings in Informatics (LIPIcs)},
  ISBN =	{978-3-95977-263-1},
  ISSN =	{1868-8969},
  year =	{2023},
  volume =	{251},
  editor =	{Tauman Kalai, Yael},
  publisher =	{Schloss Dagstuhl -- Leibniz-Zentrum f{\"u}r Informatik},
  address =	{Dagstuhl, Germany},
  URL =		{https://drops.dagstuhl.de/entities/document/10.4230/LIPIcs.ITCS.2023.95},
  URN =		{urn:nbn:de:0030-drops-175986},
  doi =		{10.4230/LIPIcs.ITCS.2023.95}
}

@article{10.1007/s10107-023-02027-2,
author = {Correa, Jos{\'e} and Cristi, Andr{\'e}s and Fielbaum, Andr{\'e}s and Pollner, Tristan and Weinberg, S. Matthew},
title = {Optimal item pricing in online combinatorial auctions},
year = {2023},
issue_date = {Jul 2024},
publisher = {Springer-Verlag},
address = {Berlin, Heidelberg},
volume = {206},
number = {1–2},
issn = {0025-5610},
url = {https://doi.org/10.1007/s10107-023-02027-2},
doi = {10.1007/s10107-023-02027-2},
journal = {Math. Program.},
month = oct,
pages = {429–460},
numpages = {32}
}

\newpage
\appendix
\section{Proof of the Ex-Ante Reduction}
\label{app:exante-reduction}

\begin{proof}[Proof of \Cref{lem:ex-ante-bernoulli}]
Consider an arbitrary prophet inequality instance with independent
nonnegative random variables $\{X_e\}_{e\in U}$ and a feasibility system whose
marginal relaxation is $\mathcal P$. Let
\[
x\in\mathcal P
\]
be an optimal solution to the ex-ante relaxation
\[
\max\left\{
\sum_{e\in U}R_e(x_e):
x\in\mathcal P
\right\}.
\]

For every element $e\in U$, let $A_e$ denote the event that $X_e$ lies in its
top $x_e$-quantile, with fractional tie-breaking if necessary, so that
\[
\Pr[A_e]=x_e.
\]
If $x_e>0$, define
\[
v_e:=\mathbb E[X_e\mid A_e],
\]
and if $x_e=0$, set $v_e=0$. By the definition of $R_e$,
\[
x_ev_e=R_e(x_e)
\]
for every $e\in U$. Consequently,
\[
\sum_{e\in U}x_ev_e
=
\sum_{e\in U}R_e(x_e),
\]
which is the value of the ex-ante relaxation.

We now construct the corresponding Bernoulli instance. Element $e$ is active
with probability $x_e$, and, when active, has value $v_e$. We couple this
Bernoulli instance with the original instance by declaring $e$ to be active
if and only if $A_e$ occurs. Since the original random variables are
independent, the events $\{A_e\}_{e\in U}$ are independent.

Fix an arbitrary arrival permutation chosen before the random values are
realized. Run the assumed
Bernoulli-instance algorithm on the original instance using the coupling
above. When element $e$ arrives, the algorithm treats it as active if and only
if $A_e$ occurs, and then makes exactly the same accept/reject decision that it
would make in the Bernoulli instance. Thus, for every realization of the
activity indicators $\{\mathbf 1_{A_e}\}_{e\in U}$ and of the internal
randomness of the algorithm, the accepted set is exactly the same in the two
coupled instances.

It remains to compare the values obtained in the two instances. Let $B_e$
denote the event that element $e$ is accepted. Since an element can be
accepted only when it is active,
\[
B_e\subseteq A_e.
\]
Conditional on $A_e$, the event $B_e$ depends on the arrival order, the
activity indicators of the other elements, the previously accepted set, and
the internal randomness of the algorithm, but it does not depend on the exact
realized value of $X_e$ within its top $x_e$-quantile. Indeed, the translated
algorithm uses $X_e$ only through the indicator $\mathbf 1_{A_e}$, and the
arrival order is chosen independently of all realized values. Since $X_e$ is
independent of the random variables of the other elements, it follows that
\[
\mathbb E[X_e\mid A_e,B_e]
=
\mathbb E[X_e\mid A_e]
=
v_e.
\]
Therefore,
\[
\begin{aligned}
\mathbb E[X_e\mathbf 1_{B_e}]
&=
\Pr[B_e]\,
\mathbb E[X_e\mid B_e] \\
&=
v_e\Pr[B_e].
\end{aligned}
\]
The right-hand side is exactly the expected contribution of $e$ in the
coupled Bernoulli instance. Summing over all elements gives
\[
\mathbb E[\mathrm{ALG}_{\mathrm{original}}]
=
\mathbb E[\mathrm{ALG}_{\mathrm{Bernoulli}}].
\]

By assumption,
\[
\mathbb E[\mathrm{ALG}_{\mathrm{Bernoulli}}]
\geq
c\sum_{e\in U}x_ev_e
=
c\sum_{e\in U}R_e(x_e),
\]
which is $c$ times the value of the ex-ante relaxation.

Since the vector of marginal selection probabilities of every feasible random
set belongs to $\mathcal P$, the ex-ante relaxation upper bounds the prophet
benchmark. Therefore, the resulting online algorithm obtains at least a
$c$ fraction of the prophet benchmark.

Finally, all quantities used in the reduction---namely the vector $x$, the
events defining the top quantiles, the values $v_e$, and any thresholds or
additional feasibility constraints computed from the Bernoulli
instance---are determined without using the realized values. Hence the
reduction preserves pure non-adaptivity and almost non-adaptivity. Moreover,
if the Bernoulli-instance algorithm is order-oblivious, then the resulting
algorithm is order-oblivious as well.
\end{proof}

\section{Proofs of \cref{lem:extraction-polytime,lem:extraction-Surplus-polytime,lem:matroid-intersection-poly-time}}
\label{sec:appendix-polytime}

\begin{lemma}\label{app-lem:extraction-polytime}
Given an independence oracle for \(M\), the extraction procedure can be
implemented in polynomial time.
\end{lemma}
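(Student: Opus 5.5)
The core task in each iteration of \Cref{alg:extract} is to maximize $T_{M_i}(S)=w(S)/(r_{M_i}(S)+x(S))$ over nonempty $S$, and to test whether the current instance is nice. Niceness of $M_i$ is equivalent to the maximum equalling $T_{M_i}(U(M_i))$, so one maximization per iteration suffices. There are at most $n$ iterations, since each removes at least one element. An independence oracle for the minor $M_i=M/A_{i-1}$ comes from one for $M$: fix a basis $B$ of $A_{i-1}$ and test $I\cup B$. Hence it suffices to solve the fractional maximization in polynomial time for a general matroid given by an independence oracle.

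I would use the standard parametric (Dinkelbach / Newton) reduction. For a parameter $\lambda\ge0$, define
\[
h(\lambda):=\max_{S\subseteq U}\bigl(w(S)-\lambda x(S)-\lambda r(S)\bigr)=\max_{S}\Bigl(\sum_{e\in S}x_e(v_e-\lambda)-\lambda r(S)\Bigr).
\]
Then $\max_{S\neq\emptyset} T(S)\ge\lambda$ if and only if some nonempty $S$ has value $\ge0$. The inner problem minimizes $\lambda r(S)-\sum_{e\in S}c_e$ with $c_e=x_e(v_e-\lambda)$. This is submodular function minimization, since $r$ is submodular and the second term is modular, so it is solvable in strongly polynomial time with rank-oracle calls. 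Rank is computable greedily from the independence oracle.

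More concretely, the inner problem is a matroid base-polytope problem. The minimum of $\lambda r(S)-c(S)$ is $-\max\{\,\sum_e \min(\cdot)\,\}$ via Edmonds' theorem: $\min_S(\lambda r(S)+c'(U\setminus S))$ for a modular $c'$ is the maximum of $z(U)$ over $z\le c'$ in $\lambda P_M$. This is a polymatroid intersection solvable by matroid-intersection-type algorithms. I would cite general SFM to keep the argument simple.

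To avoid the empty set giving trivial value $0$, I would handle nonemptiness by forcing each element $e$ in turn: minimize over $S\ni e$, which is SFM on the contracted lattice. Alternatively, Newton's method gives it automatically: start with $\lambda_0=T(U)$, compute a maximizer $S_k$ of $w(S)-\lambda_k(r(S)+x(S))$ over sets containing the forced element, and set $\lambda_{k+1}=T(S_k)$. The expected obstacle is bounding the number of Newton iterations. Here I would invoke the known result (Radzik's analysis of Newton's method for linear-fractional combinatorial optimization) that the number of iterations is strongly polynomial, $O(n^2\log n)$. The argument uses that the denominators $r(S)+x(S)$ take values whose rank parts are integers in $[0,n]$. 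If the inputs are rational, I would instead use a binary search on $\lambda$: the distinct values of $T(S)$ have denominators with bounded encoding, so polynomially many bisection steps pin down the optimum exactly, followed by rounding. Finally, the returned maximizer $U_i$ and threshold $T_i=T_{M_i}(U_i)$ are computed directly, completing the polynomial bound.
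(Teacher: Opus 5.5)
Your proposal is correct and follows essentially the paper's route: you test $\max_{S\neq\emptyset}T_{M_i}(S)\ge\lambda$ by submodular minimization of $\lambda r(S)+\lambda x(S)-w(S)$ (rank computed greedily from the independence oracle), locate the optimum by binary search over the rational breakpoints, and bound the number of extraction rounds by $n$; your explicit handling of nonempty minimizers and of oracles for the minors $M/A_{i-1}$ fills in details the paper leaves implicit. The Newton detour is unnecessary and its justification is loose: Radzik's bound concerns objectives linear in a $0/1$ vector, so you would need to lift to pairs $(S,I)$ with $I$ a basis of $S$ (making the denominator $|I|+x(S)$ linear) or cite the Goemans--Gupta--Jaillet discrete-Newton bound for line search in $P(r+x)$ along direction $w$; integrality of the rank part alone is not what drives the iteration bound.
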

\begin{proof}
We first explain how to check whether a current Bernoulli matroid instance
\((N,x,v)\), with ground set \(W\), is nice, and how to find a set maximizing
the threshold
\[
    T_N(S)=\frac{w(S)}{r_N(S)+x(S)}
    \qquad
    \emptyset\neq S\subseteq W.
\]
Since \(x_e>0\) for every \(e\in W\), the denominator is positive for every
nonempty set \(S\).

For a parameter \(\lambda\ge 0\), there exists a nonempty set \(S\subseteq W\)
with
\[
    T_N(S)>\lambda
\]
if and only if there exists a set \(S\subseteq W\) with
\[
    w(S)-\lambda x(S)-\lambda r_N(S)>0.
\]
Equivalently,
\[
    \min_{S\subseteq W}
    \left\{
        \lambda r_N(S)+\lambda x(S)-w(S)
    \right\}
    <0.
\]
The set function
\[
    f_\lambda(S)
    :=
    \lambda r_N(S)+\lambda x(S)-w(S)
\]
is submodular, since \(r_N\) is submodular and \(x(S)\) and \(w(S)\) are modular
functions. Therefore, for any fixed \(\lambda\), the above minimization problem
can be solved in polynomial time by submodular function minimization.

The rank function \(r_N\) can be evaluated in polynomial time using the
independence oracle: to compute \(r_N(S)\), greedily build a maximal independent
subset of \(S\). Hence \(f_\lambda\) is available through a polynomial-time
value oracle. Standard parametric submodular minimization, or equivalently
binary search over the rational breakpoints together with submodular
minimization, gives a set maximizing \(T_N(S)\) in polynomial time.

Thus, at each iteration of \Cref{alg:extract}, we can check whether the current
instance is nice and, if it is not nice, find a nonempty set of maximum
threshold in polynomial time. Since each non-final iteration extracts a
nonempty proper subset of the current ground set, the number of iterations is at
most \(|U|\). Therefore the entire extraction procedure runs in polynomial
time.
\end{proof}

In this section, we prove the properties of the extraction procedure used in
\Cref{subsec:surplus-extraction}. Throughout this section, let
\[
(M'_1,U_1,T'_1),\ldots,(M'_\ell,U_\ell,T'_\ell)
\]
be the output of the extraction procedure.

\begin{lemma}
\label{app-lem:extraction-Surplus-polytime}
Given an independence oracle for a loopless matroid $M$, the
\textsc{Surplus-Extract} procedure can be implemented in polynomial time.
Moreover, at every iteration, one can find in polynomial time a flat of the
current matroid that maximizes the surplus threshold.
\end{lemma}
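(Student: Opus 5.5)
The plan is to mirror the proof of \Cref{app-lem:extraction-polytime}: reduce the search for a maximum-surplus-threshold set to parametric submodular minimization, and then pass to the closure using \Cref{lem:surplus-max-flat}. Fix the current loopless matroid $N$ with ground set $W$; its rank oracle is simulated from the independence oracle of $M$ via contractions and restrictions, with polynomial overhead. For $\lambda\ge 0$ define
\[
g_\lambda(S):=\lambda r_N(S)-\sum_{e\in S}x_e(v_e-\lambda)^+ .
\]
For fixed $\lambda$, the second term is modular in $S$, so $g_\lambda$ is submodular. For a nonempty set $S$, the function $t\mapsto\sum_{e\in S}x_e(v_e-t)^+-t\,r_N(S)$ is strictly decreasing because $r_N(S)>0$. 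Hence $T'_N(S)>\lambda$ if and only if $g_\lambda(S)<0$. Therefore $\max_S T'_N(S)>\lambda$ if and only if $\min_{S\subseteq W}g_\lambda(S)<0$, and the empty set gives the value $0$. Each such test is a single submodular function minimization.

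Next we locate the exact maximum value $\lambda^*$. On each interval between consecutive breakpoints among the values $\{v_e\}$, the function $g_\lambda(S)$ is affine in $\lambda$ for each $S$. The value $T'_N(S)$ is therefore the root of a piecewise-linear equation. On the breakpoint interval containing it, $T'_N(S)=\bigl(\sum_{e\in S,v_e>\lambda}x_ev_e\bigr)/\bigl(r_N(S)+\sum_{e\in S,v_e>\lambda}x_e\bigr)$, which is a rational number with bounded encoding length. We proceed in two stages. First, a polynomial number of submodular minimizations at the values $\lambda=v_e$ identifies the breakpoint interval containing $\lambda^*$. Inside that interval, the function $h(\lambda):=\min_S g_\lambda(S)$ is concave and piecewise linear. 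We find its largest zero by Newton's (Dinkelbach) method: take the minimizer $S$ at the current $\lambda$ and reset $\lambda:=T'_N(S)$. The standard parametric bound gives polynomially many iterations, and binary search over rationals of bounded denominator is a fallback. The final minimizer $S^*$ with $g_{\lambda^*}(S^*)=0$, $S^*\neq\emptyset$, attains the maximum. To guarantee nonemptiness, we minimize $g_{\lambda^*}$ over sets containing a fixed element $e$, for each $e\in W$ in turn, which is again submodular minimization.

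Then we set $U_i:=\operatorname{cl}_N(S^*)$. This can be computed with $|W|$ rank queries, and by \Cref{lem:surplus-max-flat} it is a maximum-surplus-threshold flat. The surplus-niceness test is also covered: $N$ is surplus-nice if and only if $T'_N(W)\ge\lambda^*$, and $T'_N(W)$ is computed directly by sorting the $v_e$ and solving a linear equation on the appropriate piece. Since every non-final iteration removes a nonempty set, there are at most $|U|$ iterations, so the whole procedure runs in polynomial time.

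The main obstacle is making the parametric step rigorous, namely producing an exact maximizer rather than an approximate one. The plan is to rely on the piecewise-linear structure with the $v_e$ as breakpoints together with Dinkelbach iteration. If bounding the number of Newton steps proves delicate, the fallback is binary search using a lower bound on the gap between distinct candidate rational values.
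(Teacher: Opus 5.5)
Your proposal is correct and follows the same route as the paper. Both arguments use the submodularity of $\lambda r_N(S)-\sum_{e\in S}x_e(v_e-\lambda)^+$ for a parametric search yielding $\lambda^*$ and a nonempty maximizer $S^*$, pass to $\operatorname{cl}_N(S^*)$ via \Cref{lem:surplus-max-flat}, and bound the iterations by $|U|$; you additionally spell out the parametric step the paper only cites as standard (breakpoints at the $v_e$, Dinkelbach iteration, forcing nonemptiness, and the niceness test $T'_N(W)\ge\lambda^*$).

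One small wording slip: since $g_\lambda(\emptyset)=0$, the function $h$ vanishes identically on $[\lambda^*,\infty)$, so $\lambda^*$ is the smallest zero of $h$, not the largest. Your Newton iteration from below does still converge to the correct point.
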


\begin{proof}
Let $N=(W,\mathcal J)$ be a loopless matroid appearing at some iteration of
the extraction procedure. For a parameter $\lambda\geq 0$, consider the set
function
\[
f_\lambda(S)
:=
\lambda r_N(S)
-
\sum_{e\in S}x_e(v_e-\lambda)^+,
\qquad
S\subseteq W.
\]
For every fixed $\lambda$, the function
\[
S\longmapsto \lambda r_N(S)
\]
is submodular, while
\[
S\longmapsto
\sum_{e\in S}x_e(v_e-\lambda)^+
\]
is modular. Hence $f_\lambda$ is submodular.

Moreover, for every nonempty set $S\subseteq W$,
\[
T'_N(S)>\lambda
\]
if and only if
\[
\sum_{e\in S}x_e(v_e-\lambda)^+
>
\lambda r_N(S),
\]
or equivalently,
\[
f_\lambda(S)<0.
\]
Therefore,
\[
\max_{\emptyset\neq S\subseteq W}T'_N(S)>\lambda
\]
if and only if
\[
\min_{S\subseteq W}f_\lambda(S)<0.
\]

For every fixed $\lambda$, the latter minimization problem can be solved in
polynomial time by submodular function minimization. Indeed, the rank function
$r_N$ can be evaluated in polynomial time using an independence oracle for
$N$, and hence $f_\lambda$ is available through a polynomial-time value
oracle.

It follows from standard parametric submodular minimization that we can find
in polynomial time the largest value
\[
\lambda^*
=
\max_{\emptyset\neq S\subseteq W}T'_N(S)
\]
together with a nonempty set $S^*\subseteq W$ satisfying
\[
T'_N(S^*)=\lambda^*.
\]
Equivalently, $S^*$ satisfies
\[
\sum_{e\in S^*}x_e(v_e-\lambda^*)^+
=
\lambda^*r_N(S^*).
\]

We now turn $S^*$ into a flat. By
\Cref{lem:surplus-max-flat},
\[
T'_N(S^*)
\leq
T'_N\bigl(\operatorname{cl}_N(S^*)\bigr).
\]
Since $S^*$ already has maximum surplus threshold, equality must hold.
Therefore,
\[
U_i:=\operatorname{cl}_N(S^*)
\]
is a flat that also maximizes the surplus threshold.

The closure of a set can be computed in polynomial time using the independence
oracle. Indeed, for every $e\in W\setminus S^*$, we test whether
\[
r_N(S^*\cup\{e\})=r_N(S^*).
\]
The elements satisfying this equality are precisely those in
$\operatorname{cl}_N(S^*)$.

Thus, at every iteration, a maximum-surplus-threshold flat can be found in
polynomial time.

Finally, each non-final iteration extracts a nonempty set and contracts it, so
at least one element is removed from the current ground set. Hence the number
of iterations is at most $|U|$. Independence and rank queries in every
restriction and contraction can be simulated with polynomial overhead using
the independence oracle for the original matroid $M$.

Therefore, the entire \textsc{Surplus-Extract} procedure can be implemented
in polynomial time.
\end{proof}

\begin{lemma}[Polynomial-time density decomposition]
\label{app-lem:matroid-intersection-poly-time}
Given an independence oracle for $M_i$, the density decomposition of $y$
can be computed in polynomial time.
\end{lemma}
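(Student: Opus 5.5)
The plan is to compute the decomposition greedily: at each step, find the maximum density $\lambda$ of the current residual matroid together with the inclusion-wise maximal maximizer, contract it, and repeat. Write $N$ for the current loopless residual matroid on ground set $W$, obtained from $M_i$ by contracting the blocks already extracted. Rank queries in $N$ reduce to rank queries in $M_i$ via $r_N(S)=r_{M_i}(S\cup A)-r_{M_i}(A)$. Each such rank is computable greedily with the independence oracle, so $N$ admits a polynomial-time rank oracle.

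For a fixed parameter $\lambda\ge0$, I would consider the set function
\[
g_\lambda(S):=\lambda r_N(S)-y(S),\qquad S\subseteq W .
\]
It is submodular, being a nonnegative multiple of a rank function minus a modular function, and $g_\lambda(\emptyset)=0$. A nonempty set of density greater than $\lambda$ exists if and only if $\min_S g_\lambda(S)<0$. This gives a polynomial-time test via submodular function minimization.

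To find the exact maximum density $\lambda^*$, I would use the fact that $\lambda^*=y(S)/r_N(S)$ for some $S$. So $\lambda^*$ is one of the values $y(S)/k$ with $k\in\{1,\ldots,r_N(W)\}$. I would proceed by Dinkelbach/Newton iteration. Start from $\lambda_0:=y(W)/r_N(W)$. Given $\lambda_t$, minimize $g_{\lambda_t}$; if the minimum is negative at a minimizer $S_t$, set $\lambda_{t+1}:=y(S_t)/r_N(S_t)$. Standard analysis of Newton's method for this kind of ratio with a bounded integer denominator shows the iteration strictly increases $\lambda$ and terminates in polynomially many steps (at most $O(|W|)$ iterations suffice, since the rank of the minimizer strictly decreases). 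Alternatively, one can binary search over rationals with denominator at most $r_N(W)$ when $y$ is rational. At termination $\min_S g_{\lambda^*}(S)=0$, so the maximum density is exactly $\lambda^*$.

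Next I need the inclusion-wise maximal maximizer $V$, which by \Cref{lem:density-union} is the union of all maximum-density sets. At $\lambda^*$, the nonempty maximum-density sets are exactly the nonempty minimizers of $g_{\lambda^*}$, i.e.\ the sets with value $0$. By submodularity the minimizers of $g_{\lambda^*}$ form a lattice, and standard submodular minimization algorithms can return the unique maximal minimizer. Alternatively, for each $e\in W$ I can minimize $g_{\lambda^*}$ over sets containing $e$ (i.e.\ minimize $g_{\lambda^*}$ restricted to the contraction-type family, still submodular) and include $e$ if and only if the minimum is $0$. The union of the resulting sets is then a minimizer by the lattice property, and it is maximal.

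A subtle case arises when $\lambda^*=0$, for example if $y\equiv0$ on $W$. There $g_0=-y$, and the maximal minimizer is all of $W$, which is consistent with the definition.

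Setting $V_j:=V$ and $p_j:=\lambda^*$, I contract $V$; by \Cref{lem:density-max-flat} $V$ is a flat, so the residual matroid stays loopless and the procedure continues. Since each step removes a nonempty set, there are at most $|U|$ iterations, and the whole decomposition is polynomial.

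The main obstacle is computing $\lambda^*$ exactly and certifiably. I would handle it via the Newton/Dinkelbach argument with integer denominators bounded by the rank, relying on the standard polynomial bound for parametric submodular minimization as the paper does in the earlier appendix lemmas.
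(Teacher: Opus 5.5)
Your proposal is correct and takes essentially the paper's route: decide whether some set has density above $\lambda$ by minimizing the submodular function $\lambda r_N(S)-y(S)$, find the maximum density parametrically, contract, and repeat at most $|U|$ times. Your version is more complete than the paper's sketch, because you make explicit the residual rank oracle, a Dinkelbach iteration whose minimizer rank strictly decreases (giving the exact $\lambda^*$ in at most $r_N(W)$ rounds), and the extraction of the inclusion-wise maximal minimizer that the canonical decomposition requires but the paper only calls ``a maximum-density set'' (one small slip: in your binary-search alternative the candidate values $y(S)/k$ have denominators bounded by $D\,r_N(W)$ for a common denominator $D$ of $y$, not by $r_N(W)$).
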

\begin{proof}
At every iteration, we need to find a set maximizing
\[
\frac{y(S)}{r(S)} .
\]

For a parameter $\lambda\geq 0$, there exists a set with density larger than
$\lambda$ if and only if
\[
y(S)-\lambda r(S)>0
\]
for some set $S$.

Equivalently, we need to solve
\[
\min_{S\subseteq U}
\{\lambda r(S)-y(S)\}.
\]

The function
\[
\lambda r(S)-y(S)
\]
is submodular, since the rank function is submodular and $y(S)$ is modular.
Therefore, it can be minimized in polynomial time using submodular function
minimization.

Using parametric submodular minimization, we can find a maximum-density set
in polynomial time. Since every non-final iteration removes at least one
element, there are at most $|U|$ iterations.
\end{proof}

\begin{lemma}[Polynomial-time computation of the coupled surplus vector]
\label{lem:coupled-fixed-point-polytime}
Given independence oracles for $M_1,\ldots,M_q$, a surplus vector $y$
satisfying
\[
y_e
=
x_e
\left(
v_e-\sum_{i=1}^q D_{M_i}(y)_e
\right)^+
\]
can be computed in polynomial time.
\end{lemma}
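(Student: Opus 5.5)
The plan is to recast the coupled fixed-point condition as the first-order optimality condition of a single convex program over $y\in K:=\prod_{e\in U}[0,x_ev_e]$. Once this is done, the fixed point can be found by convex minimization, with \Cref{lem:matroid-intersection-poly-time} supplying the gradients. The key observation is that the density-price map is a gradient. For a loopless matroid $M$ with base polytope $B_M$, define
\[
\psi_M(y):=\min_{b\in B_M}\ \sum_{e\in U}\frac{y_e^2}{2b_e},
\]
with the convention $0^2/0=0$. I will show that $\psi_M$ is convex and differentiable on $\mathbb R_+^U$, and that
\[
\nabla\psi_M(y)=D_M(y).
\]

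\emph{Step 1 (gradient identity).} Convexity of $\psi_M$ follows because $(y_e,b_e)\mapsto y_e^2/b_e$ is a perspective function, hence jointly convex, and partial minimization over the convex set $B_M$ preserves convexity. For the minimizer, I will use the density decomposition $V_1,\dots,V_m$ with prices $p_1\geq\cdots\geq p_m$ and set $b_e:=y_e/p_j$ for $e\in V_j$ when $p_j>0$. Then \Cref{lem:intersection-density-inequalities} says exactly that $b|_{V_j}$ is a base of $N_j$. Gluing these bases along the chain $A_1\subseteq A_2\subseteq\cdots$ produces a base of $M$ that is tight on every $A_j$, that is, an extreme subgradient of the Lovász extension at any vector ordered like $p$. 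The KKT conditions of the minimization require $y_e^2/b_e^2$ to be constant on each block and non-increasing along the chain. This is a Fujishige-type lexicographically optimal base, so the minimizer is the $b$ just described. Blocks with $p_j=0$ have $y(V_j)=0$ and need a separate limiting argument. Danskin's theorem then gives $\partial\psi_M/\partial y_e=y_e/b_e=p_e=D_M(y)_e$. Uniqueness of the gradient, and hence differentiability, follows from the continuity proved in \Cref{lem:density-price-continuity}.

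\emph{Step 2 (the convex program).} I will minimize
\[
G(y):=\sum_{i=1}^q\psi_{M_i}(y)+\sum_{e\in U}\Bigl(\frac{y_e^2}{2x_e}-v_ey_e\Bigr)
\qquad\text{over } y\in K .
\]
$G$ is strictly convex, because the separable part is, so its minimizer $y^*$ is unique. The coordinatewise optimality conditions give the fixed point, with $T_e=\sum_i D_{M_i}(y^*)_e\geq 0$:
\begin{itemize}
\item If $0<y^*_e<x_ev_e$, the derivative vanishes: $T_e+y^*_e/x_e-v_e=0$, so $y^*_e=x_e(v_e-T_e)$ with $v_e>T_e$.
\item If $y^*_e=0$, the derivative is nonnegative: $T_e\geq v_e$, which matches $(v_e-T_e)^+=0$.
\item If $y^*_e=x_ev_e$, the derivative is nonpositive: $T_e\leq 0$, so $T_e=0$ and the identity again holds.
\end{itemize}
Hence $y^*$ satisfies $y^*_e=x_e\bigl(v_e-\sum_iD_{M_i}(y^*)_e\bigr)^+$ for every $e\in U$. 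This also recovers \Cref{thm:coupled-density-fixed-point} without Brouwer.

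\emph{Step 3 (computation).} Both $G$ and $\nabla G$ can be evaluated in polynomial time. By \Cref{lem:matroid-intersection-poly-time}, one density decomposition per matroid gives the prices, and the value $\psi_{M_i}(y)$ equals $\tfrac12\sum_j p_j\,y(V_j)$. Projected (sub)gradient descent or the ellipsoid method then yields an $\varepsilon$-minimizer in time polynomial in the input size and $\log(1/\varepsilon)$.

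To obtain an exact fixed point, I will exploit the finite combinatorial structure. For each $i$, once the ordered blocks of each $M_i$ are fixed, and once we fix which coordinates lie in the interior or at a bound, the optimality conditions become a linear system with rational data. For $\varepsilon$ small enough, the approximate minimizer identifies the correct block structure, and solving this system recovers $y^*$ exactly. Equivalently, one can round to the nearby rational solution of bounded bit length.

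I expect the main obstacle to be Step 1. The issue is proving rigorously that the density prices are the gradient of $\psi_M$, including the degenerate blocks with $p_j=0$ where $b$ is not determined by $y$, and proving that the minimizing $b$ really is the glued chain base. The exact-recovery argument in Step 3 is the second delicate point: I will need an explicit separation bound between distinct block structures to justify that a sufficiently accurate approximate minimizer determines the exact one.
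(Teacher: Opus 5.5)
Your proposal is correct, but it takes a genuinely different route: it is the Fenchel dual of the paper's argument.

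\textbf{How the two routes relate.} The paper minimizes, over price vectors $p^1,\dots,p^q\geq 0$, the potential $\Psi=\sum_i G_i(p^i)+\frac12\sum_e x_e\bigl((v_e-T_e)^+\bigr)^2$, where $G_i(p)=\frac12\max_{z\in P(M_i)}\sum_e z_ep_e^2$. Its subgradients come from one greedy run per matroid. First-order optimality gives $y=z^i\odot p^i$ with $z^i$ greedy-optimal. A separate argument then shows $D_{M_i}(y)_e\geq p^i_e$, with equality when $y_e>0$.

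Your $\psi_M$ is exactly the convex conjugate of $G_i$ on the nonnegative orthant: swap $\sup_p$ with $\min_{z\in P_M}$, and note that the objective is non-increasing in $z$. Likewise, $\frac{x_e}{2}((v_e-T_e)^+)^2$ is the conjugate of $y_e^2/(2x_e)-v_ey_e$ (on $y_e\geq 0$) evaluated at $-T_e$. So your $G$ is the dual of $\Psi$, and your identity $\nabla\psi_M=D_M$ is the conjugate form of the paper's step relating optimal prices to the canonical decomposition.

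\textbf{What each side buys.} Your side gains strong convexity, with modulus $\min_e 1/x_e\geq 1$. Consequently $y^*$ is unique, and the fixed point follows from coordinatewise optimality with no bookkeeping for zero-$y$ elements. Also, $\|\tilde y-y^*\|^2\leq 2\bigl(G(\tilde y)-G(y^*)\bigr)$ turns an approximate minimizer into an exact one. The paper glosses over this point when it asserts that convex optimization ``finds a minimizer of $\Psi$''. The cost is a heavier oracle: each gradient needs a full density decomposition (\Cref{lem:matroid-intersection-poly-time}) rather than a single greedy run.

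\textbf{Step 1 needs tightening.} Danskin's theorem does not apply verbatim, because $y_e^2/(2b_e)=+\infty$ at $b_e=0<y_e$. Use weak duality instead:
\[
\langle y,p\rangle-\tfrac12\textstyle\sum_e b_ep_e^2\leq\sum_e y_e^2/(2b_e)
\]
for all $p$ and all $b\in B_M$. Your glued $b$ is tight on the chain, so it maximizes $\sum_e z_eD_M(y)_e^2$ over $P_M$. Equality holds termwise at the pair $(b,D_M(y))$. This certifies at once that $b$ is optimal, without citing Fujishige, and that $D_M(y)\in\partial\psi_M(y)$, including blocks with $p_j=0$.

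Sandwiching difference quotients between subgradients at the two endpoints, and using \Cref{lem:density-price-continuity}, then gives one-sided partial derivatives $D_M(y)_e$. Two-sided differentiability genuinely fails at $y_e=0<D_M(y)_e$: two parallel elements with $y=(1,t)$ give $\psi_M=\frac12(1+|t|)^2$. This is harmless, because your Step~2 only uses the right derivative there.

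\textbf{Step 3 needs tightening.} Reading the block structure off $\tilde y$ can fail at a degenerate $y^*$, for example when consecutive prices are equal, or when $y^*_e=0$ with $T_e=v_e$. So use your rounding option instead. Let $S$ be the support of $y^*$. Then $y^*$ solves $(I+X_SP_S)y_S=X_Sv_S$, where $X_S$ is positive diagonal and
\[
P_S=\textstyle\sum_{i,j}(R^i_j)^{-1}\mathbf 1_{V^i_j\cap S}\mathbf 1_{V^i_j\cap S}^{\top}
\]
is positive semidefinite, with $R^i_j=r_{N^i_j}(V^i_j)$ for the canonical blocks of $y^*$. This system is nonsingular, so $y^*$ has polynomial bit length. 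Continued fractions then recover $y^*$ exactly from a sufficiently accurate $\tilde y$, and no separation bound between block structures is needed.
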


\begin{proof}
For each $i\in\{1,\ldots,q\}$ and $p\in\mathbb R_+^U$, define
\[
G_i(p)
:=
\frac{1}{2}
\max_{z\in P(M_i)}
\sum_{e\in U}z_ep_e^2.
\]
For fixed $z\in P(M_i)$, the function
\[
p\longmapsto
\frac{1}{2}\sum_{e\in U}z_ep_e^2
\]
is convex. Hence $G_i$, being the pointwise maximum of convex functions, is
convex.

For nonnegative price vectors
\[
p^1,\ldots,p^q\in\mathbb R_+^U,
\]
write
\[
T_e:=\sum_{i=1}^q p^i_e
\]
and consider the convex potential
\[
\Psi(p^1,\ldots,p^q)
:=
\sum_{i=1}^q G_i(p^i)
+
\frac{1}{2}
\sum_{e\in U}
x_e(v_e-T_e)_+^2.
\]

We minimize $\Psi$ over the nonnegative orthant. First observe that there exists an optimal solution satisfying
\[
T_e\leq v_e
\qquad
\text{for every }e\in U.
\]
Indeed, if $T_e>v_e$, choose an index $i$ with $p^i_e>0$ and decrease
$p^i_e$ while maintaining $T_e\geq v_e$. The second term of $\Psi$ remains
unchanged, while $G_i$ cannot increase, since every vector in $P(M_i)$ is
nonnegative. Repeating this operation yields an optimal solution in the
compact set
\[
K
:=
\left\{
(p^1,\ldots,p^q):
p^i\geq0,\quad
\sum_{i=1}^q p^i_e\leq v_e
\text{ for every }e
\right\}.
\]
Thus $\Psi$ attains its minimum. Fix a minimizer
\[
(p^1,\ldots,p^q),
\]
and define
\[
T_e:=\sum_{i=1}^q p^i_e,
\qquad
y_e:=x_e(v_e-T_e)^+.
\]

We next describe the subdifferential of $G_i$. Let
\[
F_i(p)
:=
\arg\max_{z\in P(M_i)}
\sum_{e\in U}z_ep_e^2.
\]
For fixed $z$, the gradient of
\[
\frac{1}{2}\sum_e z_ep_e^2
\]
is $z\odot p$. Hence the standard subdifferential formula for a pointwise
maximum gives
\[
\partial G_i(p)
=
\{z\odot p:z\in F_i(p)\}.
\]

The derivative of
\[
\frac{1}{2}
\sum_e x_e(v_e-T_e)_+^2
\]
with respect to $p^i_e$ is $-y_e$. Therefore, the first-order optimality
conditions imply that there exists
\[
z^i\in F_i(p^i)
\]
such that
\[
y=z^i\odot p^i.
\]
Indeed, if $p^i_e>0$, optimality gives
\[
z^i_ep^i_e=y_e.
\]
If $p^i_e=0$, the boundary optimality condition gives
\[
-y_e\geq0,
\]
and hence $y_e=0$, so the same equality holds.

Thus, for every $i$,
\[
z^i\in P(M_i),
\qquad
z^i\in
\arg\max_{z\in P(M_i)}
\sum_e z_e(p^i_e)^2,
\qquad
y_e=z^i_ep^i_e.
\]

We now relate $p^i$ to the canonical density decomposition of $y$.
Fix $i$ and omit the superscript. Let the distinct positive values of $p$
be
\[
\tau_1>\tau_2>\cdots>\tau_m>0,
\]
and define
\[
A_j:=\{e:p_e\geq\tau_j\},
\qquad
A_0:=\emptyset,
\]
and
\[
V_j:=A_j\setminus A_{j-1}.
\]

Since $z$ maximizes
\[
\sum_e z_ep_e^2
\]
over $P(M)$, the matroid greedy theorem implies
\[
z(A_j)=r_M(A_j)
\]
for every $j$. Consider the residual matroid after contracting $A_{j-1}$. For every set
$S$ contained in the residual ground set,
\[
z(S)
\leq
r_{M/A_{j-1}}(S),
\]
because
\[
z(A_{j-1})=r_M(A_{j-1})
\]
and $z\in P(M)$. Moreover, every residual element satisfies
\[
p_e\leq\tau_j.
\]
Therefore,
\[
y(S)
=
\sum_{e\in S}z_ep_e
\leq
\tau_j z(S)
\leq
\tau_j r_{M/A_{j-1}}(S).
\]
On the other hand,
\[
\begin{aligned}
y(V_j)
&=
\tau_j z(V_j)\\
&=
\tau_j
\bigl(
r_M(A_j)-r_M(A_{j-1})
\bigr)\\
&=
\tau_j r_{M/A_{j-1}}(V_j).
\end{aligned}
\]
Thus $\tau_j$ is the maximum density in this residual matroid.

The inclusion-wise maximal maximum-density set may contain additional
elements from later price levels. However, every such additional element must
have zero $y$-mass. Indeed, equality in
\[
y(S)
\leq
\tau_j z(S)
\leq
\tau_j r_{M/A_{j-1}}(S)
\]
forces every element of positive $z$-mass in a maximum-density extension to
have price exactly $\tau_j$. Hence an element coming from a strictly lower
price level can be added only with zero $z$-mass, and therefore with zero
$y$-mass. Moreover, such additional elements have zero rank increment, so
they merely complete $V_j$ to its closure.

It follows inductively that the canonical density decomposition of $y$ is
obtained from the price-level decomposition above only by moving zero-$y$
elements to earlier blocks. Consequently,
\[
D_M(y)_e\geq p_e
\]
for every $e$, and
\[
D_M(y)_e=p_e
\qquad
\text{whenever }y_e>0.
\]

Applying this conclusion to every matroid $M_i$, let
\[
\widehat T_e
:=
\sum_{i=1}^q D_{M_i}(y)_e.
\]

If $y_e>0$, then
\[
D_{M_i}(y)_e=p^i_e
\]
for every $i$, and hence
\[
\widehat T_e=T_e.
\]
Therefore,
\[
y_e=x_e(v_e-\widehat T_e)^+.
\]

Now suppose $y_e=0$. If $x_e=0$, the desired identity is immediate.
Otherwise,
\[
0
=
x_e(v_e-T_e)^+
\]
implies
\[
T_e\geq v_e.
\]
Since
\[
D_{M_i}(y)_e\geq p^i_e
\]
for every $i$,
\[
\widehat T_e
\geq
T_e
\geq
v_e.
\]
Thus again
\[
x_e(v_e-\widehat T_e)^+=0=y_e.
\]

We have therefore obtained
\[
y_e
=
x_e
\left(
v_e-\sum_{i=1}^qD_{M_i}(y)_e
\right)^+
\]
for every $e\in U$.

It remains to verify polynomial-time computability. Given
$(p^1,\ldots,p^q)$, each value $G_i(p^i)$ can be computed by finding a
maximum-weight independent set of $M_i$ with weights
\[
(p^i_e)^2.
\]
This can be done in polynomial time by the matroid greedy algorithm using the
independence oracle. If $B_i$ is such a maximum-weight independent set, then
\[
\mathbf 1_{B_i}\odot p^i
\]
is a subgradient of $G_i$ at $p^i$. The value and a subgradient of the second
term of $\Psi$ are explicit. Hence $\Psi$ admits a polynomial-time evaluation
and subgradient oracle on the compact set $K$.

Standard polynomial-time convex optimization therefore finds a minimizer of
$\Psi$, and the construction above then produces the required surplus vector
$y$. Finally, once $y$ is known, each density decomposition
$D_{M_i}(y)$ can be computed in polynomial time by the density-decomposition
procedure proved earlier.
\end{proof}

\end{document}